\documentclass[11pt]{article}
\usepackage[utf8]{inputenc}
\usepackage[english]{babel}
\usepackage{amsmath}
\usepackage{amscd,amsfonts,amstext,
	amsthm,amssymb,color,graphicx,commath,
	latexsym,mathrsfs,tocloft,tikz,tkz-fct,listofsymbols,dsfont,subcaption,comment,bm, pdfcomment}
\usepackage{booktabs}
\usepackage[round]{natbib}
\usepackage{titlesec}
\usepackage[letterpaper,left=2.5cm,top=1.54cm,bottom=2.54cm,right=2.5cm]{geometry}

\usepackage[left, pagewise]{lineno}

\usepackage{epstopdf}
\usepackage{color}
\usepackage{multirow}
\usepackage{enumerate}
\usepackage{authblk}
\usepackage{hyperref}
\usepackage{pdfpages}
\usepackage{lineno}
\usepackage{mathtools}

\titleformat*{\section}{\large\bfseries}

\newcommand{\1}{\mathds{1}}

\newcommand{\E}{\mathbb{E}}
\newcommand{\Prob}{\mathbb{P}}

\newcommand{\R}{\mathbb{R}}
\newcommand{\V}{\mathbb{V}}
\newcommand{\Cov}{\mathbf{C}ov}

\newcommand{\rmnum}[1]{\romannumeral #1}
\newcommand{\Rmnum}[1]{\uppercase\expandafter{\romannumeral #1\relax}}
\DeclareMathOperator*{\argmin}{arg\,min}
\DeclareMathOperator*{\arginf}{arg\,inf}

\DeclareMathOperator*{\esssup}{ess\,sup}
\DeclareMathOperator*{\essinf}{ess\,inf}

\newtheorem{corollary}{Corollary}[section]

\newtheorem{example}{Example}[section]
\newtheorem{lemma}{Lemma}[section]
\newtheorem{proposition}{Proposition}[section]
\newtheorem{remark}{Remark}[section]
\newtheorem{theorem}{Theorem}[section]

\numberwithin{equation}{section}

\newtheorem{assumption}{Assumption}

\usepackage{hyperref}

\definecolor{darkblue}{rgb}{0.1,0.1,0.7}
\definecolor{darkred}{rgb}{0.9,0.1,0.1}
\definecolor{brown}{rgb}{0.55,0.27,0.07}

\hypersetup{colorlinks, allcolors= darkblue}

\def\brown{\color{brown}}

\title{\vspace{+0.5cm}\Large\textbf{{Extremal Mean-Variance Functionals over Wasserstein Balls: Applications to Risk Sharing}}}

\author[1]{Wenjun Jiang}
\author[2]{Yiying Zhang}
\author[3]{Zhenfeng Zou}
\affil[1]{\emph{\small Department of Mathematics and Statistics, University of Calgary, Calgary, AB, T2N 1N4, Canada. E-mail: \url{wenjun.jiang@ucalgary.ca}}}
\affil[2]{\emph{\small Department of Mathematics, Southern University of Science and Technology, Shenzhen 518055, P.R. China. E-mail: \url{zhangyy3@sustech.edu.cn}}}
\affil[3]{\emph{\small School of Public Policy and Management, University of Science and Technology of China, Hefei 230026, P.R. China. E-mail: \url{zfzou@ustc.edu.cn}}}
\date{\vspace{-1cm}}

\begin{document}
	\maketitle
	
\begin{abstract}
We characterize the worst- and best-case values of a mean-variance functional over a 2-Wasserstein ball. Using quantile representations and the geometry of attainable means and standard deviations, we reduce both infinite-dimensional problems to scalar equations and construct the extremal laws as location-scale transformations of the reference distribution. Their values depend on the reference law only through its first two moments. We then derive dual representations and formulate proportional risk sharing under heterogeneous beliefs as a finite-dimensional optimization problem. Under homogeneous beliefs, we show that the classical proportional allocation remains optimal for every ambiguity radius and $\alpha$-maxmin weight. Finally, we study a coupled distortion-variance functional and characterize its worst-case quantile through a convex-envelope construction, allowing the extremal law to change in shape as well as location and scale.
\end{abstract}

\textbf{Keywords}: mean-variance; extremal distributions; Wasserstein distance; risk sharing; distortion riskmetric.

\textbf{JEL codes}: C71, G22.

\textbf{MSC codes}: 60E15, 90C15, 91G10, 91G70.

\section{Introduction}\label{intro}
The mean-variance criterion, introduced by \citet{Markowitz1952}, provides a simple and tractable framework for balancing the expected level of an outcome against its variability. Because it depends only on the first two moments, it is also closely connected to expected utility under normally distributed returns or quadratic preferences \citep{Levy1979}. This tractability, however, comes with a well-known sensitivity to model inputs: the mean and variance must be specified or estimated, and small errors in these quantities can lead to substantial changes in the resulting decisions. As \citet{Michaud1989} observed, mean-variance optimization may therefore act as an ``estimation-error maximizer''. This concern has motivated a broad literature on robust mean-variance optimization \citep{lobo2000worst, pflug2007ambiguity}. In this paper, we take a distributional perspective and study the evaluation of a fixed exposure under model uncertainty. Taking a reference distribution as a benchmark, we ask how large or small its mean-variance value can become when the underlying distribution varies within a 2-Wasserstein neighborhood. We formulate this question as a pair of extremal distributional problems over a 2-Wasserstein ball and subsequently apply the resulting characterizations to risk sharing.

Distributional uncertainty is commonly modeled either through moment information \citep[see, e.g.,][]{chen2011tight, Cai2023Distributionally, zuo2025worst} or through neighborhoods around a reference distribution. Among the latter approaches, the 2-Wasserstein distance provides a natural way to control distributional deviations while allowing both the mean and dispersion to vary from their benchmark values. \citet{mohajerin2018data} established a general duality framework for Wasserstein distributionally robust optimization and derived finite-dimensional convex reformulations for a broad class of problems. Subsequent studies have applied Wasserstein ambiguity to, among others, robust optimized certainty equivalents \citep{bartl2020computational} and worst-case moment estimation under partial distributional information \citep{tang2023worst}. More closely related to our setting, \citet{blanchet2022distributionally} study mean-variance portfolio selection by minimizing worst-case variance subject to a robust expected-return constraint, and derive regularized reformulations together with data-driven procedures for selecting the ambiguity radius and return target. Our analysis takes a different perspective: we place the mean and variance directly in a single objective and evaluate them jointly under the same candidate distribution. We then characterize both the upper and lower extrema of this mean-variance functional over a 2-Wasserstein ball, together with the corresponding extremal distributions.

The mean-variance functional leads to a nonstandard Wasserstein optimization problem because variance depends nonlinearly on the underlying distribution. Moreover, the Wasserstein constraint couples the feasible perturbations of the mean and standard deviation, so separate sharp bounds for these moments do not determine the extrema of their weighted combination. We therefore reduce the extremal problem to an optimization over the jointly attainable mean-standard deviation pairs induced by the Wasserstein ball. This geometric reduction shows that the worst- and best-case laws are location-scale transformations of the reference distribution. Their objective values depend on the reference law only through its first two moments, whereas the extremal laws inherit its distributional shape. We also obtain scalar dual representations that make the resulting extremal evaluations directly applicable to risk-sharing problems. The same geometric characterization yields comparative statics with respect to variance aversion: increasing variance aversion raises the variance and lowers the mean of the worst-case law, with the reverse pattern for the best-case law.

Our extremal distributional analysis also provides a natural tool for studying risk sharing under distributional ambiguity. In mean-variance risk sharing, agents allocate an aggregate loss while balancing the expected losses and variances of their individual positions. The literature has established several structural results for such problems. \citet{barrieu2005inf} derived proportional allocation rules for dilated risk measures, while \citet{acciaio2007optimal} studied optimal risk sharing for non-monotone monetary functionals, including the mean-variance criterion. More recently, \citet{Bernard2025ambiguity} showed that, under a planner's worst-case criterion, distributional ambiguity preserves the optimal proportional allocation in the mean--variance setting, and \citet{Blier-Wong_Lauzier_2026} investigated comonotonic improvements under feasibility constraints. These studies primarily characterize how aggregate risk should be allocated. A complementary issue concerns the valuation of a given allocation under model uncertainty: even when the same allocation remains optimal across candidate models, its associated cost can vary substantially with the underlying distribution. Our extremal results address this issue by identifying the distributions within a Wasserstein ambiguity set that generate the largest and smallest costs for a given allocation.

Both extrema become relevant when the planner adopts an $\alpha$-maxmin criterion, which combines worst- and best-case evaluations to represent different attitudes toward ambiguity. Best-case evaluations also arise naturally in distributionally favorable optimization \citep{Jiang2024Distributionally}. Related $\alpha$-maxmin formulations have been studied in mean--variance portfolio selection \citep{yu2020portfolio} and in dynamically consistent decision models \citep{Beissner2020}. We use our extremal characterizations to analyze two risk-sharing settings that differ in both the admissible allocations and the way distributional evaluations are aggregated. Under heterogeneous beliefs, the planner restricts attention to proportional allocations and minimizes the sum of the agents' worst-case evaluations, allowing each agent to use her own reference distribution and Wasserstein radius. The dual representations reduce this problem to a finite-dimensional optimization over the sharing proportions and dual variables, with each reference distribution entering only through its mean and variance. Under homogeneous beliefs, the planner instead considers all feasible allocations under an $\alpha$-maxmin criterion. We show that the classical proportional allocation is optimal under every candidate distribution and therefore remains optimal for the $\alpha$-maxmin problem. Hence, neither the Wasserstein radius nor the ambiguity parameter affects the optimal sharing rule, although both influence the associated extremal distributions and aggregate costs.

The mean-variance criterion captures distributional uncertainty only through the first two moments. To move beyond this moment-based evaluation, we further consider the broader class of distortion riskmetrics introduced by \citet{wang2020distortion}, which contains distortion risk measures as special cases. Extremal problems for distortion-based functionals have received increasing attention under various forms of partial distributional information. \citet{shao2023distortion, shao2023Extreme} derived sharp bounds for distortion risk measures under moment constraints, with and without distributional symmetry. Incorporating Wasserstein ambiguity, \citet{bernard2024robust} used isotonic projection to characterize worst- and best-case distortion risk measures when the first two moments are known, both with and without an additional 2-Wasserstein constraint. For the broader class of distortion riskmetrics, \citet{Pesenti2025} established conditions linking the robustification of a non-convex distortion riskmetric to that of an associated convex counterpart. More recently, \citet{liu2025robust} extended the Wasserstein-based extremal analysis to distortion riskmetrics and incorporated additional structural information such as unimodality and symmetry.

Against this background, we study the extremal behavior of a coupled distortion-variance functional, in which a distortion riskmetric and a variance penalty are evaluated under the same candidate distribution. This coupling fundamentally changes the structure of the extremal problem. Unlike the mean, which depends only on the first moment, a distortion riskmetric is sensitive to the entire distribution and can distinguish between laws with identical means and variances. Consequently, the location--scale structure obtained in the mean-variance case generally breaks down, and changes in distributional shape must also be taken into account. Under suitable regularity conditions, we characterize the worst-case quantile through a convex-envelope construction. The function being convexified jointly incorporates the distortion function, the variance-aversion parameter, and the reference quantile, thereby capturing their interaction in determining the extremal law. The resulting worst-case distribution may therefore differ from the reference law not only in location and scale but also in shape. We illustrate this mechanism with Range Value-at-Risk (RVaR), showing how the Wasserstein radius and variance aversion jointly reshape the worst-case quantile.

The rest of the paper is organized as follows. Section~\ref{model} introduces the necessary preliminaries and formulates the extremal problems. Section~\ref{sec:robust} characterizes the extremal distributions and their corresponding values. Section~\ref{sec:alpha-maxmin-riskshare} applies these results to two risk-sharing problems. Section~\ref{sec:extension} develops the extension to distortion riskmetrics. Section~\ref{sec:conclusion} concludes. Appendix~\ref{sec:A} collects auxiliary results used in the proofs, while Appendix~\ref{sec:high_dim} extends the one-dimensional analysis of Section~\ref{sec:robust} to the multivariate setting.


\section{Problem setup}\label{model}


Let $(\Omega,\mathscr{F},\Prob)$ be an atomless probability space and $\mathcal{L}^2$ the class of real-valued random variables on this space with finite second moments. Write $\mathcal{F}^2$ for their cumulative distribution functions (CDFs) under $\Prob$. For a random variable $X\in\mathcal{L}^2$ with CDF $F \in \mathcal{F}^2$ (written as $X \sim F$), its left- and right-continuous quantile functions are defined, respectively, as
$$F^{-1}(t)=\inf\left\{x\in\R: \Prob(X \leq x)\ge t\right\}, ~ t \in (0, 1],$$
and 
$$F^{-1+}(t)=\inf\left\{x\in\R: \Prob(X \leq x) > t\right\}, ~ t \in [0, 1),$$
with the conventions $F^{-1}(0) := F^{-1+}(0) = \essinf(X)$ and $F^{-1+}(1) := F^{-1}(1) = \esssup(X)$.
We write $\mu_F=\E_F[X]$ and $\sigma_F^2=\V_F[X]$ for the expectation and variance of $X$ under $F$. Positive values denote losses and negative values profits.
For two random variables $X$ and $Y$, we write $X \leq_{\rm cx} Y$ (respectively, $X {\leq_{\rm icx}} Y$) and say that $X$ is smaller than $Y$ in the convex (resp. {increasing convex}) order if $\E[\phi(X)] \leq \E[\phi(Y)]$ for every convex (resp. {increasing convex}) function $\phi: \R \to \R$ for which the expectations exist, with expectations taken under the respective distributions.
For an event $A \in \mathscr{F}$, we denote by $\1_A$ its indicator function, i.e., $\1_A(\omega) = 1$ if $\omega \in A$ and 0 otherwise.
Let $\Delta_n$ denote the standard simplex $\Delta_n := \big\{(w_1, \ldots, w_n) \in \R^n: w_i \geq 0,\ \sum_{i=1}^n w_i=1 \big\}$.

\subsection{Wasserstein ambiguity sets}

We describe distributional uncertainty by a neighborhood of a reference distribution.
The Wasserstein distance gives this neighborhood a probabilistic and geometric interpretation \citep{villani2009optimal} and allows the distributions to have different supports \citep{pesenti2020portfolio}. For $p \geq 1$, let $K_1, K_2$ be two one-dimensional CDFs on $\R$ with $\int_\R |x|^p d K_i(x) < \infty$ for $i=1,2$. 
The $p$-Wasserstein distance between them is defined as
\begin{equation}\label{Wass}
    W_p(K_1,K_2) = \left(\inf_{Y_1\sim K_1,Y_2\sim K_2}\ \E[|Y_1-Y_2|^p]\right)^{\frac{1}{p}},
\end{equation}
where the infimum is taken over all joint distributions of $(Y_1, Y_2)$ with marginal distributions $K_1$ and $K_2$, respectively.
It is well known that the infimum in \eqref{Wass} is always attainable \citep{villani2009optimal}.
Equivalently, by the well-known quantile representation of the Wasserstein distance, we have
\begin{equation}\label{Wass:quantile}
    W_p(K_1,K_2)=\left(\int_0^1\left|K_1^{-1}(t)-K_2^{-1}(t)\right|^pdt\right)^{\frac{1}{p}},
\end{equation}
a property that will be instrumental in our subsequent analysis.
When $p=2$, the Wasserstein distance captures changes in both location and dispersion and is intimately connected to optimal transport with quadratic cost. 

We assume that the true distribution of $X$ lies within a 2-Wasserstein ball of radius $\sqrt{\epsilon}$ centered at a known reference distribution $G\in\mathcal F^2$. Throughout, we assume $\epsilon>0$ and positive variance for the underlying one-dimensional reference distributions, i.e., $\sigma_G>0$. The corresponding ambiguity set is defined as
\begin{equation}
    \mathcal{W}^2(G;\epsilon)=\left\{F\ \text{is a CDF}:\ W_2(F,G)\le \sqrt{\epsilon}\right\}.
\end{equation}
\noindent To ease the presentation, we let $\mathcal{M}^2(G;\epsilon)=\mathcal{F}^2\cap\mathcal{W}^2(G;\epsilon)$. Larger values of $\epsilon$ allow greater departures from the reference distribution. Under this ambiguity set, a decision maker (DM) evaluates the random loss $X$ using the mean-variance preference discussed in the next subsection, accounting for all distributions within the Wasserstein ball.

\subsection{Extremal mean-variance preferences}

The DM evaluates the loss $X$ by its expected value plus a variance penalty:
\begin{equation}\label{MV}
    V^F(X) := \E_F[X]+\gamma\V_F[X],
\end{equation}
where $\gamma>0$ measures the DM's variance aversion. This criterion captures the trade-off between the average level of loss and its variability: a higher $\gamma$ reflects greater penalization of uncertainty surrounding the loss. Despite its simplicity, the mean-variance objective remains one of the most widely used objective functions in portfolio selection, insurance pricing, and risk management. Its dependence on only the first two moments makes it analytically tractable. Moreover, the mean-variance objective admits an expected-utility interpretation either when outcomes are normally distributed or when preferences are represented by quadratic utility.

In the presence of distributional ambiguity, the DM does not know which distribution $F \in \mathcal{F}^2$ governs the loss $X$.
A natural robust approach is to evaluate the loss under the worst-case distribution, leading to the problem
\begin{equation}\label{Prob:sup}
    \sup_{F\in\mathcal{M}^2(G;\epsilon)}\ V^F(X),
\end{equation}
and, symmetrically, the best-case scenario
\begin{equation}\label{Prob:inf}
    \inf_{F\in\mathcal{M}^2(G;\epsilon)}\ V^F(X),
\end{equation}
where $V^F(X)$ is given by \eqref{MV}.

The worst-case problem \eqref{Prob:sup} reflects an ambiguity-averse stance: the DM seeks to guard against the most adverse distribution within the Wasserstein ball. The best-case problem \eqref{Prob:inf}, by contrast, represents the optimistic perspective. In the $\alpha$-maxmin criterion of Section~\ref{sec:alpha-maxmin-riskshare}, the DM's ambiguity attitude determines the weights in a convex combination of these worst- and best-case evaluations.

In Section~\ref{sec:robust}, we will study these two problems and give  explicit representations of the extremal quantile functions and the corresponding mean-variance values. As we shall see, the quantile representation of the Wasserstein distance established in \eqref{Wass:quantile} plays a pivotal role in solving these problems.

\section{Extremal mean-variance optimization under Wasserstein ambiguity}\label{sec:robust}

This section solves the extremal optimization problems \eqref{Prob:sup} and \eqref{Prob:inf}. 
We begin by showing that the Wasserstein constraint is always binding at optimality, and then characterize the extremal distributions and the corresponding optimal values for both the worst-case and best-case problems. 
As a natural extension, Appendix~\ref{sec:high_dim} studies the high-dimensional versions of \eqref{Prob:sup} and \eqref{Prob:inf}, which can be reduced to their one-dimensional counterparts by projecting the high-dimensional Wasserstein balls onto one-dimensional Wasserstein balls.

To facilitate the subsequent derivations, recall the following moment identities in terms of $F^{-1}$:
\begin{align*}
    \E_F[X]=\int_0^1F^{-1}(t)dt\quad\text{and}\quad \V_F[X]=\int_0^1\left(F^{-1}(t)\right)^2dt-\left(\int_0^1F^{-1}(t)dt\right)^2.
\end{align*}
These identities will be used repeatedly throughout the analysis. The following lemma establishes that the Wasserstein constraint $W_2(F,G) \le \sqrt{\epsilon}$ is binding for both the worst-case and best-case problems.

\begin{lemma}\label{lem:binding}
    Any optimizer $F^*$ of the worst-case (resp. best-case) problem \eqref{Prob:sup} (resp. \eqref{Prob:inf}), if it exists, must satisfy $W_2(F^*,G) = \sqrt{\epsilon}$.
\end{lemma}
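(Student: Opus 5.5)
We argue by contradiction. Suppose $F^*$ is an optimizer of \eqref{Prob:sup} with $W_2(F^*,G)<\sqrt{\epsilon}$, and write $\delta:=\sqrt{\epsilon}-W_2(F^*,G)>0$. We will build a perturbation $F_\eta$ of $F^*$ that stays in the ball and strictly increases the objective. Everything is done in quantile terms: set $q^*=F^{*-1}$, and let $F_\eta$ be the law with quantile function $q^*+\eta h$ for a suitable nondecreasing $h\in L^2(0,1)$ and a small $\eta\in\R$. Since $q^*+\eta h$ is nondecreasing whenever $\eta\ge 0$ and $h$ is nondecreasing, it is a valid quantile function of some $F_\eta\in\mathcal{F}^2$. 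By \eqref{Wass:quantile} and the triangle inequality in $L^2(0,1)$,
\[
W_2(F_\eta,G)\le W_2(F^*,G)+|\eta|\,\|h\|_{L^2},
\]
so $F_\eta\in\mathcal{M}^2(G;\epsilon)$ as soon as $|\eta|\le \delta/\|h\|_{L^2}$.

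The simplest choice is $h\equiv 1$, which is a pure location shift. Using the moment identities, this gives $V^{F_\eta}(X)=V^{F^*}(X)+\eta$. Taking $\eta=\delta>0$ strictly increases the objective, which contradicts optimality for \eqref{Prob:sup}. For the best-case problem \eqref{Prob:inf} we take $\eta=-\delta$; the function $q^*-\delta$ is still nondecreasing, so the same bound keeps $F_\eta$ in the ball, and the objective strictly decreases to $V^{F^*}(X)-\delta$. This contradicts optimality for \eqref{Prob:inf}.

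The only points needing care are that the perturbed function really is a quantile function and that the ball constraint is measured in $L^2$ of quantiles. Both are immediate for a constant shift, because shifting preserves monotonicity and left-continuity and \eqref{Wass:quantile} turns the Wasserstein distance into an $L^2$ norm of quantile differences. I therefore expect no real obstacle. The key observation is that the mean enters $V^F$ linearly with coefficient $1$, while the variance is translation invariant, so a translation always gives a strict first-order change in a prescribed direction. This holds regardless of $\gamma>0$ and regardless of whether $\sigma_{F^*}$ vanishes. Hence any optimizer of either problem must have $W_2(F^*,G)=\sqrt{\epsilon}$.
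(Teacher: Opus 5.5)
Your proof is correct and follows essentially the same route as the paper: a pure location shift of the quantile function leaves the variance unchanged and moves the mean by the shift, so any slack in the Wasserstein constraint yields a feasible law with strictly larger (resp.\ smaller) objective. The only cosmetic difference is that the paper picks the shift $c$ so the constraint becomes exactly binding, whereas you take the shift $\delta=\sqrt{\epsilon}-W_2(F^*,G)$ and check feasibility via the triangle inequality in $L^2(0,1)$, which avoids having to justify the existence of such a $c$.
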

\begin{proof}
    For problem \eqref{Prob:sup}, suppose that $F$ is feasible and its Wasserstein constraint is slack, i.e.,
    $$
    \int_0^1\left|F^{-1}(t)-G^{-1}(t)\right|^2dt<\epsilon.
    $$
    Let $\Tilde{F}$ be a CDF with ${\Tilde{F}}^{-1}(t)=F^{-1}(t)+c$, where $c>0$ is such that
    $$
    \int_0^1\left|F^{-1}(t)+c-G^{-1}(t)\right|^2dt=\epsilon.
    $$
    A direct calculation gives
    \begin{align*}
        &\E_{\Tilde{F}}[X]+\gamma\V_{\Tilde{F}}[X]\\
        &\ ~~ = \int_0^1\left(F^{-1}(t)+c\right)dt+\gamma\left(\int_0^1(F^{-1}(t)+c)^2dt-\left(\int_0^1(F^{-1}(t)+c)dt\right)^2\right) \\
        &\ ~~ = \int_0^1\left(F^{-1}(t)+c\right)dt+\gamma\left(\int_0^1\left(F^{-1}(t)\right)^2dt-\left(\int_0^1F^{-1}(t)dt\right)^2\right) \\
        &\ ~~ > \E_F[X]+\gamma\V_F[X].
    \end{align*}
    The best-case statement follows analogously by shifting in the opposite direction.
\end{proof}

Lemma \ref{lem:binding} allows us to characterize the worst-case distribution $F_{wo}$ in \eqref{Prob:sup}. The quantile representation below constructs the extremal distribution, while the scalar dual representation will be used to solve risk sharing problems.

\begin{theorem}\label{thm:wo_quantile}
Let $\theta^*$ be the unique root of 
$$\tan(\theta)-2\gamma\sigma_G-2\gamma\sqrt{\epsilon}\sin(\theta)=0,~~ \theta \in \left(0, \frac{\pi}{2}\right).$$ 
Then the optimal quantile function of the worst-case distribution $F_{wo}$ for problem \eqref{Prob:sup} is given by
\begin{equation}\label{wo_quantile_form}
    {F_{wo}}^{-1}(t)=(\mu_G+\sqrt{\epsilon}\cos(\theta^*))+(\sigma_G+\sqrt{\epsilon}\sin(\theta^*))\frac{G^{-1}(t)-\mu_G}{\sigma_G}.
\end{equation}
Moreover, the worst-case value of problem \eqref{Prob:sup} is
\begin{eqnarray}
\sup_{F \in {\cal M}^2(G; \epsilon)} V^F(X)=\mu_G + \frac{\lambda^* \gamma}{\lambda^* - \gamma} \sigma_G^2 + \frac{1}{4 \lambda^*}  + \lambda^* \epsilon,  \label{wo_values2}
\end{eqnarray}
where $\lambda^*$ is the unique solution of $\frac{\gamma^2}{(\lambda-\gamma)^2} \sigma_G^2 + \frac{1}{4\lambda^2}=\epsilon$ within $(\gamma,\infty)$.
\end{theorem}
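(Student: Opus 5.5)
The argument reduces the infinite-dimensional problem to a two-dimensional one over the attainable pairs of mean and standard deviation. Fix $F\in\mathcal{M}^2(G;\epsilon)$ and write $F^{-1}=\mu_F+\sigma_F Z_F$ and $G^{-1}=\mu_G+\sigma_G Z_G$, where $Z_F$ and $Z_G$ are standardized quantile functions on $(0,1)$ with mean $0$ and $L^2$-norm $1$. Expanding the quantile representation \eqref{Wass:quantile} gives
$$W_2^2(F,G)=(\mu_F-\mu_G)^2+\sigma_F^2+\sigma_G^2-2\sigma_F\sigma_G\rho,\qquad \rho=\int_0^1 Z_FZ_G\,dt\le 1.$$
By Cauchy--Schwarz, $\rho\le1$, with equality exactly when $Z_F=Z_G$. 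Hence every feasible $F$ satisfies
$$(\mu_F-\mu_G)^2+(\sigma_F-\sigma_G)^2\le\epsilon.$$
Conversely, every pair $(m,s)$ with $s\ge0$ lying in this closed disc is attained by the location-scale law with quantile $m+s(G^{-1}-\mu_G)/\sigma_G$. This quantile is nondecreasing, so it is a valid quantile function, and its Wasserstein distance to $G$ equals the Euclidean distance from $(m,s)$ to $(\mu_G,\sigma_G)$. Since $V^F(X)=\mu_F+\gamma\sigma_F^2$ depends on $F$ only through $(\mu_F,\sigma_F)$, problem \eqref{Prob:sup} reduces to maximizing $m+\gamma s^2$ over the disc of radius $\sqrt\epsilon$ centred at $(\mu_G,\sigma_G)$, intersected with $\{s\ge0\}$. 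The maximizer is then realized by a location-scale transformation of $G$.

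Next I solve the planar problem. The objective $m+\gamma s^2$ is strictly increasing in $m$ and, for $s\ge 0$, in $s$, so the maximum lies on the boundary circle, in agreement with Lemma~\ref{lem:binding}. It also lies in the arc where both $m\ge\mu_G$ and $s\ge\sigma_G$. Parametrize this arc by $m=\mu_G+\sqrt\epsilon\cos\theta$ and $s=\sigma_G+\sqrt\epsilon\sin\theta$ with $\theta\in[0,\pi/2]$, and set
$$h(\theta)=\sqrt\epsilon\cos\theta+\gamma(\sigma_G+\sqrt\epsilon\sin\theta)^2.$$
Then
$$h'(\theta)=\sqrt\epsilon\cos\theta\,\bigl[2\gamma(\sigma_G+\sqrt\epsilon\sin\theta)-\tan\theta\bigr].$$
Put $\phi(\theta)=\tan\theta-2\gamma\sigma_G-2\gamma\sqrt\epsilon\sin\theta$. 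We have $\phi(0)=-2\gamma\sigma_G<0$ and $\phi(\theta)\to\infty$ as $\theta\to\pi/2$, so $\phi$ has a root. For uniqueness I will use that $\phi'(\theta)=\sec^2\theta-2\gamma\sqrt\epsilon\cos\theta$ may change sign, so $\phi$ need not be monotone. Instead, note that at any root $\tan\theta=2\gamma(\sigma_G+\sqrt\epsilon\sin\theta)>2\gamma\sqrt\epsilon\sin\theta$, which gives $\sec\theta>2\gamma\sqrt\epsilon\cos\theta$ and hence $\sec^2\theta>2\gamma\sqrt\epsilon\cos\theta$. Thus $\phi'>0$ at every root, so $\phi$ crosses zero only upward, and the root $\theta^*$ is unique. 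It follows that $h$ increases on $(0,\theta^*)$ and decreases on $(\theta^*,\pi/2)$. This yields \eqref{wo_quantile_form}, because the optimal law has $Z_F=Z_G$.

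For the value formula \eqref{wo_values2} I will use Lagrangian duality, or equivalently match the closed form directly. Dualizing the constraint $(m-\mu_G)^2+(s-\sigma_G)^2\le\epsilon$ with multiplier $\lambda$ gives
$$\sup_{m,s}\Bigl\{m+\gamma s^2-\lambda\bigl[(m-\mu_G)^2+(s-\sigma_G)^2-\epsilon\bigr]\Bigr\}.$$
This is finite only when $\lambda>\gamma$, with maximizers $m=\mu_G+1/(2\lambda)$ and $s=\lambda\sigma_G/(\lambda-\gamma)$. Substituting gives the dual function
$$\mu_G+\frac{1}{4\lambda}+\frac{\lambda\gamma}{\lambda-\gamma}\sigma_G^2+\lambda\epsilon.$$
Its stationarity condition in $\lambda$ is precisely $\gamma^2\sigma_G^2/(\lambda-\gamma)^2+1/(4\lambda^2)=\epsilon$. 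The left-hand side decreases strictly from $\infty$ to $0$ on $(\gamma,\infty)$, so $\lambda^*$ exists and is unique. The main obstacle is that the primal problem maximizes a convex function, so strong duality is not automatic. I would verify it directly. At $\lambda^*$ the maximizing pair $(m,s)$ has $s>0$ and lies exactly on the circle, by the stationarity equation. Since the Lagrangian bound holds for every feasible point, weak duality gives primal value at most the dual value. Feasibility and complementary slackness of this pair then give the reverse inequality, so the pair is primal optimal and the two values agree. It must coincide with the $\theta^*$ point because the maximizer is unique. Indeed, setting $\cos\theta^*=1/(2\lambda^*\sqrt\epsilon)$ turns the defining equation for $\theta^*$ into the stationarity equation for $\lambda^*$.
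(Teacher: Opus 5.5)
Your proof is correct, but it takes a different route from the paper's at two points.

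For the extremal quantile, the paper fixes the mean, uses Lemma~\ref{lem:binding} to eliminate $\int_0^1(F^{-1})^2\,dt$, and applies Cauchy--Schwarz to $\int_0^1F^{-1}G^{-1}\,dt$. You instead identify the exact set of attainable pairs $(\mu_F,\sigma_F)$ as the half-disc, which is essentially a rigorous version of Remark~\ref{remark:intuition1}. This makes Lemma~\ref{lem:binding} a by-product. It also gives uniqueness of the optimal law, since $\rho=1$ is forced at the unique maximizing pair. Your uniqueness argument for $\theta^*$, via the sign of $\phi'$ at any root, replaces the paper's observation that $\phi$ is strictly convex; both work.

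The more substantial difference is the value formula. The paper writes the variance as $\inf_x\E_F[(X-x)^2]$, exchanges $\sup_F$ and $\inf_x$ by Sion's theorem, and applies the Wasserstein duality of \cite{bartl2020computational} to the original infinite-dimensional problem. You dualize only the reduced planar problem and certify strong duality directly: at $\lambda^*$ the Lagrangian maximizer is feasible with the constraint active, so weak duality closes the gap. This is elementary and needs neither a minimax theorem nor DRO duality. It still yields $\sup_F V^F(X)=\min_{\lambda>\gamma}\Phi(\lambda)$, the form used in Section~\ref{sec:proportional-sharing}. It also exhibits $\lambda^*$ as the multiplier of the circle constraint through $\sqrt\epsilon\cos\theta^*=1/(2\lambda^*)$ and $\sqrt\epsilon\sin\theta^*=\gamma\sigma_G/(\lambda^*-\gamma)$, a link the paper leaves implicit. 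The paper's route, in exchange, does not depend on the two-moment reduction and would carry over essentially verbatim to $\E_F[\ell(X)]+\gamma\V_F[X]$ for a general loss $\ell$.

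One small point: add the one-line reason for restricting to the arc $m\ge\mu_G$, $s\ge\sigma_G$. Reflecting a point of the circle across $m=\mu_G$ or $s=\sigma_G$ keeps it on the circle and, because $s\ge0$, strictly increases $m+\gamma s^2$. This same observation is what makes the maximizing pair unique.
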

\begin{proof}
The worst-case problem \eqref{Prob:sup} can be rewritten as 
\begin{equation}\label{Prob:main-2nd}
\sup_{\mu_F\in\R}\sup_{\substack{F\in\mathcal{M}^2(G;\epsilon) \\ \E_F[X] = \mu_F}}\ \E_F[X]+\gamma\V_F[X].
\end{equation}
We first focus on the inner problem of \eqref{Prob:main-2nd} over $F$ for a fixed $\mu_F$. 
By Lemma \ref{lem:binding}, the Wasserstein constraint is binding:
\begin{equation*}
\int_0^1\left|F^{-1}(t)-G^{-1}(t)\right|^2dt=\epsilon.
\end{equation*}
Expanding the left-hand side gives
\begin{equation*}
\int_0^1(F^{-1}(t))^2dt-2\int_0^1F^{-1}(t)G^{-1}(t)dt+\int_0^1(G^{-1}(t))^2dt=\epsilon.
\end{equation*}
Consequently, the objective becomes
\begin{align*}
\E_F[X]+\gamma\V_F[X]=&\ \mu_F+\gamma\left(\int_0^1(F^{-1}(t))^2dt-\mu_F^2\right) \\
=&\ \mu_F-\gamma\mu_F^2+\gamma\left(2\int_0^1F^{-1}(t)G^{-1}(t)dt-\int_0^1(G^{-1}(t))^2dt+\epsilon\right).
\end{align*}
Thus, for fixed $\mu_F$, the inner problem of \eqref{Prob:main-2nd} now reduces to
$$
\sup_{F}\ \int_0^1F^{-1}(t)G^{-1}(t)dt,\quad \text{s.t.}\ \int_0^1F^{-1}(t)dt=\mu_F.
$$
Applying the Cauchy-Schwarz inequality yields
\begin{align*}
\int_0^1F^{-1}(t)G^{-1}(t)dt=&\ \int_0^1(F^{-1}(t)-\mu_F)(G^{-1}(t)-\mu_G)dt+\mu_F\mu_G \\
\le&\ \sigma_F\left(\int_0^1(G^{-1}(t)-\mu_G)^2dt\right)^{\frac{1}{2}}+\mu_F\mu_G,
\end{align*}
with equality attained when
\begin{equation}\label{opti_form}
F^{-1}(t)=\mu_F+\sigma_F\frac{G^{-1}(t)-\mu_G}{\sigma_G}.
\end{equation}
With the derived optimal quantile form \eqref{opti_form}, the binding Wasserstein constraint $W_2(F,G) = \sqrt{\epsilon}$ becomes $(\mu_F-\mu_G)^2+(\sigma_F-\sigma_G)^2=\epsilon$. 
Parametrize $\mu_F=\mu_G+\sqrt{\epsilon}\cos(\theta)$ and $\sigma_F=\sigma_G+\sqrt{\epsilon}\sin(\theta)$ with $\theta \in [0, \pi/2]$ (so that both $\sin(\theta)$ and $\cos(\theta)$ are non-negative).
The objective then becomes
$$
\max_{\theta\in[0,\frac{\pi}{2}]}\ L(\theta):=\mu_G+\sqrt{\epsilon}\cos(\theta)+\gamma\left(\sigma_G+\sqrt{\epsilon}\sin(\theta)\right)^2.
$$
We have $L'(\theta)=-\sqrt{\epsilon}\sin(\theta)+2\gamma\sqrt{\epsilon}\sigma_G\cos(\theta)+2\gamma\epsilon\sin(\theta)\cos(\theta)$, with $L'(0+)=2\gamma\sqrt{\epsilon}\sigma_G>0$ and $L'(\pi/2-)=-\sqrt{\epsilon}<0$. Hence every maximizer lies in $(0,\pi/2)$.
The first-order condition $L'(\theta)=0$ is equivalent to
\begin{equation}\label{eq:maximizer_sup_prob}
\tan(\theta)=2\gamma\sigma_G+2\gamma\sqrt{\epsilon}\sin(\theta).
\end{equation}
On $(0,\pi/2)$, $\tan(\theta)$ is strictly convex and $\sin(\theta)$ is strictly concave. The left-hand side of \eqref{eq:maximizer_sup_prob} starts below the right-hand side at zero and diverges as $\theta\uparrow\pi/2$. Thus \eqref{eq:maximizer_sup_prob} has a unique root $\theta^*$, which is the maximizer of $L$.
Thus, \eqref{wo_quantile_form} holds.

It remains to establish the dual representation in \eqref{wo_values2}. 
To this end, note that
\begin{equation}\label{worst-case}
\sup_{F \in {\cal M}^2(G; \epsilon)} V^F(X) = \sup_{F \in {\cal M}^2(G; \epsilon)} \inf_{x \in \R} \E_F\left[X + \gamma (X-x)^2\right].
\end{equation}
Let $g(x, F) := \E_F\left[X + \gamma (X-x)^2\right]$.
Observe that $g(x, F)$ is linear in $F$ and convex in $x$. For every $F\in\mathcal M^2(G;\epsilon)$, $\E_F[X]$ lies in the bounded interval $[x_1,x_2]$, where $x_1=\mu_G-\sqrt\epsilon$ and $x_2=\mu_G+\sqrt\epsilon$; see Remark \ref{remark:intuition1}.
Theorem~\ref{thm:sion} therefore allows us to interchange the supremum and the infimum:
\begin{eqnarray}
\sup_{F \in {\cal M}^2(G; \epsilon)} V^F(X) &=& \sup_{F \in {\cal M}^2(G; \epsilon)} \inf_{x \in [x_1, x_2]} \E_F\left[X + \gamma (X-x)^2\right]   \nonumber  \\
&=& \inf_{x \in [x_1, x_2]} \sup_{F \in {\cal M}^2(G; \epsilon)} \E_F\left[X + \gamma (X-x)^2\right] \nonumber  \\
&=& \inf_{x \in [x_1, x_2]} \inf_{\lambda \geq 0} \E_G\left[\sup_{y \in \R} \left\{y + \gamma (y-x)^2 - \lambda (X-y)^2\right\} + \lambda \epsilon\right]  \nonumber  \\
&=& \inf_{x \in [x_1, x_2]} \inf_{\lambda > \gamma} \E_G\left[\frac{(1 - 2 \gamma x + 2 \lambda X)^2}{4 (\lambda - \gamma)} + \gamma x^2 - \lambda X^2 + \lambda \epsilon\right] \nonumber \\
&=& \inf_{\lambda > \gamma} \inf_{x \in [x_1, x_2]}  \E_G\left[\frac{(1 - 2 \gamma x + 2 \lambda X)^2}{4 (\lambda - \gamma)} + \gamma x^2 - \lambda X^2 + \lambda \epsilon\right], \label{inf_min}
\end{eqnarray}
where the third equality follows from Theorem 2.4 in \cite{bartl2020computational}. For any fixed $\lambda > \gamma$, the objective function of \eqref{inf_min} is strictly convex in $x$ and admits a unique minimizer $x^*(\lambda)=\mu_G + \frac{1}{2\lambda}$. Substituting this minimizer gives
\begin{equation*}
\Phi(\lambda) := \mu_G + \frac{\lambda \gamma}{\lambda - \gamma} \sigma_G^2 + \frac{1}{4 \lambda}  + \lambda \epsilon.
\end{equation*}
$\Phi(\lambda)$ is strictly convex on $(\gamma, \infty)$, with $\lim_{\lambda \to \gamma^+} \Phi(\lambda) = \lim_{\lambda \to \infty} \Phi(\lambda) = \infty$.
The infimum of $\Phi(\lambda)$ is therefore attained at a unique point $\lambda^* \in (\gamma, \infty)$, characterized by
\begin{equation*}
\epsilon = \frac{\gamma^2}{(\lambda^*-\gamma)^2} \sigma_G^2 + \frac{1}{4(\lambda^*)^2}.
\end{equation*}
Hence $1/(2\lambda^*)\leq\sqrt\epsilon$ and $x^*(\lambda^*)\in[x_1,x_2]$, completing the proof.
\end{proof}

The next theorem provides the analogous characterization for the best-case problem \eqref{Prob:inf}.

\begin{theorem}\label{thm:be_quantile}
Let $\theta^*$ be the unique root of 
$$\tan(\theta)-2\gamma\sigma_G-2\gamma\sqrt{\epsilon}\sin(\theta)=0, ~~ \theta \in \left(\pi,\frac{3}{2}\pi\right).$$
Then the optimal quantile function of the best-case distribution $F_{be}$ for problem \eqref{Prob:inf} is given by
\begin{equation}\label{be_quantile_form}
{F_{be}}^{-1}(t)=(\mu_G+\sqrt{\epsilon}\cos(\theta^*))+(\sigma_G+\sqrt{\epsilon}\sin(\theta^*))\frac{G^{-1}(t)-\mu_G}{\sigma_G}.
\end{equation}
Moreover, the best-case value of problem \eqref{Prob:inf} is
\begin{eqnarray}
\inf_{F \in {\cal M}^2(G; \epsilon)} V^F(X)=\mu_G+\frac{\lambda^*\gamma}{\lambda^*+\gamma}\sigma_G^2-\frac{1}{4\lambda^*}-\lambda^*\epsilon,   \label{be_values2}
\end{eqnarray}
where $\lambda^*$ is the unique positive solution of $\frac{\gamma^2\sigma_G^2}{(\lambda+\gamma)^2}+\frac{1}{4\lambda^2}=\epsilon$. 
\end{theorem}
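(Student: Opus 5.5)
We mirror the proof of Theorem~\ref{thm:wo_quantile}, with the inequalities reversed. Write the best-case problem \eqref{Prob:inf} as an outer infimum over $\mu_F$ of an inner infimum over $F\in\mathcal M^2(G;\epsilon)$ with $\E_F[X]=\mu_F$. By Lemma~\ref{lem:binding} (best-case part) the Wasserstein constraint binds. Expanding it exactly as before gives
\[
V^F(X)=\mu_F-\gamma\mu_F^2+\gamma\Bigl(2\int_0^1F^{-1}G^{-1}dt-\int_0^1(G^{-1})^2dt+\epsilon\Bigr).
\]
For fixed $\mu_F$ we must therefore \emph{minimize} $\int_0^1 F^{-1}G^{-1}dt$. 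Cauchy--Schwarz gives the lower bound $\mu_F\mu_G-\sigma_F\sigma_G$, attained by antitone coupling. Since $F^{-1}$ must be nondecreasing, however, an admissible $F^{-1}$ cannot be a negative multiple of $G^{-1}-\mu_G$.

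This is the main obstacle, and I would handle it differently from the worst case. Keep the location-scale family $F^{-1}=\mu_F+s\,(G^{-1}-\mu_G)/\sigma_G$ with $s\ge 0$. The binding constraint becomes $(\mu_F-\mu_G)^2+(s-\sigma_G)^2=\epsilon$, and $V^F=\mu_F+\gamma s^2$. To show nothing outside this family does better, I would argue directly on the pair $(\mu_F,\sigma_F)$. For any admissible $F$, Cauchy--Schwarz in the comonotone direction gives
\[
W_2^2(F,G)\ge (\mu_F-\mu_G)^2+(\sigma_F-\sigma_G)^2 .
\]
Hence every feasible $F$ has $(\mu_F,\sigma_F)$ in the disc of radius $\sqrt\epsilon$ around $(\mu_G,\sigma_G)$ with $\sigma_F\ge0$. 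Conversely, every point of that disc is attained by the location-scale law, whose distance equals exactly that Euclidean distance. So the problem reduces to minimizing $\mu+\gamma\sigma^2$ over the disc intersected with $\{\sigma\ge0\}$. (The same observation also streamlines the worst case.)

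Next, parametrize the boundary as $\mu=\mu_G+\sqrt\epsilon\cos\theta$, $\sigma=\sigma_G+\sqrt\epsilon\sin\theta$, with $\theta\in[\pi,3\pi/2]$ since we want both coordinates decreased. Let
\[
L(\theta)=\mu_G+\sqrt\epsilon\cos\theta+\gamma(\sigma_G+\sqrt\epsilon\sin\theta)^2 .
\]
The derivative $L'(\theta)$ has the same formula as before. At $\theta=\pi$ it equals $-2\gamma\sqrt\epsilon\sigma_G<0$, and at $3\pi/2$ it equals $\sqrt\epsilon>0$, so the minimizer is interior. The first-order condition is the stated equation $\tan\theta=2\gamma\sigma_G+2\gamma\sqrt\epsilon\sin\theta$. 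On $(\pi,3\pi/2)$, $\tan$ increases from $0$ to $\infty$ and is convex, while $\sin$ decreases. The right-hand side therefore decreases from $2\gamma\sigma_G>0$, so there is a unique crossing. The sign change of $L'$ at this crossing identifies it as the minimizer.

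A subtle point is the case $\sigma_G+\sqrt\epsilon\sin\theta^*<0$, possible when $\epsilon>\sigma_G^2$, where the scale would be negative. I would check this via the first-order condition: at $\theta^*$ we have $\sin\theta^*<0$ and $\tan\theta^*>0$. The first-order condition gives $2\gamma(\sigma_G+\sqrt\epsilon\sin\theta^*)=\tan\theta^*>0$, so the scale is positive automatically. This yields \eqref{be_quantile_form}. Interior minimizers of the disc problem are excluded by Lemma~\ref{lem:binding}, or directly because $\partial_\mu=1\neq0$.

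For the dual value \eqref{be_values2}, write
\[
V^F=\inf_x \E_F[X+\gamma(X-x)^2],
\]
so the best case is an $\inf_F\inf_x$ and no minimax theorem is needed; we simply swap the two infima. For fixed $x$, apply the Wasserstein duality of \cite{bartl2020computational} to $\inf_F$:
\[
\inf_F\E_F[\phi]=\sup_{\lambda\ge0}\E_G\Bigl[\inf_y\{\phi(y)+\lambda(X-y)^2\}\Bigr]-\lambda\epsilon .
\]
The inner infimum over $y$ of $y+\gamma(y-x)^2+\lambda(X-y)^2$ is explicit for $\lambda\ge0$. Swapping $\inf_x$ and $\sup_\lambda$ does need Theorem~\ref{thm:sion}: the objective is convex in $x$, concave in $\lambda$, and $x$ can be restricted to $[\mu_G-\sqrt\epsilon,\mu_G+\sqrt\epsilon]$. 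After the swap, optimizing in $x$ yields
\[
\Psi(\lambda)=\mu_G+\frac{\lambda\gamma}{\lambda+\gamma}\sigma_G^2-\frac1{4\lambda}-\lambda\epsilon .
\]
This function is strictly concave on $(0,\infty)$ and tends to $-\infty$ at both ends. Its stationarity condition is $\gamma^2\sigma_G^2/(\lambda+\gamma)^2+1/(4\lambda^2)=\epsilon$, whose left side decreases strictly from $\infty$ to $0$, giving the unique positive root $\lambda^*$ and hence \eqref{be_values2}.
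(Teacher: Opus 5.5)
Your proof is correct. The dual-value part follows the paper: you swap the two infima, apply Wasserstein duality for fixed $x$, use Theorem~\ref{thm:sion} on $[\mu_G-\sqrt\epsilon,\mu_G+\sqrt\epsilon]$, and then maximize $\Psi$. The primal part, however, takes a genuinely different route.

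The paper fixes the mean $\mu_F$ and minimizes $\int_0^1(F^{-1})^2dt$ under the Wasserstein and mean constraints through a Lagrangian. It uses Slater's condition for interior means, completes the square to obtain the minimizer $\frac{\lambda_2}{2(1+\lambda_1)}+\frac{\lambda_1}{1+\lambda_1}G^{-1}$, and argues separately at the endpoint means $|\mu_F-\mu_G|=\sqrt\epsilon$. Only then does it parametrize the circle.

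You instead combine the Cauchy--Schwarz bound $W_2^2(F,G)\ge(\mu_F-\mu_G)^2+(\sigma_F-\sigma_G)^2$ with its attainment by location--scale laws. This identifies the image of the ball in the $(\mu,\sigma)$-plane as exactly the disc intersected with $\{\sigma\ge0\}$, which is a rigorous version of the heuristic in Remark~\ref{remark:intuition1}.

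\textbf{What each route buys.} Your route is more elementary, since it needs no infinite-dimensional convex duality and no case split over $\mu_F$. It also gives existence of the optimizer for free, and uniqueness through the equality case of Cauchy--Schwarz. The paper's route yields extra structure: for each fixed mean the optimal law is a shrinkage of $G$ with scale $\frac{\lambda_1}{1+\lambda_1}\sigma_G$. This matches the dual solution $\mu^*=\mu_G-\frac{1}{2\lambda^*}$, $\sigma^*=\frac{\lambda^*}{\lambda^*+\gamma}\sigma_G$. Its Lagrangian template is also what carries over to the distortion--variance problem of Section~\ref{sec:extension}, where no location--scale reduction is available.

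\textbf{Two steps to tighten.} Both are shared with the paper.

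First, the restriction to $\theta\in[\pi,3\pi/2]$ needs a reflection argument. If $\mu>\mu_G$, replace $\mu$ by $2\mu_G-\mu$. If $s>\sigma_G$, replace $s$ by $|2\sigma_G-s|$. Both moves keep the point in the disc with $s\ge0$ and strictly decrease $\mu+\gamma s^2$.

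Second, over the compact interval the inner infimum in $x$ equals $\Psi(\lambda)$ only when $\frac{1}{2\lambda}\le\sqrt\epsilon$. The conclusion still holds. The stationarity equation forces $\frac{1}{2\lambda^*}\le\sqrt\epsilon$, and $(\mu_G-\frac{1}{2\lambda^*},\lambda^*)$ is a saddle point of the dual objective. That objective is convex in $x$ and concave in $\lambda$, and its $\lambda$-derivative at this point equals $\Psi'(\lambda^*)=0$ because the $x$-derivative vanishes there. This saddle-point observation actually lets you drop Theorem~\ref{thm:sion} and the compact restriction altogether.
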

\begin{proof}
Following the same line of reasoning as in the proof of Theorem \ref{thm:wo_quantile}, problem \eqref{Prob:inf} can be rewritten as
\begin{equation}\label{Prob:fixed_mu}
\inf_{\mu_F\in\R}\inf_{\substack{F\in\mathcal{M}^2(G;\epsilon) \\ \E_F[X] = \mu_F}}\ \E_F[X]+\gamma\V_F[X].
\end{equation}
For a fixed $\mu_F$, the inner problem of \eqref{Prob:fixed_mu} takes the form
\begin{equation}\label{Prob:inner2}
\inf_F\ \int_0^1(F^{-1}(t))^2dt,\quad\text{s.t.}\ \int_0^1(F^{-1}(t)-G^{-1}(t))^2dt\le\epsilon,\ \int_0^1F^{-1}(t)dt=\mu_F.
\end{equation}
Feasibility requires $|\mu_F-\mu_G|\leq\sqrt{\epsilon}$. At either endpoint, equality in the Cauchy--Schwarz inequality forces $F^{-1}(t)=G^{-1}(t)+\mu_F-\mu_G$ almost everywhere, so the feasible distribution is unique and already has the required location--scale form.
For $|\mu_F-\mu_G|<\sqrt{\epsilon}$, the translated quantile $F_0^{-1}(t)=G^{-1}(t)+\mu_F-\mu_G$ has mean $\mu_F$ and satisfies
\[
\int_0^1\bigl(F_0^{-1}(t)-G^{-1}(t)\bigr)^2dt
=(\mu_F-\mu_G)^2<\epsilon.
\]
Since \eqref{Prob:inner2} is convex in $F^{-1}$, Slater's condition holds for interior feasible means \citep[see, e.g., Theorem 1 of Section 8.3 of][]{luenberger1997optimization}, and the corresponding inner problem can be analyzed through its Lagrangian:
\begin{equation}\label{Prob:dual}
\inf_F\ \int_0^1(F^{-1}(t))^2dt+\lambda_1\int_0^1(F^{-1}(t)-G^{-1}(t))^2dt-\lambda_2\int_0^1F^{-1}(t)dt
\end{equation}
with $\lambda_1\ge 0$ and $\lambda_2\in\R$. 
Completing the square yields
\begin{align*}
&\int_0^1(F^{-1}(t))^2dt+\lambda_1\int_0^1(F^{-1}(t)-G^{-1}(t))^2dt-\lambda_2\int_0^1F^{-1}(t)dt \\
&\ ~~ = (1+\lambda_1)\int_0^1\left(F^{-1}(t)-\frac{2\lambda_1G^{-1}(t)+\lambda_2}{2(1+\lambda_1)}\right)^2dt-\int_0^1\frac{(2\lambda_1G^{-1}(t)+\lambda_2)^2}{4(1+\lambda_1)}dt+\lambda_1\int_0^1(G^{-1}(t))^2dt.
\end{align*}
Hence, for fixed $\lambda_1\geq0$ and $\lambda_2\in\R$, the solution to \eqref{Prob:dual} is
$${F^*}^{-1}(t)=\frac{\lambda_2}{2(1+\lambda_1)}+\frac{\lambda_1}{1+\lambda_1}G^{-1}(t).$$
Together with the endpoint cases considered above, this shows that the quantile function solving \eqref{Prob:fixed_mu} is an increasing affine transformation of $G^{-1}$:
$$F_{be}^{-1}(t)=\mu_F+\sigma_F\frac{G^{-1}(t)-\mu_G}{\sigma_G}.$$ 
By Lemma \ref{lem:binding}, the Wasserstein constraint $W_2(F,G) \le \sqrt{\epsilon}$ is binding and reduces to $(\mu_F-\mu_G)^2+(\sigma_F-\sigma_G)^2=\epsilon$. 
Similar to the proof of Theorem \ref{thm:wo_quantile}, we write $\mu_F=\mu_G+\sqrt\epsilon\cos(\theta)$ and $\sigma_F=\sigma_G+\sqrt\epsilon\sin(\theta)$ for $\theta\in[\pi,3\pi/2]$, so that both $\sin(\theta)$ and $\cos(\theta)$ are nonpositive.
Problem \eqref{Prob:fixed_mu} then becomes
\begin{equation}
\min_{\theta}\ L(\theta):=\mu_G+\sqrt{\epsilon}\cos(\theta)+\gamma\left(\sigma_G+\sqrt{\epsilon}\sin(\theta)\right)^2.
\end{equation}
Since $L'(\theta)=-\sqrt{\epsilon}\sin(\theta)+2\gamma\sqrt{\epsilon}\sigma_G\cos(\theta)+2\gamma\epsilon\sin(\theta)\cos(\theta)$, with $L'(\pi+)=-2\gamma\sqrt{\epsilon}\sigma_G<0$ and $L'(3\pi/2-)=\sqrt{\epsilon}>0$, we conclude that the minimizer of $L(\theta)$ must be within $(\pi, 3\pi/2)$. 
Note that
\begin{align*}
L'(\theta)=&\ -\sqrt{\epsilon}\sin(\theta)+2\gamma\sqrt{\epsilon}\sigma_G\cos(\theta)+2\gamma\epsilon\sin(\theta)\cos(\theta) \\
=&\ -\sqrt{\epsilon}\cos(\theta)\left(\tan(\theta)-2\gamma\sigma_G-2\gamma\sqrt{\epsilon}\sin(\theta)\right).
\end{align*}
On $(\pi,3\pi/2)$, $\tan(\theta)$ is increasing whereas $\sin(\theta)$ is decreasing. Because the bracketed expression above is negative at $\pi$ and tends to a positive value as $\theta\uparrow3\pi/2$, $L$ has a unique minimizer $\theta^*\in(\pi,3\pi/2)$, characterized by
$$
\tan(\theta)=2\gamma\sigma_G+2\gamma\sqrt{\epsilon}\sin(\theta), ~~ \theta \in \left(\pi,\frac{3}{2}\pi\right).
$$
Note that the above equation guarantees that $\sigma_G+\sqrt{\epsilon}\sin(\theta^*)>0$, which leads to 
$$F_{be}^{-1}(t)=(\mu_G+\sqrt{\epsilon}\cos(\theta^*))+(\sigma_G+\sqrt{\epsilon}\sin(\theta^*))\frac{G^{-1}(t)-\mu_G}{\sigma_G}.$$ Thus, \eqref{be_quantile_form} holds.

It remains to establish \eqref{be_values2}. Let $x_1=\mu_G-\sqrt\epsilon$ and $x_2=\mu_G+\sqrt\epsilon$. Then
\begin{align}
\inf_{F \in {\cal M}^2(G; \epsilon)} V^F(X)
&= \inf_{x \in [x_1,x_2]} \inf_{F \in {\cal M}^2(G; \epsilon)}
\E_F\left[X + \gamma (X-x)^2\right] \nonumber\\
&= \inf_{x \in [x_1,x_2]} \sup_{\lambda \geq 0}
\E_G\left[-\frac{(2\gamma x + 2\lambda X -1)^2}{4(\gamma+\lambda)}
+ \gamma x^2 + \lambda X^2 - \lambda \epsilon\right] \nonumber\\
&= \inf_{x \in [x_1,x_2]} \sup_{\lambda \geq 0}
\Biggl\{\frac{\lambda \gamma}{\lambda+\gamma} x^2
+ \frac{\gamma (1-2 \lambda \mu_G)}{\lambda + \gamma} x \nonumber\\
&\hspace{5em}+ \frac{\lambda \gamma}{\lambda+\gamma} (\mu_G^2+\sigma_G^2)
+ \frac{4 \lambda \mu_G - 1}{4(\lambda+\gamma)} - \lambda \epsilon \Biggr\} \nonumber\\
&= \sup_{\lambda \geq 0} \inf_{x \in [x_1,x_2]}
\Biggl\{\frac{\lambda \gamma}{\lambda+\gamma} x^2
+ \frac{\gamma (1-2 \lambda \mu_G)}{\lambda + \gamma} x \nonumber\\
&\hspace{5em}+ \frac{\lambda \gamma}{\lambda+\gamma} (\mu_G^2+\sigma_G^2)
+ \frac{4 \lambda \mu_G - 1}{4(\lambda+\gamma)} - \lambda \epsilon \Biggr\},
\label{sup_max}
\end{align}
where the last equality holds by Theorem~\ref{thm:sion}, as the objective function is convex in $x$ and concave in $\lambda$, and the optimization over $x$ can be restricted to a compact set without loss of optimality. The minimizer of the inner problem of \eqref{sup_max} is $x^*(\lambda)=\mu_G-\frac{1}{2\lambda}$. This turns the objective of \eqref{sup_max} into
$$
\Phi(\lambda):=\mu_G + \frac{\lambda \gamma}{\lambda+\gamma} \sigma_G^2 - \frac{1}{4\lambda} - \lambda \epsilon.
$$
Since $\lim_{\lambda\to0^+}\Phi(\lambda)=\lim_{\lambda\to\infty}\Phi(\lambda)=-\infty$ and $\Phi(\lambda)$ is concave, the supremum of $\Phi(\lambda)$ is attained at a unique point $\lambda^*\in(0,\infty)$, characterized by
$$
\epsilon=\frac{\gamma^2\sigma_G^2}{(\lambda^*+\gamma)^2}+\frac{1}{4(\lambda^*)^2}.
$$
The proof is now complete.
\end{proof}

\begin{remark}\label{remark:intuition1}
    In Theorems \ref{thm:wo_quantile} and \ref{thm:be_quantile}, both extremal quantiles are increasing affine transformations of the reference quantile. This observation can be understood through the decomposition of the Wasserstein distance established in \cite{bernard2024robust},
    \begin{align*}
        W_2(F,G)^2=&\ \int_0^1(F^{-1}(t)-G^{-1}(t))^2dt \\
        =&\ (\mu_F-\mu_G)^2+(\sigma_F-\sigma_G)^2+2\sigma_F\sigma_G\left(1-\text{corr}(F^{-1}(U),G^{-1}(U))\right),
    \end{align*}
    where $\text{corr}(\cdot,\cdot)$ denotes the Pearson correlation coefficient and $U$ is a uniform random variable on $(0,1)$.
    It follows that
    $$
    \left\{(\mu_F,\sigma_F):\ W_2(F,G)^2\le\epsilon\right\}\subseteq\{(\mu_F,\sigma_F):\ (\mu_F-\mu_G)^2+(\sigma_F-\sigma_G)^2\le\epsilon\}.
    $$
    For a fixed pair $(\mu_F,\sigma_F)$, the Wasserstein distance is minimized when $\text{corr}(F^{-1}(U),G^{-1}(U))=1$. Since the mean–variance objective depends only on $(\mu_F,\sigma_F)$, imposing perfect positive correlation maximizes the feasible region for $(\mu_F,\sigma_F)$. Consequently, it suffices to search for extremal quantile functions among those that are affine transformations of $G^{-1}$.
\end{remark}

The next proposition describes how variance aversion affects the extremal means, standard deviations, and values.

\begin{proposition}\label{pro:sensitivity}
Let $\mu_{F_{wo}}^\gamma$ and $\sigma_{F_{wo}}^\gamma$ (resp. $\mu_{F_{be}}^\gamma$ and $\sigma_{F_{be}}^\gamma$) denote the mean and standard deviation of the worst-case distribution $F_{wo}$ in Theorem \ref{thm:wo_quantile} (resp. the best-case distribution $F_{be}$ in Theorem \ref{thm:be_quantile}).
For two variance-aversion coefficients $\gamma_1 > \gamma_2 > 0$, the following comparative statics hold:
\begin{itemize}
\item[(i)] The optimal mean and standard deviation for problem \eqref{Prob:sup} satisfy $\mu_{F_{wo}}^{\gamma_1} < \mu_{F_{wo}}^{\gamma_2}$, $ \sigma_{F_{wo}}^{\gamma_1} > \sigma_{F_{wo}}^{\gamma_2}$ and
\begin{equation*}
\mu_{F_{wo}}^{\gamma_1} + \gamma_1 \left(\sigma_{F_{wo}}^{\gamma_1}\right)^2 > \mu_{F_{wo}}^{\gamma_2} + \gamma_2 \left(\sigma_{F_{wo}}^{\gamma_2}\right)^2.
\end{equation*}

\item[(ii)] The optimal mean and standard deviation for problem \eqref{Prob:inf} satisfy $\mu_{F_{be}}^{\gamma_1} > \mu_{F_{be}}^{\gamma_2}$, $ \sigma_{F_{be}}^{\gamma_1} < \sigma_{F_{be}}^{\gamma_2}$ and
\begin{equation*}
\mu_{F_{be}}^{\gamma_1} + \gamma_1 \left(\sigma_{F_{be}}^{\gamma_1}\right)^2 > \mu_{F_{be}}^{\gamma_2} + \gamma_2 \left(\sigma_{F_{be}}^{\gamma_2}\right)^2.
\end{equation*}
\end{itemize}
\end{proposition}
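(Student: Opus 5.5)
The plan is to work entirely with the scalar characterization of $\theta^*$ from Theorems~\ref{thm:wo_quantile} and~\ref{thm:be_quantile}, together with the explicit forms $\mu_F=\mu_G+\sqrt\epsilon\cos\theta^*$ and $\sigma_F=\sigma_G+\sqrt\epsilon\sin\theta^*$. Write $\theta^*(\gamma)$ for the root. The monotonicity claims about mean and standard deviation then reduce to showing how $\theta^*(\gamma)$ moves with $\gamma$. The value claims will follow from an envelope-type comparison rather than from differentiating the closed forms.

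\textbf{Worst case.} Set $h(\theta,\gamma):=\tan\theta-2\gamma\sigma_G-2\gamma\sqrt\epsilon\sin\theta$ on $(0,\pi/2)$. As shown in the proof of Theorem~\ref{thm:wo_quantile}, $h(\cdot,\gamma)$ is negative to the left of $\theta^*(\gamma)$ and positive to the right, since it starts negative at $0$ and has a unique root. For $\gamma_1>\gamma_2$ we have
\[
h(\theta^*(\gamma_2),\gamma_1)=-2(\gamma_1-\gamma_2)\bigl(\sigma_G+\sqrt\epsilon\sin\theta^*(\gamma_2)\bigr)<0 .
\]
Hence $\theta^*(\gamma_2)<\theta^*(\gamma_1)$. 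On $(0,\pi/2)$ cosine is strictly decreasing and sine is strictly increasing, so $\mu^{\gamma_1}_{F_{wo}}<\mu^{\gamma_2}_{F_{wo}}$ and $\sigma^{\gamma_1}_{F_{wo}}>\sigma^{\gamma_2}_{F_{wo}}$.

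For the value, let $L_\gamma(\theta)=\mu_G+\sqrt\epsilon\cos\theta+\gamma(\sigma_G+\sqrt\epsilon\sin\theta)^2$. Then
\[
L_{\gamma_1}(\theta^*(\gamma_1))\ge L_{\gamma_1}(\theta^*(\gamma_2))>L_{\gamma_2}(\theta^*(\gamma_2)).
\]
The strict inequality holds because $\sigma>0$ at $\theta^*(\gamma_2)$. Equivalently, $V^F$ is strictly increasing in $\gamma$ for each fixed $F$ with positive variance, and the supremum of such functions inherits this.

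\textbf{Best case.} Work on $(\pi,3\pi/2)$ with the same $h$. The proof of Theorem~\ref{thm:be_quantile} shows $h<0$ near $\pi$, $h>0$ near $3\pi/2$, and that $h$ is strictly increasing, with a unique root. The root equation gives $\sigma_F=\sigma_G+\sqrt\epsilon\sin\theta^*>0$, since $\tan\theta^*>0$ on this interval. Therefore
\[
h(\theta^*(\gamma_2),\gamma_1)=-2(\gamma_1-\gamma_2)\,\sigma^{\gamma_2}_{F_{be}}<0 ,
\]
so again $\theta^*(\gamma_1)>\theta^*(\gamma_2)$. On $(\pi,3\pi/2)$ cosine is increasing (from $-1$ to $0$) and sine is decreasing. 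This yields $\mu^{\gamma_1}_{F_{be}}>\mu^{\gamma_2}_{F_{be}}$ and $\sigma^{\gamma_1}_{F_{be}}<\sigma^{\gamma_2}_{F_{be}}$.

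For the best-case value, use the infimum analogue:
\[
L_{\gamma_1}(\theta^*(\gamma_1))>L_{\gamma_2}(\theta^*(\gamma_1))\ge L_{\gamma_2}(\theta^*(\gamma_2)).
\]
The first inequality uses $\sigma^{\gamma_1}_{F_{be}}>0$. The second holds because $\theta^*(\gamma_2)$ minimizes $L_{\gamma_2}$; it is valid since Theorem~\ref{thm:be_quantile} identifies the infimum as the minimum of $L$ over the whole arc $[\pi,3\pi/2]$.

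The main obstacle is making the sign argument for $h$ rigorous in the best-case interval. There I must confirm that the unique root separates the regions $h<0$ and $h>0$, which follows from the strict monotonicity noted in Theorem~\ref{thm:be_quantile}. I also need positivity of the extremal standard deviation, which I take from the root equation.
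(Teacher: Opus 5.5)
Your proof is correct, but you show that $\theta^*$ increases in $\gamma$ differently from the paper.

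The paper differentiates the root equation implicitly, obtaining $d\theta^*/d\gamma=(2\sigma_G+2\sqrt{\epsilon}\sin\theta^*)/(1+\tan^2\theta^*-2\gamma\sqrt{\epsilon}\cos\theta^*)$. It must then prove the denominator is positive. It does this by rewriting the root equation as $\sin\theta^*(1-2\gamma\sqrt{\epsilon}\cos\theta^*)=2\gamma\sigma_G\cos\theta^*$, which gives $1-2\gamma\sqrt{\epsilon}\cos\theta^*>0$. It also tacitly assumes that $\gamma\mapsto\theta^*(\gamma)$ is differentiable.

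You instead use a single-crossing comparison. Evaluating $h(\cdot,\gamma_1)$ at the old root gives $-2(\gamma_1-\gamma_2)\sigma^{\gamma_2}<0$. The sign pattern of $h(\cdot,\gamma_1)$, already established in the proofs of the two theorems, then places $\theta^*(\gamma_2)$ strictly to the left of $\theta^*(\gamma_1)$.

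Your route buys several things:
\begin{itemize}
\item it is more elementary and needs no denominator estimate;
\item it gives the strict comparison for any pair $\gamma_1>\gamma_2$ directly;
\item it writes out the best case explicitly, including the positivity $\sigma^{\gamma}_{F_{be}}>0$ obtained from $\tan\theta^*>0$, where the paper only says ``analogously''.
\end{itemize}
The paper's route, in exchange, gives an explicit sensitivity formula for $\theta^*$, and hence for the extremal moments.

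Your value comparisons are the same attainment argument the paper alludes to. One correction: your side remark that the supremum of strictly increasing functions ``inherits'' strict monotonicity is not true in general. It holds here only because the extremum is attained at $F_{wo}^{\gamma_2}$ (resp.\ $F_{be}^{\gamma_1}$), and your displayed inequality chains already use that attainment.
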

\begin{proof}
We only prove the case for the worst-case problem, as the proof for the best-case follows analogously.
For the worst-case problem, the optimal mean and standard deviation are parametrized by $\theta^* \in (0, \pi/2)$, which, for a given $\gamma > 0$, is the unique solution to $\tan(\theta^*) = 2 \gamma \sigma_G + 2 \gamma \sqrt{\epsilon} \sin(\theta^*)$ with $\mu_{F_{wo}}^\gamma = \mu_G + \sqrt{\epsilon} \cos(\theta^*)$ and $\sigma_{F_{wo}}^\gamma = \sigma_G + \sqrt{\epsilon} \sin(\theta^*)$.
Moreover, we have $\sin(\theta^*)(1 - 2 \gamma \sqrt{\epsilon} \cos(\theta^*)) = 2 \gamma \sigma_G \cos(\theta^*)$.
This implies $1 - 2 \gamma \sqrt{\epsilon} \cos(\theta^*) > 0$.
Note that
\begin{equation*}
\frac{d \theta^*}{d \gamma} = \frac{2 \sigma_G + 2 \sqrt{\epsilon} \sin(\theta^*)}{1 + \tan^2(\theta^*) - 2 \gamma \sqrt{\epsilon} \cos(\theta^*)} > 0.
\end{equation*}
Hence, $\theta^*$ is strictly increasing in $\gamma$.
Consequently,
\begin{equation*}
\frac{d \mu_{F_{wo}}^\gamma}{d \gamma} = - \sqrt{\epsilon} \sin(\theta^*) \frac{d \theta^*}{d \gamma} < 0, ~~ {\rm and} ~~ \frac{d \sigma_{F_{wo}}^\gamma}{d \gamma} = \sqrt{\epsilon} \cos(\theta^*) \frac{d \theta^*}{d \gamma} > 0.
\end{equation*}
Thus, for $\gamma_1 > \gamma_2$, we have $\mu_{F_{wo}}^{\gamma_1} < \mu_{F_{wo}}^{\gamma_2}$ and $ \sigma_{F_{wo}}^{\gamma_1} > \sigma_{F_{wo}}^{\gamma_2}$.
The inequality for the optimal value follows directly from the definition of the worst-case value:
\begin{equation*}
\mu_{F_{wo}}^{\gamma} + \gamma \left(\sigma_{F_{wo}}^{\gamma}\right)^2 = \sup_{F \in {\cal M}^2(G; \epsilon)} V^F(X).
\end{equation*}
This completes the proof.
\end{proof}

Proposition \ref{pro:sensitivity} highlights the trade-off between the mean and variance along the boundary of the Wasserstein ball. Along this boundary, a higher variance is necessarily associated with a lower mean, and vice versa. As the variance-aversion coefficient increases, variance receives greater weight in the objective, leading to a higher worst-case variance and a lower corresponding mean. In the best-case setting, these comparative statics are reversed.


From a geometric perspective, we seek the extremal values of $z:=\mu_F+\gamma\sigma_F^2$ subject to $(\mu_F-\mu_G)^2+(\sigma_F-\sigma_G)^2=\epsilon$. The solution to the problem \eqref{Prob:sup} is thereby characterized by the point of tangency between the curve $\sigma_F=\left(\frac{z-\mu_F}{\gamma}\right)^{\frac{1}{2}}$ and the circle $(\mu_F-\mu_G)^2+(\sigma_F-\sigma_G)^2=\epsilon$. When $\gamma$ becomes larger, the curve $\sigma_F=\left(\frac{z-\mu_F}{\gamma}\right)^{\frac{1}{2}}$ becomes flatter. Consequently, the point of tangency shifts in a manner that directly gives rise to the result in Proposition \ref{pro:sensitivity}(i) (see Fig. \ref{fig:impact_gamma}). An analogous geometric interpretation also applies to Proposition \ref{pro:sensitivity}(ii). 

\begin{figure}[!ht]
    \centering
    \includegraphics[width=0.5\linewidth]{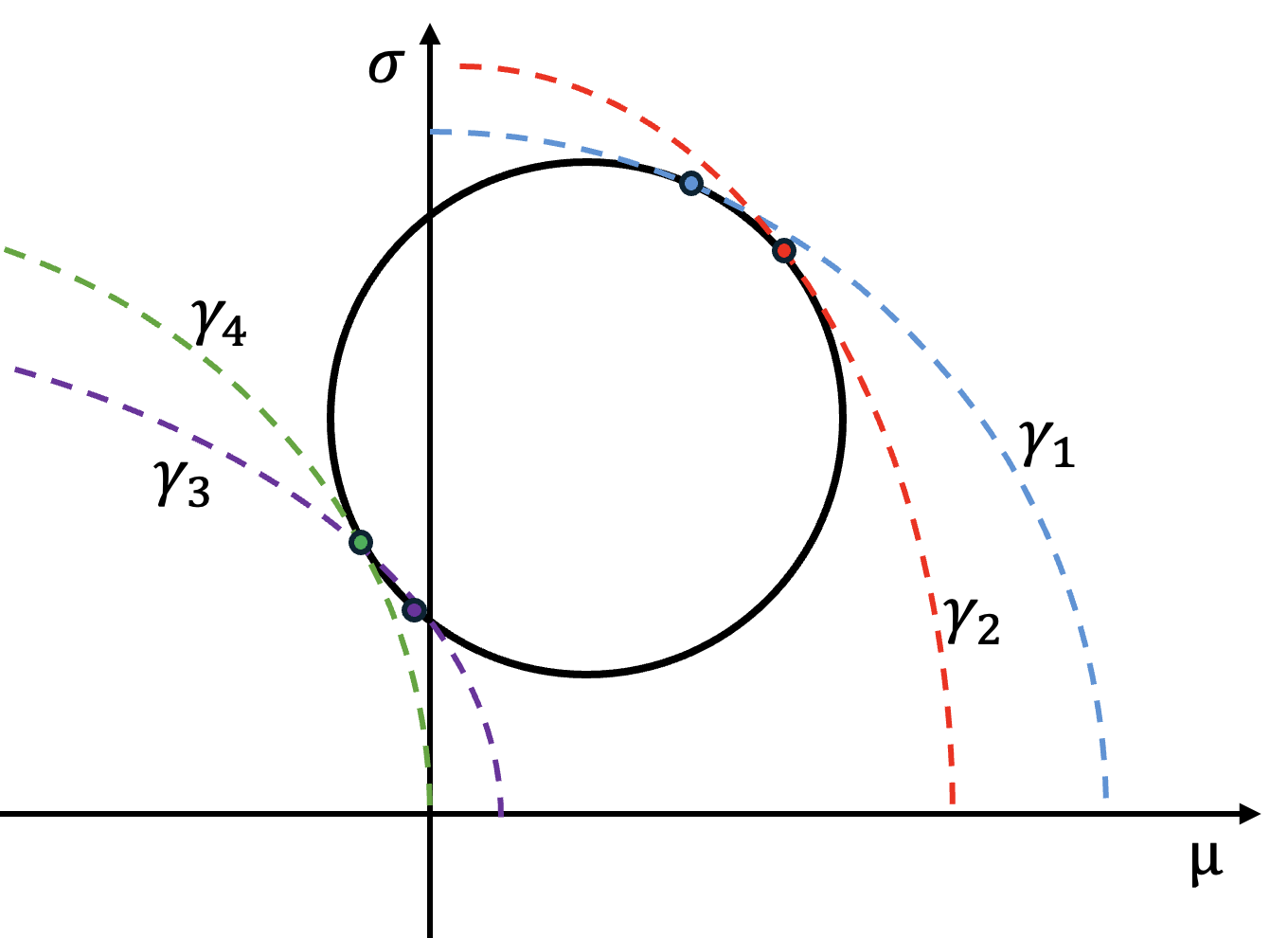}
    \caption{The impact of $\gamma$ on the worst- and best-case means and standard deviations.}
    \label{fig:impact_gamma}
\end{figure}

\section{Applications in risk sharing}\label{sec:alpha-maxmin-riskshare}

We now turn to risk sharing under distributional ambiguity. We consider $n$ agents, each of whom is assigned a share of the aggregate loss $X$. We assume that $X$ has a finite second moment under any probability measure considered in this section. Each agent $i\in\{1,\ldots,n\}$ evaluates her allocated loss $X_i$ using the mean-variance functional
\begin{equation}\label{MV_share}
V_i^{F_i}(\cdot;\gamma_i) = \mathbb E[\cdot] + \gamma_i \mathbb V[\cdot],
\end{equation}
where, in our context of risk sharing, $F_i$ denotes the CDF of $X$ under agent $i$'s belief, and $\gamma_i$ denotes the variance-aversion parameters of agent $i$. We suppress the subscripts of $\E[\cdot]$ and $\V[\cdot]$ to avoid the possible notational confusion when applying these operators to the allocated loss $X_i$.


We apply the results established in the preceding sections to two risk sharing problems. Subsection~\ref{sec:proportional-sharing} considers proportional risk sharing under heterogeneous beliefs. Subsection~\ref{sec:optimal-sharing} examines an $\alpha$-maxmin risk sharing problem under homogeneous beliefs.

\subsection{Proportional risk sharing with heterogeneous beliefs}\label{sec:proportional-sharing}

In this section, each agent is assumed to bear a proportional share of $X$. We write $\bm w=(w_1,\ldots,$ $w_n)^\top \in \Delta_n$ for the vector of sharing proportions. This constraint ensures that the aggregate loss is fully allocated among the agents, with each agent receiving a nonnegative fraction of the total exposure. Agent $i$ evaluates her share $w_iX$ using \eqref{MV_share} and describes uncertainty about $X$ by the Wasserstein ball $\mathcal M^2(G_i;\epsilon_i)$ centered at her reference distribution $G_i$. Let $\mu_i$ and $\sigma_i^2$ denote the mean and variance of $X$ under $G_i$. Besides that, we assume that all agents are extremely ambiguity averse and thus would only consider their choices under the worst-case scenarios.

A social planner chooses the proportional shares $\bm w^\top \in \Delta_n$ to minimize the sum of the individual objectives:
\begin{equation}\label{prob:hetero-rs}
\inf_{\bm w \in \Delta_n} \left\{\sum_{i=1}^n \sup_{F_i \in {\cal M}^2(G_i; \epsilon_i)} V^{F_i}_i(w_i X ; \gamma_i) \right\}.
\end{equation}
Applying the dual formulations \eqref{wo_values2} and \eqref{be_values2} from Theorems~ \ref{thm:wo_quantile} and~\ref{thm:be_quantile}, we have the following Proposition. To ease the presentation, we denote $\bm \lambda = (\lambda _1, \ldots, \lambda _n)$ and $\bm \gamma = (\gamma_1, \ldots, \gamma_n)$.

\begin{proposition}
For the problem \eqref{prob:hetero-rs}, we have
\begin{equation}\label{worst_prop}
\inf_{\bm w \in \Delta_n} \!\left\{\!\sum_{i=1}^n \sup_{F_i \in {\cal M}^2(G_i; \epsilon_i)} V^{F_i}_i(w_i X;\gamma_i) \right\} \!=\! \inf_{\bm \lambda > \bm \gamma} \inf_{\bm w \in \Delta_n} \sum_{i=1}^n \left\{\left(\frac{\lambda_i \gamma_i}{\lambda_i - \gamma_i} \sigma_i^2 + \lambda_i \epsilon_i\right) w_i^2 + \mu_i w_i + \frac{1}{4 \lambda_i} \right\}.
\end{equation}
Furthermore, if all agents agree on the mean of $X$ under their reference distributions, i.e., $\mu_1=\cdots=\mu_n=\mu$, the optimal proportions that solve the problem \eqref{prob:hetero-rs} are
\begin{equation}
    w_i^*=\frac{\frac{1}{\frac{\lambda_i^*\gamma_i}{\lambda_i^*-\gamma_i}\sigma_i^2+\lambda_i^*\epsilon_i}}{\sum_{k=1}^n\frac{1}{\frac{\lambda_k^*\gamma_k}{\lambda_k^*-\gamma_k}\sigma_k^2+\lambda_k^*\epsilon_k}},\quad i=1,\dots,n,
\end{equation}
where $\bm\lambda^*$ is given by
\begin{equation}
    \bm\lambda^*=\arginf_{\bm\lambda>\bm\gamma}\ \left\{\left(\sum_{k=1}^n\frac{1}{\frac{\lambda_k\gamma_k}{\lambda_k-\gamma_k}\sigma_k^2+\lambda_k\epsilon_k}\right)^{-1}+\sum_{k=1}^n\frac{1}{4\lambda_k}\right\}.
\end{equation}
\end{proposition}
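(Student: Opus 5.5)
The plan is to reduce each agent's worst-case term to the scalar dual of Theorem~\ref{thm:wo_quantile} by rescaling, and then use separability to exchange the infima. Throughout, $a_i(\lambda_i):=\frac{\lambda_i\gamma_i}{\lambda_i-\gamma_i}\sigma_i^2+\lambda_i\epsilon_i$.

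\textbf{Step 1: one agent, fixed $w_i>0$.} Consider $Y=w_iX$. If $X\sim F$ then $Y$ has quantile $w_iF^{-1}$. By \eqref{Wass:quantile}, $W_2$ scales linearly, so
\[
F\in\mathcal M^2(G_i;\epsilon_i)\iff \text{law}(Y)\in\mathcal M^2(\tilde G_i;w_i^2\epsilon_i),
\]
where $\tilde G_i$ is the law of $w_iX$ under $G_i$. The reference law $\tilde G_i$ has mean $w_i\mu_i$ and standard deviation $w_i\sigma_i>0$.

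Applying Theorem~\ref{thm:wo_quantile} to $Y$, together with the fact that $\Phi$ in its proof is strictly convex with infimum attained at $\lambda^*$, gives
\[
\sup_{F_i\in\mathcal M^2(G_i;\epsilon_i)}V_i^{F_i}(w_iX;\gamma_i)=\inf_{\lambda_i>\gamma_i}\Big\{w_i\mu_i+a_i(\lambda_i)\,w_i^2+\frac{1}{4\lambda_i}\Big\}.
\]
This follows because the variance enters as $w_i^2\sigma_i^2$ and the radius as $w_i^2\epsilon_i$. Notably, $\lambda_i$ needs no rescaling.

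\textbf{Step 1': the case $w_i=0$.} The left-hand side is then $0$. The right-hand side is $\inf_{\lambda_i>\gamma_i}1/(4\lambda_i)=0$, approached as $\lambda_i\to\infty$. So the identity holds for all $\bm w\in\Delta_n$.

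\textbf{Step 2: the representation \eqref{worst_prop}.} Sum the identities from Step 1 over $i$. The $i$-th term involves only $\lambda_i$, so
\[
\sum_i\inf_{\lambda_i}=\inf_{\bm\lambda>\bm\gamma}\sum_i.
\]
Two infima always commute, so $\inf_{\bm w}\inf_{\bm\lambda}=\inf_{\bm\lambda}\inf_{\bm w}$, which is \eqref{worst_prop}.

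\textbf{Step 3: the case of a common mean.} If $\mu_i\equiv\mu$, then $\sum_i\mu w_i=\mu$ on $\Delta_n$. For fixed $\bm\lambda$ the inner problem is therefore $\min_{\bm w\in\Delta_n}\sum_i a_iw_i^2$ with $a_i>0$.

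The Lagrange condition $2a_iw_i=\nu$ gives $w_i\propto 1/a_i$. This point is strictly positive, so the nonnegativity constraints are inactive. Strict convexity makes it the unique minimizer, with value $\big(\sum_k 1/a_k\big)^{-1}$.

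The outer problem is then exactly the stated $\arginf$ defining $\bm\lambda^*$. Substituting $\bm\lambda^*$ into $w_i\propto1/a_i$ gives $\bm w^*$. Since the objective is jointly minimized at $(\bm\lambda^*,\bm w^*)$, $\bm w^*$ attains the infimum in \eqref{prob:hetero-rs}.

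\textbf{Main obstacle: attainment of $\bm\lambda^*$.} Each $a_k\to\infty$ as $\lambda_k\downarrow\gamma_k$ and as $\lambda_k\to\infty$. The harmonic-mean term, however, stays bounded, while $1/(4\lambda_k)\to0$. So coercivity is not automatic, and a priori the infimum could be approached with $\lambda_k\to\infty$, which corresponds to $w_k\to0$.

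I would first try to compare the two competing terms at infinity. Writing $H_{-k}=\big(\sum_{j\ne k}1/a_j\big)^{-1}$, the harmonic mean behaves like $H_{-k}-H_{-k}^2/(\epsilon_k\lambda_k)+O(\lambda_k^{-2})$. Adding $1/(4\lambda_k)$, the objective is eventually increasing in $\lambda_k$, so that $\lambda_k$ large is not optimal, only when $1/4<H_{-k}^2/\epsilon_k$; otherwise the competition is only decided at order $\lambda_k^{-2}$. Where this fails, I would interpret $\arginf$ in the extended sense, allowing $\lambda_k=\infty$, i.e.\ $w_k^*=0$. The formula for $w_i^*$ remains valid as a limit in that case.
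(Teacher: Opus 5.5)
Your proof is correct and follows the paper's route: the per-agent dual from Theorem~\ref{thm:wo_quantile}, separability to exchange the infima, then the simplex quadratic program. The only difference is that you rescale the loss and the ball ($w_iX$ with radius $w_i^2\epsilon_i$). The paper instead writes $V_i^{F_i}(w_iX;\gamma_i)=w_iV_i^{F_i}(X;w_i\gamma_i)$ and substitutes $\tilde\lambda_i=w_i\lambda_i$, so your version avoids the change of dual variable.

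Your attainment concern is well founded, and the paper's proof does not address it. Define $\phi_k(w):=\sup_{F\in\mathcal M^2(G_k;\epsilon_k)}V_k^{F}(wX;\gamma_k)-w\mu_k$. It is convex with $\phi_k'(0+)=\sqrt{\epsilon_k}>0$. Hence, under a common mean, any agent with $\sqrt{\epsilon_k}$ at least the common marginal cost $\nu$ of the active agents receives $w_k^*=0$. For example, $n=2$ with $\sqrt{\epsilon_2}\ge\sqrt{\epsilon_1}+2\gamma_1(\sigma_1+\sqrt{\epsilon_1})^2$ forces $\bm w^*=(1,0)$. In that case no $\bm\lambda>\bm\gamma$ attains the infimum, so your extended reading with $\lambda_k^*=\infty$ is genuinely required. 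Your test $\sqrt{\epsilon_k}<2H_{-k}$ is the dual form of the KKT condition $\phi_k'(0+)<\nu$, since $\nu=2H$ on the active set.
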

\begin{proof}
For each agent $i$, note that
$$
V_i^{F_i}(w_i X;\gamma_i)=w_i\E[X_i]+w_i^2\gamma_i\V[X_i]=w_i V_i^{F_i}(X;w_i\gamma_i).
$$
Applying the dual formulation embedded in the proof of Theorem~\ref{thm:wo_quantile} then yields
\begin{align*}
&\inf_{\bm w \in \Delta_n} \left\{\sum_{i=1}^n \sup_{F_i \in {\cal M}^2(G_i; \epsilon_i)} V^{F_i}_i(w_i X;\gamma_i) \right\} \\
=&{\inf_{\bm w \in \Delta_n} \left\{\sum_{i=1}^n \inf_{\tilde{\lambda}_i > w_i\gamma_i} \left\{w_i \mu_i + \frac{\tilde{\lambda_i}w_i^2 \gamma_i}{\tilde{\lambda}_i - w_i\gamma_i}\sigma_i^2 + \frac{w_i}{4 \tilde{\lambda_i}}  + \tilde{\lambda}_i w_i \epsilon_i \right\}\right\}}  \\
=& \inf_{\bm w \in \Delta_n} \left\{\sum_{i=1}^n \inf_{\lambda_i > \gamma_i} \left\{w_i \mu_i + \frac{\lambda_i \gamma_i}{\lambda_i - \gamma_i} w_i^2 \sigma_i^2 + \frac{1}{4 \lambda_i}  + \lambda_i w_i^2 \epsilon_i \right\}\right\}  \\
=& \inf_{\bm w \in \Delta_n} \inf_{\bm \lambda > \bm \gamma} \sum_{i=1}^n \left\{\left(\frac{\lambda_i \gamma_i}{\lambda_i - \gamma_i} \sigma_i^2 + \lambda_i \epsilon_i\right) w_i^2 + \mu_i w_i + \frac{1}{4 \lambda_i} \right\}  \\
=& \inf_{\bm \lambda > \bm \gamma} \inf_{\bm w \in \Delta_n} \sum_{i=1}^n \left\{\left(\frac{\lambda_i \gamma_i}{\lambda_i - \gamma_i} \sigma_i^2 + \lambda_i \epsilon_i\right) w_i^2 + \mu_i w_i + \frac{1}{4 \lambda_i} \right\},
\end{align*}
where the second equality follows from the change of variables $\tilde{\lambda}_i = w_i \lambda_i$ when $w_i>0$, while for $w_i=0$ the corresponding infimum is obtained in the limit $\lambda_i \to \infty$. When all agents agree on the mean of $X$ under their respective reference distributions, we further have
\begin{align}
    &\inf_{\bm \lambda > \bm \gamma} \inf_{\bm w \in \Delta_n} \sum_{i=1}^n \left\{\left(\frac{\lambda_i \gamma_i}{\lambda_i - \gamma_i} \sigma_i^2 + \lambda_i \epsilon_i\right) w_i^2 + \mu_i w_i + \frac{1}{4 \lambda_i} \right\} \nonumber \\
    =\ &\inf_{\bm\lambda>\bm\gamma}\left\{\inf_{\bm w\in\Delta_n}\sum_{i=1}^n\left\{\left(\frac{\lambda_i\gamma_i}{\lambda_i-\gamma_i}\sigma_i^2+\lambda_i\epsilon_i\right)w_i^2+\frac{1}{4\lambda_i}\right\}+\mu\right\}, \label{reduced1}
\end{align}
where the inner problem is a constrained quadratic program whose solution is
$$
w_i^*=\frac{\frac{1}{\frac{\lambda_i\gamma_i}{\lambda_i-\gamma_i}\sigma_i^2+\lambda_i\epsilon_i}}{\sum_{k=1}^n\frac{1}{\frac{\lambda_k\gamma_k}{\lambda_k-\gamma_k}\sigma_k^2+\lambda_k\epsilon_k}},\quad i=1,\dots,n,
$$
for fixed $\bm\lambda>\bm\gamma$. Substituting these shares into \eqref{reduced1} gives
$$
\inf_{\bm\lambda>\bm\gamma} \left\{\left(\sum_{k=1}^n\frac{1}{\frac{\lambda_k\gamma_k}{\lambda_k-\gamma_k}\sigma_k^2+\lambda_k\epsilon_k}\right)^{-1}+\sum_{k=1}^n\frac{1}{4\lambda_k}\right\}.
$$
This completes the proof.
\end{proof}

\begin{remark}\label{remark_worst}
For the general case, where there is no agreement on the mean of $X$ under different beliefs, the optimal shares can still be characterized via the Karush-Kuhn-Tucker (KKT) conditions \citep[see, e.g., Corollary 1 of Section 8.3 and Theorem 2 of Section 8.4 of][]{luenberger1997optimization}. 
Define
\begin{equation*}
f_i(w_i) := \left(\frac{\lambda_i \gamma_i}{\lambda_i - \gamma_i} \sigma_i^2 + \lambda_i \epsilon_i\right) w_i^2 + \mu_i w_i + \frac{1}{4 \lambda_i}.
\end{equation*}
For fixed $\bm \lambda > \bm \gamma$, the optimal $\bm w^*$ minimizes $\sum_{i=1}^n f_i(w_i)$ subject to $\sum_{i=1}^n w_i=1$ and $w_i \geq 0$. 
The KKT conditions yield
\begin{equation*}
2 \left(\frac{\lambda_i \gamma_i}{\lambda_i - \gamma_i} \sigma_i^2 + \lambda_i \epsilon_i\right) w_i^* + \mu_i - \nu - \eta_i = 0,~~ i = 1, \ldots, n,
\end{equation*}
where $\nu \in \R$ is the Lagrange multiplier for the equality constraint $\sum_{i=1}^n w_i=1$, and $\eta_i \geq 0$ is the multiplier for $w_i \geq 0$, with complementarity $\eta_i w_i^*=0$.
For the active set $\mathcal A := \{i \in \{1, \ldots, n\}: w_i^* > 0\}$, we have
\begin{equation*}
w_i^* = \frac{\nu - \mu_i}{2 \left(\frac{\lambda_i \gamma_i}{\lambda_i - \gamma_i} \sigma_i^2 + \lambda_i \epsilon_i\right)},~~ i \in {\cal A},
\end{equation*}
and $\sum_{i \in {\cal A}} w_i^*=1$, $\nu - \mu_i > 0$ for $i \in {\cal A}$.
\end{remark}

The following example considers two agents who agree on the reference mean of $X$.

\begin{example}
Consider $n=2$ and $\mu_1 = \mu_2$. For fixed dual variables $\lambda_1$ and $\lambda_2$, the sharing proportions that minimize the inner problem in \eqref{worst_prop} are
\begin{equation*}
w_1 = \frac{\frac{\lambda_2 \gamma_2}{\lambda_2 - \gamma_2} \sigma_2^2 + \lambda_2 \epsilon_2}{\frac{\lambda_1 \gamma_1}{\lambda_1 - \gamma_1} \sigma_1^2 + \lambda_1 \epsilon_1 + \frac{\lambda_2 \gamma_2}{\lambda_2 - \gamma_2} \sigma_2^2 + \lambda_2 \epsilon_2}, ~~ w_2 = 1 - w_1.
\end{equation*}
The optimal $\lambda_1$ and $\lambda_2$ solve the outer minimization in \eqref{worst_prop}.
For given dual variables, agent 1 bears the larger share precisely when
\begin{equation*}
\frac{\lambda_1\gamma_1}{\lambda_1-\gamma_1}\sigma_1^2+\lambda_1\epsilon_1
<
\frac{\lambda_2\gamma_2}{\lambda_2-\gamma_2}\sigma_2^2+\lambda_2\epsilon_2.
\end{equation*}
In particular, if $\sigma_1^2=\sigma_2^2$, $\epsilon_1=\epsilon_2$, and the two dual variables coincide, then $\gamma_1<\gamma_2$ implies $w_1>w_2$.
Thus, in this special case, the more variance-averse agent bears the smaller share.
\end{example}

\subsection{{$\alpha$-maxmin mean-variance risk sharing}}\label{sec:optimal-sharing}

A social planner allocates the aggregate loss $X$ among $n$ agents without restricting the form of the allocation.
Under homogeneous beliefs, all agents use the same probability measure $\Prob$ on $(\Omega,\mathcal F)$. Let $F$ denote the CDF of the aggregate loss $X$ under $\Prob$, with finite mean $\mu_F$ and variance $\sigma_F^2$, both assumed to be known. For an allocation $\{X_i\}_{i=1}^n \in (\mathcal L^2)^n$ satisfying $\sum_{i=1}^n X_i=X$, agent $i$ evaluates her allocated loss through the mean-variance functional $V_i^F(X_i)$ defined in \eqref{MV_share}.
The planner assigns equal welfare weights to the agents.

Uncertainty about the distribution of $X$ is described by the Wasserstein ball $\mathcal M^2(G;\epsilon)$.
The parameter $\alpha\in[0,1]$ is the weight placed on the worst-case evaluation in the $\alpha$-maxmin criterion below.
The planner is ambiguity-seeking for $\alpha\in[0,1/2)$, ambiguity-neutral at $\alpha=1/2$, and ambiguity-averse for $\alpha\in(1/2,1]$; ambiguity aversion increases with $\alpha$. The planner solves
\begin{equation}\label{Prob:risk-sharing}
    \inf_{\{X_i\}_{i=1}^n\in\mathcal{S}(X)}\left\{\alpha\sup_{F_w \in \mathcal M^2(G; \epsilon)} \sum_{i=1}^n V_i^{F_w}(X_i;\gamma_i)+(1-\alpha)\inf_{F_b \in \mathcal M^2(G; \epsilon)} \sum_{i=1}^n V_i^{F_b}(X_i;\gamma_i)\right\},
\end{equation}
where $\mathcal{S}(X)=\left\{\{X_i\}_{i=1}^n \in (\mathcal L^2)^n:\ \sum_{i=1}^nX_i=X\right\}$.
The distributions $F_w$ and $F_b$ belong to the same Wasserstein ball but need not coincide: $F_w$ gives the worst-case evaluation of aggregate loss, and $F_b$ gives the best-case evaluation.

To solve the problem \eqref{Prob:risk-sharing}, we first characterize the optimal allocation for a fixed probability measure.

\begin{lemma}\label{lem:weighted-variance}
Let $X$ have finite mean $\mu_F$ and variance $\sigma_F^2$.
Consider the problem
\begin{equation}\label{Prob:auxiliary}
    \inf_{\{X_i\}_{i=1}^n\in\mathcal{S}(X)}\ \sum_{i=1}^n V_i^F(X_i;\gamma_i).
\end{equation}
The set of optimal allocations $(X_1^*,\dots,X_n^*)$ is given by
\begin{equation*}
X_i^* = \frac{1/\gamma_i}{\sum_{j = 1}^n 1/\gamma_j} X + \beta_i,\qquad i = 1, \ldots, n,
\end{equation*}
where $\beta_1, \ldots, \beta_n \in \R$ satisfy $\sum_{i=1}^n \beta_i = 0$.
\end{lemma}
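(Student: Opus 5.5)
Since $\sum_{i=1}^n X_i = X$, the mean part of the objective equals $\sum_{i=1}^n \E[X_i] = \mu_F$ for every allocation in $\mathcal S(X)$. Problem \eqref{Prob:auxiliary} is therefore equivalent to minimizing the weighted variance sum $\sum_{i=1}^n \gamma_i \V[X_i]$ over $\mathcal S(X)$. We treat this as a quadratic problem in the Hilbert space $\mathcal L^2$ of centered variables. Write $\tilde X_i = X_i - \E[X_i]$ and $\tilde X = X-\mu_F$, so the constraint becomes $\sum_i \tilde X_i = \tilde X$ and the objective is $\sum_i \gamma_i \E[\tilde X_i^2]$.

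Set $\Gamma := \left(\sum_{j=1}^n 1/\gamma_j\right)^{-1}$ and $a_i := \Gamma/\gamma_i$, so that $\sum_i a_i=1$. We establish a lower bound by Cauchy--Schwarz, pathwise and then in expectation:
\[
\tilde X^2=\Bigl(\sum_i \gamma_i^{-1/2}\,\gamma_i^{1/2}\tilde X_i\Bigr)^2\le \Bigl(\sum_i \gamma_i^{-1}\Bigr)\sum_i \gamma_i\tilde X_i^2 .
\]
Taking expectations gives $\sum_i\gamma_i\V[X_i]\ge \Gamma\sigma_F^2$. The bound is attained by $\tilde X_i=a_i\tilde X$, because then $\sum_i\gamma_i a_i^2\sigma_F^2=\Gamma\sigma_F^2$. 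Consequently every allocation $X_i=a_iX+\beta_i$ with $\sum_i\beta_i=0$ is feasible and optimal. Its mean contribution is automatically $\mu_F$.

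For the converse, take any optimal allocation. Equality must then hold in the expected Cauchy--Schwarz bound. Since the pathwise inequality holds everywhere, equality holds almost surely, so $\gamma_i^{1/2}\tilde X_i = c\,\gamma_i^{-1/2}$ for a random scalar $c$, that is, $\tilde X_i = c/\gamma_i$ a.s. Summing over $i$ gives $c=\Gamma\tilde X$, hence $\tilde X_i=a_i\tilde X$. This means $X_i=a_iX+\beta_i$ with $\beta_i=\E[X_i]-a_i\mu_F$, and $\sum_i\beta_i=\mu_F-\mu_F=0$.

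The main point requiring care is this uniqueness step. The equality case of Cauchy--Schwarz is pointwise, but on the event $\{\tilde X=0\}$ the proportionality still yields $\tilde X_i=c/\gamma_i$ with $c=0$, so no degeneracy arises. As an alternative that avoids the pathwise argument, one can use strict convexity of the objective on the affine subspace of centered allocations. Any two minimizers $\tilde X^{(1)},\tilde X^{(2)}$ would have a strictly better midpoint unless they coincide in $\mathcal L^2$, which also yields uniqueness of the centered parts.
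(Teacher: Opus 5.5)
Your proof is correct and is essentially the paper's argument. The paper uses the identity $\sum_i\gamma_i\E[(\bar X_i-c_i\bar X)^2]=\sum_i\gamma_i\V[X_i]-\V[X]/T$, and your pathwise Cauchy--Schwarz deficit $T\sum_i\gamma_i\tilde X_i^2-\tilde X^2$ equals $T\sum_i\gamma_i(\tilde X_i-a_i\tilde X)^2$, so both the lower bound and the equality case (which forces $\tilde X_i=a_i\tilde X$ a.s.) are the same as in the paper, with your pathwise formulation simply making the almost-sure equality step more explicit.
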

\begin{proof}
Set
\begin{equation*}
T:=\sum_{j=1}^n\frac{1}{\gamma_j}\qquad\text{and}\qquad
c_i:=\frac{1/\gamma_i}{T},
\qquad i=1,\ldots,n.
\end{equation*}
For an arbitrary feasible allocation, let $\bar X:=X-\E[X]$ and $\bar X_i:=X_i-\E[X_i]$.
Because $\sum_{i=1}^n\bar X_i=\bar X$, $\gamma_i c_i=1/T$, and $\sum_{i=1}^n\gamma_i c_i^2=1/T$, we have
\begin{align*}
\sum_{i=1}^n\gamma_i\E\left[(\bar X_i-c_i\bar X)^2\right]
&=\sum_{i=1}^n\gamma_i\V[X_i]
-\frac{2}{T}\sum_{i=1}^n\Cov(X_i,X)
+\frac{1}{T}\V[X]\\
&=\sum_{i=1}^n\gamma_i\V[X_i]-\frac{1}{T}\V[X]\geq0.
\end{align*}
Moreover, feasibility implies $\sum_{i=1}^n\E[X_i]=\E[X]$. Hence
\begin{equation*}
\sum_{i=1}^n V_i^F(X_i;\gamma_i)
\geq \E[X]+\frac{1}{T}\V[X].
\end{equation*}
Equality holds if and only if $\bar X_i=c_i\bar X$ almost surely for every $i$, or equivalently,
$X_i=c_iX+\beta_i$ for constants $\beta_i$ satisfying $\sum_{i=1}^n\beta_i=0$.
These constants represent deterministic transfers: they leave the variances unchanged, and their zero sum preserves the aggregate expected loss. This proves the claim.
\end{proof}

\begin{remark}
The proportional sharing rule in Lemma \ref{lem:weighted-variance} is the mean-variance instance of the general proportional allocation rule for dilated risk measures, as noted in Remark 4.6 of \cite{acciaio2007optimal} and proved in Theorem 3.9 of \cite{barrieu2005inf} within a convex risk measure framework. 
Lemma \ref{lem:weighted-variance} provides a self-contained proof of this rule specifically for the mean-variance criterion. The weighted-variance decomposition proves the result directly, without convex duality or the general theory of risk measures. We use the same allocation in the ambiguity model below.
\end{remark}

The following theorem fully solves the problem \eqref{Prob:risk-sharing}.

\begin{theorem}\label{thm:op_risk_sharing}
    An optimal allocated risks $X_1^*,\dots,X_n^*$ for the problem \eqref{Prob:risk-sharing} are the same as those shown in Lemma \ref{lem:weighted-variance}. The quantile functions of $X$ corresponding to the worst-case and best-case probability measures are shown in Theorems \ref{thm:wo_quantile} and \ref{thm:be_quantile} with $\gamma=\frac{1}{\sum_{i=1}^n 1/\gamma_i}$. 
\end{theorem}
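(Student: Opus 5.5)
The plan is to show that the proportional allocation of Lemma~\ref{lem:weighted-variance} is simultaneously optimal under every candidate law, and then to use this to decouple the $\alpha$-maxmin problem. First I will fix an arbitrary feasible allocation $\{X_i\}_{i=1}^n\in\mathcal S(X)$ and an arbitrary $F\in\mathcal M^2(G;\epsilon)$. I will then apply the weighted-variance decomposition from the proof of Lemma~\ref{lem:weighted-variance} under $F$. That argument never uses the specific law of $X$; it only uses $\sum_i X_i=X$ and finite second moments. It therefore gives, for every $F$,
\[
\sum_{i=1}^n V_i^{F}(X_i;\gamma_i)\ \ge\ \E_F[X]+\gamma\V_F[X]=V^F(X),\qquad \gamma:=\Big(\sum_{j=1}^n 1/\gamma_j\Big)^{-1},
\]
with equality when $X_i=c_iX+\beta_i$, where $c_i=(1/\gamma_i)/\sum_j(1/\gamma_j)$ and $\sum_i\beta_i=0$. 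The key point is that the equality allocation $X_i^*$ is defined pathwise as a function of $X$ and does not depend on $F$. Hence $\sum_i V_i^F(X_i^*;\gamma_i)=V^F(X)$ for every $F$ in the ball at once.

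Next I will bound the objective of \eqref{Prob:risk-sharing}. For any feasible allocation, taking the supremum and the infimum over $F$ of the pointwise inequality gives
\[
\sup_{F_w}\sum_i V_i^{F_w}(X_i;\gamma_i)\ \ge\ \sup_{F}V^F(X)
\qquad\text{and}\qquad
\inf_{F_b}\sum_i V_i^{F_b}(X_i;\gamma_i)\ \ge\ \inf_F V^F(X).
\]
Since $\alpha\in[0,1]$, the weighted combination satisfies
\[
\alpha\sup_{F}V^F(X)+(1-\alpha)\inf_F V^F(X)
\]
as a lower bound for the objective of any allocation. The allocation $X_i^*$ attains equality in both terms, so it achieves this lower bound and is optimal for every $\alpha$ and every $\epsilon$.

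Finally, once $X_i^*$ is fixed, the worst- and best-case problems reduce to \eqref{Prob:sup} and \eqref{Prob:inf} with variance-aversion parameter $\gamma=1/\sum_i(1/\gamma_i)$. Theorems~\ref{thm:wo_quantile} and~\ref{thm:be_quantile} then give the extremal quantile functions of $X$.

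The step that needs care is the \emph{interpretation} of the allocation under different laws. Allocations are random variables on $\Omega$, while the laws $F$ are alternative distributions of $X$. I will make precise that evaluating $V_i^F(X_i)$ means evaluating the joint law of $(X,X_i)$ with $X\sim F$, for example by treating $X_i=h_i(X,\cdot)$ or by changing measure. The decomposition inequality must then hold under each such law. Since Lemma~\ref{lem:weighted-variance}'s argument is purely algebraic in expectations and covariances, I expect this to go through by applying it verbatim under each measure.
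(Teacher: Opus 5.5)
Your proof is correct and rests on the same fact as the paper's: the allocation $X_i^*=c_iX+\beta_i$ of Lemma~\ref{lem:weighted-variance} is an affine function of $X$. It therefore minimizes $\sum_i V_i^F(X_i;\gamma_i)$ under every candidate law at once, with minimal value $V^F(X)$ for $\gamma=(\sum_j 1/\gamma_j)^{-1}$.

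The difference is in how this becomes optimality for the $\alpha$-maxmin criterion. The paper rewrites the objective as $\inf_{F_b}\sup_{F_w}\{\alpha\sum_i V_i^{F_w}(X_i;\gamma_i)+(1-\alpha)\sum_i V_i^{F_b}(X_i;\gamma_i)\}$. It then exchanges the infimum over allocations with the infimum over $F_b$. Next it uses the law-independence of $X^*$ to exchange the infimum over allocations with the supremum over $F_w$, and only then separates the worst- and best-case terms. You instead bound the two terms from below separately by $\sup_F V^F(X)$ and $\inf_F V^F(X)$. You combine these bounds using $\alpha,1-\alpha\ge 0$ and observe that $X^*$ attains both simultaneously.

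Your sandwich argument is shorter and needs no minimax bookkeeping. It also exhibits directly the optimal value $\alpha\sup_F V^F(X)+(1-\alpha)\inf_F V^F(X)$ and the whole family of optimal allocations. The paper's version expresses the same content as an interchange (saddle-point) statement and gains no extra generality.

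Your concern about what $V_i^F(X_i)$ means under an alternative law of $X$ is a genuine modelling point. The paper passes over it by simply asserting that the solution of \eqref{Prob:auxiliary} does not depend on $\Prob$. It does not affect your argument, however. Under any joint law used for $(X,X_1,\dots,X_n)$, the decomposition gives a lower bound that depends only on the law of $X$, and equality for $X^*$ holds under every such law.
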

\begin{proof}
As shown by Lemma \ref{lem:weighted-variance}, the solution $(X_1^*, \ldots, X_n^*)$ to the problem \eqref{Prob:auxiliary} does not depend on $\Prob$, i.e., $X_i^* = \frac{1/\gamma_i}{\sum_{j = 1}^n 1/\gamma_j} X,~ i = 1, \ldots, n$. 
Therefore, for any given $F_b$, we have
\begin{align*}
&\sup_{F_w \in \mathcal M^2(G; \epsilon)}\left\{\alpha\sum_{i=1}^n V_i^{F_w}(X_i^*;\gamma_i) + (1-\alpha)\sum_{i=1}^n V_i^{F_b}(X_i^*;\gamma_i)\right\} \\
\le\ &\sup_{F_w \in \mathcal M^2(G; \epsilon)}\left\{\alpha\sum_{i=1}^n V_i^{F_w}(X_i;\gamma_i) + (1-\alpha)\sum_{i=1}^n V_i^{F_b}(X_i;\gamma_i)\right\}
\end{align*}
for any $\{X_i\}_{i=1}^n\in\mathcal{S}(X)$, where the equality holds when $\{X_i\}_{i=1}^n=\{X_i^*\}_{i=1}^n$.
This leads to 
\begin{align*}
    &\sup_{F_w \in \mathcal M^2(G; \epsilon)}\inf_{\{X_i\}_{i=1}^n\in\mathcal{S}(X)}\left\{\alpha\sum_{i=1}^n V_i^{F_w}(X_i;\gamma_i)+(1-\alpha)\sum_{i=1}^n V_i^{F_b}(X_i;\gamma_i)\right\} \\
    =&\inf_{\{X_i\}_{i=1}^n\in\mathcal{S}(X)}\sup_{F_w \in \mathcal M^2(G; \epsilon)}\left\{\alpha\sum_{i=1}^n V_i^{F_w}(X_i;\gamma_i)+(1-\alpha)\sum_{i=1}^n V_i^{F_b}(X_i;\gamma_i)\right\}.
\end{align*}
As such,
\begin{align}
    &\inf_{\{X_i\}_{i=1}^n\in\mathcal{S}(X)}\left\{\alpha\sup_{F_w \in \mathcal M^2(G; \epsilon)}\sum_{i=1}^n V_i^{F_w}(X_i;\gamma_i)+(1-\alpha)\inf_{F_b \in \mathcal M^2(G; \epsilon)}\sum_{i=1}^n V_i^{F_b}(X_i;\gamma_i)\right\} \nonumber \\
    =& \inf_{\{X_i\}_{i=1}^n\in\mathcal{S}(X)}\inf_{F_b \in \mathcal M^2(G; \epsilon)}\sup_{F_w \in \mathcal M^2(G; \epsilon)}\left\{\alpha\sum_{i=1}^n V_i^{F_w}(X_i;\gamma_i)+(1-\alpha)\sum_{i=1}^n V_i^{F_b}(X_i;\gamma_i)\right\} \nonumber \\
    =& \inf_{F_b \in \mathcal M^2(G; \epsilon)}\inf_{\{X_i\}_{i=1}^n\in\mathcal{S}(X)}\sup_{F_w \in \mathcal M^2(G; \epsilon)}\left\{\alpha\sum_{i=1}^n V_i^{F_w}(X_i;\gamma_i)+(1-\alpha)\sum_{i=1}^n V_i^{F_b}(X_i;\gamma_i)\right\} \nonumber \\
    =& \inf_{F_b \in \mathcal M^2(G; \epsilon)}\sup_{F_w \in \mathcal M^2(G; \epsilon)}\left\{\alpha\sum_{i=1}^n V_i^{F_w}(X_i^*;\gamma_i)+(1-\alpha)\sum_{i=1}^n V_i^{F_b}(X_i^*;\gamma_i)\right\}. \label{Prob:reduced}
\end{align}
Note that
$$
\sum_{i=1}^n V_i^F(X_i^*;\gamma_i)= \mu_{F}+\frac{\sigma_{F}^2}{\sum_{i=1}^n 1/\gamma_i}.
$$
The problem \eqref{Prob:reduced} becomes
\begin{align*}
    &\inf_{F_b \in \mathcal M^2(G; \epsilon)}\sup_{F_w \in \mathcal M^2(G; \epsilon)}\left\{\alpha\sum_{i=1}^n V_i^{F_w}(X_i^*;\gamma_i)+(1-\alpha)\sum_{i=1}^n V_i^{F_b}(X_i^*;\gamma_i)\right\} \\
    =& \sup_{F_w\in\mathcal{M}^2(G;\epsilon)}\alpha\left(\mu_{F_w}+\frac{\sigma_{F_w}^2}{\sum_{i=1}^n 1/\gamma_i}\right)+\inf_{F_b\in\mathcal{M}^2(G;\epsilon)}(1-\alpha)\left(\mu_{F_b}+\frac{\sigma_{F_b}^2}{\sum_{i=1}^n 1/\gamma_i}\right).
\end{align*}
The worst-case and best-case CDFs of $X$ can be characterized via their quantile functions as shown in Theorems \ref{thm:wo_quantile} and \ref{thm:be_quantile}. The proof is complete.
\end{proof}

Lemma \ref{lem:weighted-variance} and Theorem \ref{thm:op_risk_sharing} show that the optimal sharing rule depends on the agents' risk attitudes only, but not on distributional ambiguity or $\alpha$. The corresponding extremal distributions are given by Theorems \ref{thm:wo_quantile} and \ref{thm:be_quantile} with $\gamma$ therein being $(\sum_{i=0}^n 1/\gamma_i)^{-1}$.

\begin{remark}
The results established in this section also apply to the optimal insurance problem with the variance premium principle. Let $I_0(X) = X - \sum_{i=1}^n I_i(X)$ denote the retained loss, and let $I_i(X)$ denote the loss ceded to insurer $i$. Under the variance premium principle, the mean-variance objective reduces to
\begin{equation*}
\E[X] + \sum_{i=0}^n \gamma_i \V[I_i(X)],
\end{equation*}
where $\gamma_0$ is the insured's variance-aversion coefficient and $\gamma_i~(i=1, \ldots, n)$ are the insurers' premium loadings.
Lemma \ref{lem:weighted-variance} and Theorem \ref{thm:op_risk_sharing} show that the optimal retained loss and indemnity functions are
\begin{equation*}
I_i^*(X) = \frac{1/\gamma_i}{\sum_{j=0}^n 1/\gamma_j} X, \quad i = 0, \ldots, n,
\end{equation*}
and the extremal quantile functions for $X$ are given by Theorems \ref{thm:wo_quantile} and \ref{thm:be_quantile} with $\gamma$ therein being $(\sum_{i=0}^n 1/\gamma_i)^{-1}$. As in Theorem \ref{thm:op_risk_sharing}, the optimal contract is independent of both the ambiguity-set parameter $\epsilon$ and the ambiguity-attitude parameter $\alpha$.
\end{remark}


\section{Extension to distortion riskmetrics}\label{sec:extension}

Distortion riskmetrics, introduced by \citet{wang2020distortion}, include monetary risk measures, deviation measures, and many other functionals used in finance and insurance. For a random variable $Z$ with CDF $F$, its distortion riskmetric is defined by
\begin{equation}\label{eq:DRM}
    \rho_g(Z)=\int_{0}^\infty g(\Prob(Z>z))dz+\int_{-\infty}^0\left(g(\Prob(Z>z))-g(1)\right)dz,
\end{equation}
where $g$ is a \emph{distortion function} belonging to
\begin{equation*}
    \mathcal{G}=\left\{g:[0,1]\mapsto \R:\ g\ \text{is continuous and of bounded variation and}\ g(0)=0\right\}.
\end{equation*}
Unlike the classical distortion risk measure \citep{yaari1987dual}, which requires $g$ to be increasing and to satisfy $g(1)=1$, the distortion riskmetric imposes neither restriction. The class therefore includes coherent risk measures such as Tail Value-at-Risk (TVaR), as well as RVaR, the Gini deviation ($g_G(t)=t-t^2$), the mean-median deviation ($g_M(t)=t\wedge (1-t)$), and riskmetrics generated by continuous inverse-S-shaped distortions.

By replacing the expectation term with a distortion riskmetric in  \eqref{Prob:sup} and \eqref{Prob:inf}, we have the following extended optimization models:
\begin{equation}\label{Prob:sup2}
    \sup_{F\in\mathcal{M}^2(G;\epsilon)}\ \rho_g^F(X)+\gamma\V_F[X]
\end{equation}
and
\begin{equation}\label{Prob:inf2}
    \inf_{F\in\mathcal{M}^2(G;\epsilon)}\ \rho_g^F(X)+\gamma\V_F[X].
\end{equation}
The distortion term can distinguish between distributions with the same mean and variance, so the extremal quantile need not remain a location-scale transformation of the reference quantile.

We present the following two examples to show that a variety of distributionally robust constrained optimization problems can be accommodated within our framework.

\begin{example}
    Existing studies of extremal distortion risk measures and riskmetrics impose both moment information and a 2-Wasserstein constraint \citep{bernard2024robust,liu2025robust}. When the first two moments must be estimated from finite samples, one may instead consider
    \begin{equation}\label{Prob:extension1}
    \begin{aligned}
        \sup_{F\in\mathcal{M}^2(G;\epsilon)}&\ \rho_{g_0}^F(X), \\
        \text{s.t.}&\ \E_F[X]\le\Bar{\mu},\quad \V_F[X]\le\Bar{\sigma}^2,
    \end{aligned}
    \end{equation}
    with $g_0\in\mathcal G$. By Lemma \ref{lem:quantile-formulation}, the objective is linear in $F^{-1}$ and both constraints are convex in $F^{-1}$. Strict feasibility of the reference distribution $G$ ensures Slater's condition, so \eqref{Prob:extension1} has the same value as its Lagrange dual:
    \begin{equation*}
        \inf_{\lambda_1\ge 0,\lambda_2\ge 0}\sup_{F\in\mathcal{M}^2(G;\epsilon)}\ \rho_{g_0}^F(X)-\lambda_1\E_F[X]-\lambda_2\V_F[X]+\lambda_1\Bar{\mu}+\lambda_2\Bar{\sigma}^2,
    \end{equation*}
    for which the inner problem is equivalent to 
    \begin{align*}
        \sup_{F\in\mathcal{M}^2(G;\epsilon)}\ \rho_{g_0}^F(X)-\lambda_1\E_F[X]-\lambda_2\V_F[X]=&-\inf_{F\in\mathcal{M}^2(G;\epsilon)}\rho_{-g_0}^F(X)+\lambda_1\E_F[X]+\lambda_2\V_F[X] \\
        =&-\inf_{F\in\mathcal{M}^2(G;\epsilon)}\rho_g^F(X)+\lambda_2\V_F[X],
    \end{align*}
    where $g(t):=-g_0(t)+\lambda_1t\in\mathcal{G}$. This is exactly the form of \eqref{Prob:inf2}.
\end{example}

\begin{example}
    Following \citet{chen2019distributionally}, we may express partial distributional information through a stochastic-order constraint. Suppose $X\leq_{\rm icx}Y$, where $Y$ takes the values $0$ and $\eta$ with probabilities $\alpha$ and $1-\alpha$, respectively, for some $\eta>0$. By the equivalence between increasing convex order and stop-loss order, for every $t\in[0,\eta]$,
    \begin{align*}
        \E_F[(X-t)_+]\le \E_F[(Y-t)_+]\ &\Longleftrightarrow\ \E_F[(X-t)_+]\le (1-\alpha)(\eta-t) \\
        &\Longleftrightarrow\ \frac{1}{1-\alpha}\E[(X-t)_+]+t\le\eta.
    \end{align*}
    It follows that $\inf_{t\in\R}\left\{\frac{1}{1-\alpha}\E_F[(X-t)_+]+t\right\}\leq\eta$, or equivalently, $\mathrm{TVaR}_\alpha^F(X)\leq\eta$. Adding this constraint to \eqref{Prob:sup} gives
    \begin{equation*}
    \begin{aligned}
        \sup_{F\in\mathcal{M}^2(G;\epsilon)}&\ \E_F[X]+\gamma\V_F[X], \\
        \text{s.t.}&\ \rho_{g_{T,\alpha}}^F(X)\le\eta.
    \end{aligned}
    \end{equation*}
    Since $\mathrm{TVaR}$ is linear in $F^{-1}$, the Lagrange dual is
    \begin{equation*}
        \inf_{\lambda\ge 0}\sup_{F\in\mathcal{M}^2(G;\epsilon)}\ \E_F[X]+\gamma\V_F[X]-\lambda \rho_{g_{T,\alpha}}^F(X)+\lambda\eta,
    \end{equation*}
    whose inner optimization has the form \eqref{Prob:sup2} with $g(t)=t-\lambda g_{T,\alpha}(t)$.
\end{example}

Similar to Lemma \ref{lem:binding}, the following lemma shows that the extreme values of $\rho_g^F(X)+\gamma\V_F[X]$ can be attained at the boundary of the Wasserstein ball. Although the proof is very similar to that of Lemma \ref{lem:binding}, we include it here for completeness.
\begin{lemma}\label{lem:binding2}
     Any optimizer $F^*$ for the worst-case problem \eqref{Prob:sup2} (resp. best-case problem \eqref{Prob:inf2}) satisfies $W_2(F^*,G) = \sqrt{\epsilon}$. Moreover, if $g(1)\neq 0$, then $F^*$ must satisfy $W_2(F^*,G) = \sqrt{\epsilon}$.
\end{lemma}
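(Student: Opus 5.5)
The plan is to mimic the proof of Lemma~\ref{lem:binding}, replacing the mean by the distortion riskmetric and using how $\rho_g$ behaves under translations. By the quantile representation of distortion riskmetrics (Lemma~\ref{lem:quantile-formulation}), for $X\sim F\in\mathcal F^2$ we can write $\rho_g^F(X)=\int_0^1 F^{-1}(1-t)\,dg(t)$. This integral is well defined and finite because $g$ has bounded variation and $F^{-1}\in L^2(0,1)$; I take this finiteness as given from that lemma. If $\tilde F^{-1}=F^{-1}+c$, then
\[
\rho_g^{\tilde F}(X)=\rho_g^F(X)+c\int_0^1 dg(t)=\rho_g^F(X)+c\,g(1),
\]
since $g(0)=0$. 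The variance is unchanged under the shift, so the objective $\rho_g^F(X)+\gamma\V_F[X]$ changes by exactly $c\,g(1)$.

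Next I check that shifts reaching the boundary always exist. Suppose $F$ is feasible and slack, i.e. $W_2(F,G)^2<\epsilon$. Define
\[
h(c):=\int_0^1\bigl(F^{-1}(t)+c-G^{-1}(t)\bigr)^2dt=W_2(F,G)^2+2c(\mu_F-\mu_G)+c^2 .
\]
This is a continuous quadratic with $h(0)<\epsilon$ and $h(c)\to\infty$ as $c\to\pm\infty$. So there is a $c_+>0$ and a $c_-<0$ with $h(c_\pm)=\epsilon$. Both shifted laws lie in $\mathcal M^2(G;\epsilon)$ and sit on the boundary.

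With these two facts the argument splits on $g(1)$. If $g(1)\neq 0$ and $F$ is a slack optimizer of \eqref{Prob:sup2}, pick $c=c_+$ when $g(1)>0$ and $c=c_-$ when $g(1)<0$. Then $c\,g(1)>0$, so the shifted law has a strictly larger value, which is a contradiction. For \eqref{Prob:inf2} choose the opposite sign so that $c\,g(1)<0$. Hence every optimizer is binding, which is the ``moreover'' claim.

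If $g(1)=0$, the shift leaves the objective unchanged. A slack optimizer then cannot be excluded, but shifting it by $c_+$ gives an optimizer with $W_2(F^*,G)=\sqrt\epsilon$. The first sentence of the lemma therefore has to be read in this sense: an optimizer can always be taken on the boundary of the ball. The example $g(t)=t-t^2$ (Gini deviation) with $\gamma$-variance shows that strictness genuinely fails when $g(1)=0$, because the objective is then translation invariant.

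The main obstacle is this interpretive point, namely making the first claim precise as ``can be chosen binding'' rather than ``must be binding'' when $g(1)=0$. The remaining ingredients, translation equivariance with constant $g(1)$ and existence of $c_\pm$, are routine once the quantile representation from Lemma~\ref{lem:quantile-formulation} is available.
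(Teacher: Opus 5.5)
Your proof is correct and is essentially the paper's argument: shift a slack feasible law to the boundary in the direction that makes $c\,g(1)$ favorable, using $\rho_g^{\tilde F}(X)=\rho_g^F(X)+c\,g(1)$ and the shift-invariance of the variance; your explicit sign choice for \eqref{Prob:inf2}, and your reading of the first claim as ``an optimizer can be taken on the boundary'' when $g(1)=0$, are exactly what that argument (and the paper's proof) delivers. Two side remarks need correcting: bounded variation of $g$ plus $F^{-1}\in L^2$ does not make $\rho_g^F(X)$ finite (take $g(t)=\sqrt t$ and $F^{-1}(1-t)\approx t^{-1/2}/|\log t|$ near $t=0$), although finiteness at an optimizer is automatic once the optimal value is finite; and translation invariance alone does not produce a slack optimizer---in the worst-case Gini-plus-variance problem every optimizer is in fact binding, because a slack law is either a point mass (value $0$, below the value at $G$) or is strictly improved by replacing $F^{-1}$ with $\mu_F+(1+s)(F^{-1}-\mu_F)$ for small $s>0$, so your example is a genuine counterexample only for \eqref{Prob:inf2} with $\epsilon>\sigma_G^2$, where point masses at $m$ with $(m-\mu_G)^2<\epsilon-\sigma_G^2$ are slack minimizers.
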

\begin{proof}
    Since $g$ is a continuous distortion function, by Lemma \ref{lem:quantile-formulation} we have 
    $$
    \rho_g^F(X)=\int_0^1F^{-1}(1-t)dg(t)=\int_0^1F^{-1}(t)d\Tilde{g}(t),
    $$
    where $\Tilde{g}(t)=g(1)-g(1-t)$ is the dual distortion function of $g$. 
    
    Let $F$ be a CDF that lies in the interior of the Wasserstein ball, i.e., 
	$$\int_0^1|F^{-1}(t)-G^{-1}(t)|^2dt<\epsilon.$$ 
	Since $F$ lies in the interior of the Wasserstein ball, we may choose $c\in\R$ such that $cg(1)\ge 0$ and
	$$\int_0^1|F^{-1}(t)+c-G^{-1}(t)|^2dt=\epsilon.$$ 
	Define $\Tilde{F}$ by $\Tilde{F}^{-1}(t)=F^{-1}(t)+c$, then
    \begin{align*}
        &\rho_{g}^{\Tilde{F}}(X)+\gamma\V_{\Tilde{F}}[X] \\
        =&\int_0^1(F^{-1}(t)+c)d\Tilde{g}(t)+\gamma\left(\int_0^1(F^{-1}(t))^2dt-\left(\int_0^1F^{-1}(t)dt\right)^2\right) \\
        =&\rho_g^F(X)+\gamma\V_F[X]+c(\Tilde{g}(1)-\Tilde{g}(0)) \\
        =&\rho_g^F(X)+\gamma\V_F[X]+cg(1)\ge \rho_g^F(X)+\gamma\V_F[X],
    \end{align*}
    where the last inequality is strict if $g(1)\neq 0$. This ends the proof.
\end{proof}

For any $g\in\mathcal{G}$, define its convex envelope by
\begin{equation*}
    g_*=\sup\{k\in\mathcal{G}: k\ \text{is convex on $[0,1]$ and}\ k\le g\}.
\end{equation*}
We also define the functions 
$$h^\lambda(t)=\Tilde{g}(t)+2(\gamma+\lambda)\int_0^tG^{-1}(s)ds, ~~~ \Lambda(\lambda)=\left(\int_0^1((h^\lambda_*)'(t)-h^\lambda(1))^2dt\right)^{\frac{1}{2}}.$$
It is straightforward to verify that $h^\lambda\in\mathcal{G}$. 
In the following, $\tilde{g}_*'$ and $(h^\lambda_*)'$ are understood as the left-hand derivatives of $\tilde{g}_*$ and $h^\lambda_*$. To proceed, we impose the following technical assumption.
\begin{assumption}\label{ass:nonconstant}
    $\int_0^1({\tilde{g}_*}'(t))^2dt<\infty$ and the derivative $(h^\lambda_*)'(t)$ is not a constant over $t\in(0,1)$ for any $\lambda\ge 0$.
\end{assumption}
Note that if $g$ is absolutely continuous with $\int_0^1(g'(t))^2dt<\infty$, then the square integrability of the derivative is preserved by $\tilde{g}_*'$. The assumption of $(h_*^\lambda)'$ being non-constant aligns with Assumption 3.6 of \cite{bernard2024robust} and Assumption A of \cite{liu2025robust} and guarantees the attainability of the constructed worst-case distribution in Theorem \ref{thm:wo_quantile2}.  
The following lemma shows the continuity of $\Lambda(\lambda)$ over $(0,\infty)$ under Assumption \ref{ass:nonconstant}.
\begin{lemma}\label{lem:continuity_Lambda}
    Let Assumption \ref{ass:nonconstant} hold. The function $\Lambda(\lambda)$ is continuous on $(0,\infty)$.
\end{lemma}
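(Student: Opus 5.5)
We plan to deduce continuity (in fact local Lipschitz continuity) of $\Lambda$ from a variational characterization of $(h^\lambda_*)'$ as the minimizer of a strongly convex functional. The linear term $h^\lambda(1)=\tilde g(1)+2(\gamma+\lambda)\mu_G$ is affine in $\lambda$, so by the triangle inequality in $L^2(0,1)$ it suffices to prove that $\lambda\mapsto(h^\lambda_*)'$ is continuous as a map into $L^2(0,1)$. Write $Q(t)=\int_0^tG^{-1}(s)ds$, so that $h^\lambda=\tilde g+2(\gamma+\lambda)Q$. Note that $Q$ is convex, since $G^{-1}$ is nondecreasing, and $G^{-1}\in L^2$.

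\emph{Step 1: regularity of $h^\lambda_*$.} We first show that $h^\lambda_*$ is absolutely continuous with $(h^\lambda_*)'\in L^2$. The function $\tilde g_*+2(\gamma+\lambda)Q$ is convex, lies below $h^\lambda$, and belongs to $\mathcal G$. Hence it is dominated by $h^\lambda_*$, and both functions vanish at $0$. The endpoint values of $h^\lambda_*$ agree with those of $h^\lambda$. Combining this sandwich with the convexity of $h^\lambda_*$, we aim to bound $(h^\lambda_*)'$ between $L^2$ functions: slopes of convex minorants are controlled by chords. The bound comes from $\tilde g_*'$, which is in $L^2$ by Assumption~\ref{ass:nonconstant}, plus $2(\gamma+\lambda)G^{-1}$. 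We expect the endpoint behaviour near $t=0,1$ to need care, possibly via a truncation argument.

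\emph{Step 2: variational characterization.} Let $\mathcal K$ be the closed convex cone of nondecreasing functions in $L^2(0,1)$, and define
$$J_\lambda(k)=\int_0^1k(t)^2dt-2\int_0^1k(t)\,dh^\lambda(t),\qquad k\in\mathcal K.$$
Integration by parts with $h^\lambda(0)=0$ gives $\int k\,dh^\lambda=k(1)h^\lambda(1)-\int h^\lambda\,dk$. Since $h^\lambda\ge h^\lambda_*$, $dk\ge0$, and the endpoint values coincide, we get $\int k\,dh^\lambda\le\int k\,dh^\lambda_*$. Equality holds when $k$ is constant on the open set $\{h^\lambda>h^\lambda_*\}$, where $h^\lambda_*$ is affine; this is true for $k=(h^\lambda_*)'$. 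Using Step 1, $\int k\,dh^\lambda_*=\int k\,(h^\lambda_*)'dt$, and therefore
$$J_\lambda(k)\ge\|k-(h^\lambda_*)'\|_2^2-\|(h^\lambda_*)'\|_2^2,$$
with equality at $k=(h^\lambda_*)'$. So $(h^\lambda_*)'$ is the unique minimizer of $J_\lambda$ over $\mathcal K$. The integration by parts may need an approximation by bounded $k$, since $k$ could be unbounded near the endpoints.

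\emph{Step 3: stability.} We have $J_{\lambda_1}(k)-J_{\lambda_2}(k)=-4(\lambda_1-\lambda_2)\int_0^1k\,G^{-1}dt$, which is linear in $k$ with $L^2$-gradient of norm $4|\lambda_1-\lambda_2|\,\|G^{-1}\|_2$. Let $k_i$ denote the minimizer of $J_{\lambda_i}$. Each $J_{\lambda_i}$ is $2$-strongly convex on $\mathcal K$, so
$$J_{\lambda_1}(k_2)-J_{\lambda_1}(k_1)\ge\|k_1-k_2\|_2^2,$$
and symmetrically with the roles exchanged. Adding the two inequalities and using the linear difference gives
$$2\|k_1-k_2\|_2^2\le4|\lambda_1-\lambda_2|\,\|G^{-1}\|_2\,\|k_1-k_2\|_2,$$
hence $\|k_1-k_2\|_2\le2|\lambda_1-\lambda_2|\,\|G^{-1}\|_2$. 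It follows that $\Lambda$ is Lipschitz on bounded sets and, in particular, continuous on $(0,\infty)$.

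We expect the main obstacle to be Steps 1–2: justifying that $h^\lambda_*$ has an $L^2$ derivative and that the integration-by-parts identification holds when $g$ is merely continuous and of bounded variation. Step 3 is routine once these are in place.
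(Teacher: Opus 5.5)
Your route is correct and genuinely different from the paper's. The paper argues qualitatively. It shows $\sup_t|h^{\lambda_1}_*(t)-h^{\lambda_2}_*(t)|\le C|\lambda_1-\lambda_2|$ and obtains a.e.\ convergence of $(h^{\lambda_n}_*)'$ from Rockafellar's Theorem 25.7. It gets $(h^\lambda_*)'\in L^2$ from the convex order $(h^\lambda_*)'(U)\le_{\rm cx}\tilde g_*'(U)+2(\gamma+\lambda)G^{-1}(U)$. It then proves uniform integrability of $((h^\lambda_*)')^2$ near the endpoints to upgrade a.e.\ convergence to $L^2$ convergence.

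Your isotonic-projection characterization replaces the chain of uniform convergence, a.e.\ convergence of derivatives and uniform integrability by a single strong-convexity estimate. It gives the quantitative bound $\|(h^{\lambda_1}_*)'-(h^{\lambda_2}_*)'\|_2\le 2|\lambda_1-\lambda_2|\,\|G^{-1}\|_2$, so $\Lambda$ is globally Lipschitz on $[0,\infty)$. This also covers $\lambda\downarrow 0$, which the proof of Theorem~\ref{thm:wo_quantile2} actually uses but the lemma as stated does not. For Step 1, the paper's convex-order comparison gives $\|(h^\lambda_*)'\|_2\le\|\tilde g_*'+2(\gamma+\lambda)G^{-1}\|_2$ in one line. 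Your chord bounds also work, via Hardy's inequality; with $r^\lambda=\tilde g_*+2(\gamma+\lambda)Q$ they read $t^{-1}r^\lambda(t)\le(h^\lambda_*)'(t)\le(1-t)^{-1}(r^\lambda(1)-r^\lambda(t))$.

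The obstacle you flag in Step 2 is real. For unbounded $k$, the integral $\int k\,d\tilde g$ need not converge absolutely when $\tilde g$ is only of bounded variation. This can happen even for $k=(h^\lambda_*)'$, if $\tilde g$ oscillates on affine pieces of $h^\lambda_*$ that accumulate at an endpoint. So $J_\lambda$ is not well defined on all of $\mathcal K$, but you never need it to be. Let $k_i=(h^{\lambda_i}_*)'$ and $k_2^{(M)}=(k_2\wedge M)\vee(-M)$. The truncation is still constant on the components of $\{h^{\lambda_2}_*<h^{\lambda_2}\}$, so your bounded-$k$ argument gives $\int k_2^{(M)}dh^{\lambda_2}=\langle k_2^{(M)},k_2\rangle$ and $\int k_2^{(M)}dh^{\lambda_1}\le\langle k_2^{(M)},k_1\rangle$. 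Moreover, $\int k_2^{(M)}d(h^{\lambda_1}-h^{\lambda_2})=2(\lambda_1-\lambda_2)\langle k_2^{(M)},G^{-1}\rangle$, because the $\tilde g$-parts cancel. Letting $M\to\infty$ (using Step 1) gives $\|k_2\|_2^2+2(\lambda_1-\lambda_2)\langle k_2,G^{-1}\rangle\le\langle k_1,k_2\rangle$. Adding the symmetric inequality yields $\|k_1-k_2\|_2^2\le 2(\lambda_1-\lambda_2)\langle k_1-k_2,G^{-1}\rangle$, which is exactly your Step 3 bound.
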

\begin{proof}
    It is obvious that both $h^\lambda(t)$ and $h^\lambda_*(t)$ are increasing in $\lambda$. Note that $h^\lambda_*(t)$ admits the representation 
    $$
    h^\lambda_*(t)=\inf\big\{\beta h^\lambda(t_1)+(1-\beta)h^\lambda(t_2): \beta\ge 0,\ \beta t_1+(1-\beta)t_2=t\big\}.
    $$
    Since $h^\lambda$ is continuous, for any $t\in[0,1]$, there exist $\beta_{\lambda,t}\in[0,1]$ and $t_1\le t\le t_2$ such that $h^\lambda_*(t)=\beta_{\lambda,t}h^\lambda(t_1)+(1-\beta_{\lambda,t})h^\lambda(t_2)$. As such, 
    \begin{align*}
        h^{\lambda_1}_*(t)=\beta_{\lambda_1,t} h^{\lambda_1}(t_1)+(1-\beta_{\lambda_1,t})h^{\lambda_1}(t_2)\le h^{\lambda_2}_*(t)\le \beta_{\lambda_1,t}h^{\lambda_2}(t_1)+(1-\beta_{\lambda_1,t})h^{\lambda_2}(t_2)
    \end{align*}
    when $\lambda_1\le \lambda_2$, which results in
    \begin{align*}
        &|h^{\lambda_1}_*(t)-h^{\lambda_2}_*(t)|\le |\beta_{\lambda_1,t}(h^{\lambda_1}(t_1)-h^{\lambda_2}(t_1))+(1-\beta_{\lambda_1,t})(h^{\lambda_1}(t_2)-h^{\lambda_2}(t_2))| \\
        \Longrightarrow\ &|h^{\lambda_1}_*(t)-h^{\lambda_2}_*(t)|\le |\lambda_1-\lambda_2|\beta_{\lambda_1,t} \left|\int_0^{t_1}G^{-1}(s)ds \right| +|\lambda_1-\lambda_2|(1-\beta_{\lambda_1,t}) \left|\int_0^{t_2}G^{-1}(s)ds \right| \\
        \Longrightarrow\ &|h^{\lambda_1}_*(t)-h^{\lambda_2}_*(t)|\le |\lambda_1-\lambda_2| \int_0^1 \left| G^{-1}(s) \right| ds.
    \end{align*}
    Therefore, $\{h^{\lambda_n}_*(t)\}$ uniformly converges to $h^{\lambda}_*(t)$ when $\lambda_n\to\lambda$. By Theorem 25.7 of \cite{rockafellar1970convex}, we get that $\{{h^{\lambda_n}_*}'(t)\}$ converges to ${h^\lambda_*}'(t)$ on the set where $h^\lambda_*$ is differentiable. Since $h^\lambda_*$ is differentiable almost everywhere on $[0,1]$, we further get that $\{{h^{\lambda_n}_*}'(t)\}$ converges to ${h^\lambda_*}'(t)$ almost everywhere on $[0,1]$.

    Now let $Z_1={h^\lambda_*}'(U_1)$ and $Z_2={\tilde{g}_*}'(U_2)+2(\gamma+\lambda)G^{-1}(U_2)$, where $U_1$ and $U_2$ are independent uniform random variables on $(0,1)$. It is easy to compute that
    \begin{align*}
        \E[Z_1]=\int_0^1{h^\lambda_*}'(t)dt=\int_0^1\{{\tilde{g}_*}'(t)+2(\gamma+\lambda)G^{-1}(t)\}dt=\E[Z_2],
    \end{align*}
    and
    \begin{align*}
        \int_\alpha^1F_{Z_1}^{-1}(t)dt=\ &\int_{\alpha}^1{h^\lambda_*}'(t)dt \\
        =\ &g(1)+2(\gamma+\lambda)\mu_G-h^\lambda_*(\alpha) \\
        \le\ &g(1)+2(\gamma+\lambda)\mu_G-\left\{\tilde{g}_*(\alpha)+2(\gamma+\lambda)\int_0^\alpha G^{-1}(t)dt\right\} \\
        =\ &\int_\alpha^1\{{\tilde{g}_*}'(t)+2(\gamma+\lambda)G^{-1}(t)\}dt \\
        =\ &\int_\alpha^1F_{Z_2}^{-1}(t)dt
    \end{align*}
    for all $\alpha\in[0,1]$. By Theorem 3.A.5 of \cite{Shaked2007Stochastic}, we have $Z_1 \leq_{\rm cx} Z_2$. 
	As such, 
    \begin{equation}\label{eq:finite}
    \E[Z_1^2]=\int_0^1\left({h^\lambda_*}'(t)\right)^2dt\le\E[Z_2^2]=\int_0^1\left({\tilde{g}_*}'(t)+2(\gamma+\lambda)G^{-1}(t)\right)^2dt<\infty,
    \end{equation}
    where the second inequality is due to Assumption \ref{ass:nonconstant}.

For any $\delta > 0$, define
\begin{equation*}
f^\lambda(t) := \frac{h^\lambda_*(\delta)}{\delta} t \1_{\{0 \leq t \leq \delta\}} + h^\lambda_*(t) \1_{\{\delta < t \leq 1\}},    ~~~ 
k^\lambda(t) := \frac{r^\lambda(\delta)}{\delta} t \1_{\{0 \leq t \leq \delta\}} +  r^\lambda(t) \1_{\{\delta < t \leq 1\}},
\end{equation*}
where $r^\lambda(t) = \Tilde{g}_*(t) + 2(\gamma+\lambda)\int_0^tG^{-1}(s)ds$. 
Moreover, $r^\lambda(t) \leq h^\lambda_*(t) \leq h^\lambda(t)$ and $r^\lambda(0) = h^\lambda_*(0) = h^\lambda(0) = 0$. 
It can be easily verified that $f^\lambda$ and $k^\lambda$ are continuous convex functions on $[0, 1]$ and $k^\lambda \leq f^\lambda$.
Moreover, $f^\lambda(0) = k^\lambda(0) = 0$ and $f^\lambda(1) = k^\lambda(1)$.
Note that for any $\alpha \in (0, 1)$,
\begin{equation*}
\int_0^\alpha \left(k^\lambda\right)'(u) du = k^\lambda(\alpha) \leq f^\lambda(\alpha) = \int_0^\alpha \left(f^\lambda\right)'(u) du.
\end{equation*}
Therefore, $\left(f^\lambda\right)'(U) \leq_{\rm cx} \left(k^\lambda\right)'(U)$.
Hence, we have
$$
    \int_0^1 \left[\left(f^\lambda\right)'(u)\right]^2 du \leq \int_0^1 \left[\left(k^\lambda\right)'(u)\right]^2 du.
$$
It can be rewritten as
\begin{align*}
	\int_\delta^1 \left[\left(h^\lambda_*\right)'(u)\right]^2 du &\leq \int_\delta^1 \left[\left(\Tilde{g}_*(u)\right)' + 2(\gamma+\lambda) G^{-1}(u)\right]^2 du + \frac{\left(r^\lambda(\delta)\right)^2}{\delta} - \frac{\left(h^\lambda_*(\delta)\right)^2}{\delta} \\
	&\leq \int_\delta^1 \left[|\left(\Tilde{g}_*(u)\right)'| + 2(\gamma+\lambda) |G^{-1}(u)|\right]^2 du + \frac{|r^\lambda(\delta) + h^\lambda_*(\delta)| |r^\lambda(\delta) - h^\lambda_*(\delta)|}{\delta}   \\
	&\leq \int_\delta^1 \left[|\left(\Tilde{g}_*(u)\right)'| + 2(\gamma+\lambda) |G^{-1}(u)|\right]^2 du + \frac{\left(|r^\lambda(\delta)| + |h^\lambda_*(\delta)|\right) |\Tilde{g}_*(\delta) - \Tilde{g}(\delta)|}{\delta},
\end{align*}
where the final inequality follows from $r^\lambda \leq h^\lambda_* \leq h^\lambda$.
For a given $\lambda_0 > 0$, 
\begin{align*}
\limsup_{\delta\to 1}\frac{|r^\lambda(\delta)|}{\delta} =&\limsup_{\delta\to 1} \frac{\left|\Tilde{g}_*(\delta) + 2(\gamma+\lambda)\int_0^\delta G^{-1}(s)ds \right|}{\delta}\\
\leq&\limsup_{\delta\to 1} \frac{\left|\Tilde{g}_*(\delta) \right|}{\delta} + 2 (\gamma+\lambda_0)\limsup_{\delta\to 1}\frac{\int_0^\delta \left|G^{-1}(s)\right| ds}{\delta} \\
=& |g(1)|+2(\gamma+\lambda_0)\int_0^1|G^{-1}(s)|ds<\infty.
\end{align*}
Thus, there exists a large $M > 0$ and $\delta_0 > 0$ such that when $\delta > \delta_0$
\begin{align*}
\sup_{0 \leq \lambda \leq \lambda_0} \left(\frac{\left(|r^\lambda(\delta)| + |h^\lambda(\delta)|\right)}{\delta}\right) \leq M.
\end{align*}
Consequently, for any $\eta > 0$, there exists a $\delta_0 > 0$ such that when $\delta > \delta_0$
\begin{eqnarray*}
& & \sup_{0 \leq \lambda \leq \lambda_0} \int_\delta^1 \left[\left({h^{\lambda}_*}\right)'(t)\right]^2dt \\
& &~\leq \sup_{0 \leq \lambda \leq \lambda_0} \left(\int_\delta^1 \left[|\left(\Tilde{g}_*(u)\right)'| + 2(\gamma+\lambda) |G^{-1}(u)|\right]^2 du + \frac{\left(|r^\lambda(\delta)| + |h^\lambda_*(\delta)|\right) |\Tilde{g}_*(\delta) - \Tilde{g}(\delta)|}{\delta}\right)  \\
& &~\leq \int_\delta^1 \left[|\left(\Tilde{g}_*(u)\right)'| + 2(\gamma+\lambda_0) |G^{-1}(u)|\right]^2 du + \sup_{0 \leq \lambda \leq \lambda_0} \left(\frac{\left(|r^\lambda(\delta)| + |h^\lambda(\delta)|\right) |\Tilde{g}_*(\delta) - \Tilde{g}(\delta)|}{\delta}\right) \\
& &~\leq \int_\delta^1 \left[|\left(\Tilde{g}_*(u)\right)'| + 2(\gamma+\lambda_0) |G^{-1}(u)|\right]^2 du + M |\Tilde{g}_*(\delta) - \Tilde{g}(\delta)|<\eta.
\end{eqnarray*}
Hence, $\left\{\left[\left({h^\lambda_*}\right)'(t)\right]^2, 0 \leq \lambda \leq \lambda_0\right\}$ is uniformly integrable for any $\lambda_0 > 0$. 

Similar to \eqref{eq:finite}, for each $\lambda_n$, we have $\int_0^1\left({h^{\lambda_n}_*}'(t)\right)^2dt<\infty$. When $\lambda_n\to \lambda$, where $\lambda<\infty$, it is straightforward that $\{\lambda_n\}$ is bounded by some $L\in(0,\infty)$, which results in
\begin{equation*}
\int_0^1\left({h^{\lambda_n}_*}'(t)\right)^2dt \le C:= \int_0^1\left({\tilde{g}_*}'(t)+2(\gamma+L)|G^{-1}(t)|\right)^2dt<\infty,
\end{equation*}
Therefore,
\begin{eqnarray*}
& & \int_0^1\left((h^{\lambda_n}_*)'(t)\right)^2dt-\int_0^1\left((h^\lambda_*)'(t)\right)^2dt \\
& &~= \int_0^1 \left((h^{\lambda_n}_*)'(t)-(h^{\lambda}_*)'(t)\right) \left((h^{\lambda_n}_*)'(t)+(h^{\lambda}_*)'(t)\right)dt \\ 
& &~\le \left(\int_0^1((h^{\lambda_n}_*)'(t)- (h^{\lambda}_*)'(t))^2dt\right)^{\frac{1}{2}} \left(\int_0^1((h^{\lambda_n}_*)'(t)+(h^{\lambda}_*)'(t))^2dt\right)^{\frac{1}{2}} \\
& &~\le \left(\int_0^1((h^{\lambda_n}_*)'(t)-(h^{\lambda}_*)'(t))^2dt\right)^{\frac{1}{2}} \left[\left(\int_0^1((h^{\lambda_n}_*)'(t))^2dt\right)^{\frac{1}{2}}+\left(\int_0^1((h^{\lambda}_*)'(t))^2dt\right)^{\frac{1}{2}}\right] \\
& &~\le
\left(\int_0^1((h^{\lambda_n}_*)'(t)-(h^{\lambda}_*)'(t))^2dt\right)^{\frac{1}{2}} \left[C+\left(\int_0^1((h^{\lambda}_*)'(t))^2dt\right)^{\frac{1}{2}}\right] \longrightarrow 0.
\end{eqnarray*}
The proof is complete.
\end{proof}

We can now characterize the worst-case distribution in \eqref{Prob:sup2}.
\begin{theorem}\label{thm:wo_quantile2}
    Under Assumption~\ref{ass:nonconstant}, the quantile function of the worst-case distribution $F_{wo}$ for problem \eqref{Prob:sup2} is
    \begin{equation*}
        {F_{wo}}^{-1}(t)=\mu_G+\frac{g(1)}{2(\gamma+\lambda^*)}+\frac{(h^{\lambda^*}_*)'(t)-h^{\lambda^*}(1)}{2\lambda^*},
    \end{equation*}
    where there exists an interval $[\eta,M]\subseteq (0,\infty)$ such that
    \begin{equation}\label{eq:lambda_minimizer}
        \lambda^*=\argmin_{\lambda\in[\eta,M]}\ \frac{g(1)^2}{4(\gamma+\lambda)}+\frac{\Lambda(\lambda)^2}{4\lambda}-\lambda(\sigma_G^2-\epsilon).
    \end{equation}
\end{theorem}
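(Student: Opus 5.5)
The plan is to follow the proof of Theorem~\ref{thm:wo_quantile}: first fix the mean, then linearize the variance using the Wasserstein constraint, and finally dualize the constraint with a multiplier $\lambda$. The new ingredient, replacing Cauchy--Schwarz, is an isotonic (convex-envelope) projection. Write $q=F^{-1}$, ranging over the cone $\mathcal Q$ of left-continuous increasing square-integrable functions on $(0,1)$. By Lemma~\ref{lem:quantile-formulation} (as in the proof of Lemma~\ref{lem:binding2}), the objective of \eqref{Prob:sup2} is
\[
J(q)=\int_0^1 q\,d\tilde g+\gamma\int_0^1 q^2dt-\gamma\Big(\int_0^1 q\,dt\Big)^2 .
\]
Fix $m=\int_0^1 q\,dt$. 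On the ball $\int_0^1(q-G^{-1})^2dt\le\epsilon$ we have $\int_0^1 q^2\le \epsilon-\int_0^1(G^{-1})^2+2\int_0^1 qG^{-1}$, with equality on the boundary. Hence
\[
J(q)\le \int_0^1 q\,d\tilde g+2\gamma\int_0^1 qG^{-1}dt-\gamma m^2+\gamma\Big(\epsilon-\int_0^1(G^{-1})^2dt\Big).
\]
The right-hand side is linear in $q$, so its supremum over the convex feasible set is attained on the boundary. Combined with Lemma~\ref{lem:binding2}, the two problems therefore have the same value and the same maximizers. This removes the non-concavity coming from $\gamma\int q^2$.

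Next I would dualize the Wasserstein constraint with $\lambda\ge0$ and optimize jointly over $m$ and $q$. Completing the square, the inner problem reads
\[
\sup_{q\in\mathcal Q}\Big\{\int_0^1 q\,dh^\lambda-\lambda\int_0^1 q^2dt-\gamma\Big(\int_0^1 q\,dt\Big)^2\Big\}+\lambda\Big(\epsilon-\int_0^1(G^{-1})^2dt\Big),
\]
where $h^\lambda=\tilde g+2(\gamma+\lambda)\int_0^{\cdot}G^{-1}$, which explains the definition of $h^\lambda$. For fixed $m$, maximizing $\int q\,dh^\lambda-\lambda\int q^2$ over increasing $q$ with $\int q=m$ is an isotonic projection of the "derivative" of $h^\lambda$. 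By the standard convex-envelope argument of \cite{bernard2024robust} and \cite{liu2025robust}, the solution is $q=\big((h^\lambda_*)'-c\big)/(2\lambda)$ with $c$ fixed by the mean. This uses $\int_0^1 q\,dh^\lambda\le\int_0^1 q\,dh^\lambda_*$ for increasing $q$, with equality where $q$ is constant on the flat pieces of $h^\lambda_*$; Assumption~\ref{ass:nonconstant} ensures $(h^\lambda_*)'\in L^2$ by \eqref{eq:finite}. Optimizing the remaining quadratic in $m$ gives the shift $\mu_G+g(1)/(2(\gamma+\lambda))$, using $h^\lambda(1)=g(1)+2(\gamma+\lambda)\mu_G$. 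The dual function then becomes, up to the constant $\mu_G$,
\[
D(\lambda)=\frac{g(1)^2}{4(\gamma+\lambda)}+\frac{\Lambda(\lambda)^2}{4\lambda}-\lambda(\sigma_G^2-\epsilon).
\]
Weak duality gives $\sup J\le\inf_\lambda D(\lambda)$.

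To show $\inf_\lambda D$ is attained on a compact interval $[\eta,M]$, I would use Lemma~\ref{lem:continuity_Lambda} for continuity of $D$ on $(0,\infty)$ and then check coercivity at both ends.
\begin{itemize}
\item As $\lambda\downarrow0$: $\Lambda(\lambda)\to\Lambda(0)>0$, because $(h^0_*)'$ is non-constant by Assumption~\ref{ass:nonconstant}; hence $\Lambda^2/(4\lambda)\to\infty$.
\item As $\lambda\to\infty$: $(h^\lambda_*)'/(2\lambda)\to G^{-1}$ in $L^2$, since $h^\lambda/(2\lambda)$ is asymptotically the already convex function $\int_0^{\cdot}G^{-1}$. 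This gives $\Lambda(\lambda)^2/(4\lambda)=\lambda\sigma_G^2+O(1)$, so $D(\lambda)\sim\lambda\epsilon\to\infty$.
\end{itemize}

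The main obstacle is closing the duality gap, i.e.\ showing that $q^*=F_{wo}^{-1}$ built from $\lambda^*$ satisfies $\int_0^1(q^*-G^{-1})^2dt=\epsilon$ (complementary slackness), so that $J(q^*)=D(\lambda^*)$. I would first try a minimax argument via Theorem~\ref{thm:sion}: the Lagrangian is concave in $q$ after the linearization and linear in $\lambda$, and $\lambda$ can be restricted to the compact set $[\eta,M]$. If verifying compactness or semicontinuity in $q$ proves awkward, the fallback is a directional-derivative argument at $\lambda^*$. There the one-sided derivatives of $\Lambda(\lambda)^2/(4\lambda)$ are computed through the envelope theorem (Danskin) applied to the inner supremum. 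Their vanishing at the minimizer yields exactly $\int_0^1(q^*-G^{-1})^2=\epsilon$, and non-differentiability of $\Lambda$ is handled by the one-sided versions.
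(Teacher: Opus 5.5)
Your proposal is correct and shares the paper's skeleton. Both arguments linearize the variance on the sphere, dualize the ball with $\lambda$, replace $\int q\,dh^\lambda$ by $\int q\,(h^\lambda_*)'$ via Theorem~\ref{thm:modified}, and reduce to a scalar problem in $\lambda$. Two steps are handled differently.

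\emph{The fixed-mean step.} You complete the square in $q$ directly. The paper instead writes $F^{-1}=\mu_F+\sigma_F\tilde F^{-1}$, obtains $\Lambda(\lambda)$ by Cauchy--Schwarz on the standardized shape, and then maximizes a concave quadratic in $(\mu_F,\sigma_F)$. Your version never divides by $\Lambda(\lambda)$, so the non-constancy part of Assumption~\ref{ass:nonconstant} is used only at $\lambda=0$.

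\emph{The closing step.} This is the more substantive difference. The paper invokes Slater's condition and infers primal optimality of $F_{wo}$ directly from strong duality, obtaining the binding constraint only afterwards from Lemma~\ref{lem:binding2}. Strong duality alone does not make a Lagrangian maximizer feasible, so the paper leaves this step implicit. Your plan needs only weak duality. Write $\tilde J$ for the linearized objective and $L$ for the Lagrangian. Once $s(q^*):=\epsilon-\|q^*-G^{-1}\|_2^2=0$, you get $J\le\tilde J\le L(q^*,\lambda^*)=\tilde J(q^*)=J(q^*)$ on the ball. This also makes your claim that the fixed-mean linear problem is maximized on the sphere unnecessary. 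As stated, that claim is imprecise, because the relative boundary also contains quantiles where monotonicity binds.

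\emph{Which fallback to use.} The Danskin route closes cleanly. The dual function is a supremum of functions affine in $\lambda$, hence convex; being finite on $(0,\infty)$, it is continuous there, so Lemma~\ref{lem:continuity_Lambda} is not even needed. $s(q_\lambda)$ is a subgradient at $\lambda$. The $2\lambda$-strong concavity of $L(\cdot,\lambda)$ gives $\lambda\|q_{\lambda'}-q_\lambda\|_2^2\le|\lambda'-\lambda|\,|s(q_{\lambda'})-s(q_\lambda)|$. Hence $q_\lambda$ is $L^2$-continuous and the dual function is differentiable with derivative $s(q_\lambda)$. Stationarity at the interior minimizer $\lambda^*$ is then exactly the binding constraint, and no one-sided analysis is required. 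Sion's theorem by itself only equates values and would still need this strong-concavity step to identify $q^*$.

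\emph{Coercivity at infinity.} $L^2$ convergence of $(h^\lambda_*)'/(2\lambda)$ yields an $o(\lambda)$ error in $\Lambda(\lambda)^2/(4\lambda)$, not $O(1)$. This is still sufficient since $\epsilon>0$. The paper's device is simpler: evaluating the Lagrangian at $q=G^{-1}$ bounds the dual function below by $\tilde J(G^{-1})+\lambda\epsilon$ immediately.
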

\begin{proof}
    By Lemma \ref{lem:binding2}, the Wasserstein constraint is binding at optimality. Using the binding constraint to eliminate $\int_0^1(F^{-1})^2dt$ from the objective in \eqref{Prob:sup2} gives
    $$
    \rho_g^F(X)+\gamma\left(\E[X^2]-\E[X]^2\right)=\rho_g^F(X)+\gamma\left(2\int_0^1F^{-1}(t)G^{-1}(t)dt+\epsilon-\int_0^1(G^{-1})^2dt-\E[X]^2\right).
    $$
    Thus, problem \eqref{Prob:sup2} becomes
    \begin{equation}\label{eq:prob_reduced}
        \begin{aligned}
            \sup_{F}&\ \int_0^1F^{-1}(t)d\tilde{g}(t)+2\gamma\int_0^1F^{-1}(t)G^{-1}(t)dt-\gamma\left(\int_0^1F^{-1}(t)dt\right)^2 \\
            \text{s.t.}&\ \int_0^1\left(F^{-1}(t)-G^{-1}(t)\right)^2dt\le \epsilon.
        \end{aligned}
    \end{equation}
    The objective in \eqref{eq:prob_reduced} is concave in $F^{-1}$, the constraint is convex, and $F^{-1}=G^{-1}$ is strictly feasible. Slater's condition gives equality with the dual value
    \begin{equation}\label{eq:prob_reduced2}
    \begin{aligned}
        \inf_{\lambda\ge 0}\sup_F&\ \int_0^1F^{-1}(t)d\tilde{g}(t)+2\gamma\int_0^1F^{-1}(t)G^{-1}(t)dt-\gamma\left(\int_0^1F^{-1}(t)dt\right)^2 \\
        &\ ~~~~ -\lambda\left(\int_0^1\left(F^{-1}(t)-G^{-1}(t)\right)^2dt-\epsilon\right).
    \end{aligned}
    \end{equation}
    A quantile function with mean $\mu_F$ and variance $\sigma_F^2$ can be written as $\mu_F+\sigma_F\tilde F^{-1}$, where $\tilde F^{-1}$ has mean zero and variance one. This separates location and scale from the quantile shape, which will be optimized using the convex envelope. Thus \eqref{eq:prob_reduced2} can be rewritten as
    \begin{eqnarray}
        & & \inf_{\lambda\ge0}\sup_{\mu_F,\sigma_F}\sup_{\tilde{F}}\ \mu_Fg(1)+\sigma_F\int_0^1\tilde{F}^{-1}(t)d\tilde{g}(t)+2\gamma\left(\mu_F\mu_G+\sigma_F\int_0^1\tilde{F}^{-1}(t)G^{-1}(t)dt\right) \nonumber \\
        & &\quad\quad\quad\quad\quad\ ~~~~~ -\gamma\mu_F^2-\lambda\left(\int_0^1\left(\mu_F+\sigma_F\tilde{F}^{-1}(t)-G^{-1}(t)\right)^2dt-\epsilon\right) \nonumber \\
        & &\ ~~ = \inf_{\lambda\ge 0}\sup_{\mu_F,\sigma_F}\sup_{\tilde{F}}\ \mu_Fg(1)+\sigma_F\int_0^1\tilde{F}^{-1}(t)d\tilde{g}(t)+2\gamma\left(\mu_F\mu_G+\sigma_F\int_0^1\tilde{F}^{-1}(t)G^{-1}(t)dt\right) \nonumber \\
        & &\quad\quad\quad\quad\quad\ ~~~~~~~~~~~~ -\gamma\mu_F^2-\lambda\left((\mu_F-\mu_G)^2+\sigma_F^2+\sigma_G^2-2\sigma_F\int_0^1\tilde{F}^{-1}(t)G^{-1}(t)dt-\epsilon\right) \nonumber \\
        & &\ ~~ = \inf_{\lambda\ge 0}\Bigg\{\sup_{\mu_F,\sigma_F}\Big\{ \mu_Fg(1)+2\gamma\mu_F\mu_G-\gamma\mu_F^2-\lambda\left((\mu_F-\mu_G)^2+\sigma_F^2+\sigma_G^2-\epsilon\right) \nonumber \\
        & &\quad\quad\quad\quad\quad\ +\sigma_F\ \sup_{\tilde{F}}\big\{\int_0^1\tilde{F}^{-1}(t)d\tilde{g}(t)+2(\gamma+\lambda)\int_0^1\tilde{F}^{-1}(t)G^{-1}(t)dt\big\}\Big\}\Bigg\}. \label{eq:prob_reduced3}
    \end{eqnarray}
    For the inner problem, we have
    \begin{align*}
        &\int_0^1\tilde{F}^{-1}(t)d\tilde{g}(t)+\int_0^1\tilde{F}^{-1}(t)d\left[2(\gamma+\lambda) \int_0^tG^{-1}(s)ds \right] \\
        &= \int_0^1\tilde{F}^{-1}(t)dh^\lambda(t) 
        \le \int_0^1\tilde{F}^{-1}(t)(h^\lambda_*)'(t)dt = \int_0^1\tilde{F}^{-1}(t)\left((h^\lambda_*)'(t)-h^\lambda(1)\right)dt \\
        &\le \left(\int_0^1((h^\lambda_*)'(t)-h^\lambda(1))^2dt\right)^{\frac{1}{2}}= \Lambda(\lambda),
    \end{align*}
where the first inequality follows from Theorem \ref{thm:modified} and the second from the Cauchy--Schwarz inequality.
Equality in the second inequality holds when
$$\tilde{F}^{-1}(t)=\frac{(h^\lambda_*)'(t)-h^\lambda(1)}{\sqrt{\int_0^1((h^\lambda_*)'(t)-h^\lambda(1))^2dt}},$$ 
which, by Lemma~\ref{lem:envelope}, is constant on each interval where $h^\lambda_*(t)<h^\lambda(t-)\wedge h^\lambda(t+)$.
The equality condition in Theorem~\ref{thm:modified} is therefore satisfied. Moreover, note that
\begin{equation*}
\int_0^1\big((h^\lambda_*)'(t)-h_\lambda(1)\big) dt
=h^\lambda_*(1)-h^\lambda_*(0)-h^\lambda(1)\
=0,
\end{equation*}
where we use $h^\lambda_*(0)=0$ and $h^\lambda_*(1)=h^\lambda(1)$. Hence,
$\int_0^1\tilde F^{-1}(t)dt=0$.
By the definition of the normalization,
$\int_0^1\big(\tilde F^{-1}(t)\big)^2 dt=1.$
Thus $\tilde F^{-1}$ has mean zero and variance one, and therefore is admissible for the inner optimization problem. Hence
\begin{equation}\label{eq-sup}
\Lambda(\lambda)=\sup_{\tilde{F}}\ \left\{\int_0^1\tilde{F}^{-1}(t)d\tilde{g}(t)+2(\gamma+\lambda)\int_0^1\tilde{F}^{-1}(t)G^{-1}(t)dt\right\}.
\end{equation}
Substituting this optimal value into \eqref{eq:prob_reduced3} yields
\begin{eqnarray}
    & & \inf_{\lambda\ge 0}\left\{\sup_{\mu_F,\sigma_F}\mu_Fg(1)+2\gamma\mu_F\mu_G-\gamma\mu_F^2-\lambda\left((\mu_F-\mu_G)^2+\sigma_F^2+\sigma_G^2-\epsilon\right)+\sigma_F\Lambda(\lambda)\right\} \label{eq:prob_reduced4} \\
    & & \ ~~ = \inf\Bigg\{\inf_{\lambda> 0}\left\{\sup_{\mu_F,\sigma_F}\mu_Fg(1)+2\gamma\mu_F\mu_G-\gamma\mu_F^2-\lambda\left((\mu_F-\mu_G)^2+\sigma_F^2+\sigma_G^2-\epsilon\right)+\sigma_F\Lambda(\lambda)\right\}, \nonumber\\
    & &~~~~~~~\quad\quad\quad \sup_{\mu_F,\sigma_F}\ \mu_Fg(1)+2\gamma\mu_F\mu_G-\gamma\mu_F^2+\sigma_F\Lambda(0)\Bigg\}. \nonumber
\end{eqnarray}
Note that $\Lambda(0)>0$ under Assumption \ref{ass:nonconstant}, leading to 
$$
\sup_{\mu_F,\sigma_F}\ \left\{ \mu_Fg(1)+2\gamma\mu_F\mu_G-\gamma\mu_F^2+\sigma_F\Lambda(0) \right\} = \infty.
$$
Let $\mathcal L(\mu_F,\sigma_F,\lambda)$ denote the objective in \eqref{eq:prob_reduced4}. Direct differentiation gives
\begin{align*}
    &\frac{\partial^2 \mathcal{L}(\mu_F,\sigma_F,\lambda)}{\partial\mu_F^2}=-2(\gamma+\lambda),\quad \frac{\partial^2\mathcal{L}(\mu_F,\sigma_F,\lambda)}{\partial\mu_F\sigma_F}=0,\quad \frac{\partial^2\mathcal{L}(\mu_F,\sigma_F,\lambda)}{\partial\sigma_F^2}=-2\lambda.
\end{align*}
Hence, for any $\lambda>0$, the Hessian matrix of $\mathcal{L}(\mu_F,\sigma_F,\lambda)$ is negative definite. Thus, for $\lambda>0$, $\mathcal L(\mu_F,\sigma_F,\lambda)$ has the unique maximizer
$$(\mu_F^*,\sigma_F^*) = \left(\mu_G+\frac{g(1)}{2(\gamma+\lambda)}, \frac{\Lambda(\lambda)}{2\lambda} \right).$$ 
With these results, the problem \eqref{eq:prob_reduced4} further reduces to
\begin{equation}\label{eq-inf}
    \inf_{\lambda>0}\ \left\{\frac{g(1)^2}{4(\gamma+\lambda)}+\frac{\Lambda(\lambda)^2}{4\lambda}-\lambda(\sigma_G^2-\epsilon)\right\}+\mu_G g(1)+\gamma\mu_G^2.
\end{equation}
Taking $F=G$ in the inner problem of \eqref{eq:prob_reduced2} gives
\[
\frac{g(1)^2}{4(\gamma+\lambda)}
+\frac{\Lambda(\lambda)^2}{4\lambda}
-\lambda(\sigma_G^2-\epsilon)
\geq\lambda\epsilon+\rho_g^G(X)-\mu_G g(1)+2\gamma\sigma_G^2.
\]
The right-hand side tends to $+\infty$ as $\lambda\to\infty$, since $\epsilon>0$. At zero, continuity from Lemma~\ref{lem:continuity_Lambda} and $\Lambda(0)>0$ imply that the objective tends to $+\infty$ as $\lambda\downarrow0$. Thus the objective is continuous and coercive on $(0,\infty)$, so its minimum is attained on a compact interval $[\eta,M]\subset(0,\infty)$. 

It remains to verify that the constructed $F_{wo}$ is feasible for problem \eqref{Prob:sup2}. Since $h^{\lambda^*}_*$ is convex, $(h^{\lambda^*}_*)'$ is increasing. Hence $F_{wo}^{-1}$ is an increasing $L^2([0,1])$ function and therefore defines a probability distribution with finite second moment. 
Moreover, by the equality conditions established in \eqref{eq-sup}, $F_{wo}$ attains the inner supremum corresponding to $\lambda^*$. Since $\lambda^*$ minimizes the dual objective in \eqref{eq-inf}, strong duality implies that $F_{wo}$ is a primal optimizer. Therefore, by Lemma \ref{lem:binding2}, the Wasserstein constraint is binding at $F_{wo}$, and
\begin{equation*}
\int_0^1\left(F_{wo}^{-1}(t)-G^{-1}(t)\right)^2dt=\epsilon.
\end{equation*}
The proof is now finished.
\end{proof}

When $g$ is concave, $\tilde g$ is convex and hence $h^\lambda=h^\lambda_*$ on $[0,1]$. The formula in Theorem \ref{thm:wo_quantile2} then simplifies as follows. 

\begin{corollary}\label{cor:concave-distortion}
    If $g$ is concave, then the quantile function of the worst-case distribution $F_{wo}$ for problem \eqref{Prob:sup2} is
    \begin{equation}
        F_{wo}^{-1}(t)=\mu_G+\frac{g(1)}{2(\gamma+\lambda^*)}+\frac{(h^{\lambda^*})'(t)-h^{\lambda^*}(1)}{2\lambda^*},
    \end{equation}
    where $\lambda^*=\inf\left\{\lambda>0: \mathcal{K}'(\lambda)\ge 0\right\}$ and
    \begin{equation}
        \mathcal{K}(\lambda)=\frac{g(1)^2}{4(\gamma+\lambda)}+\frac{A+4\gamma B+4\gamma^2\sigma_G^2}{4\lambda}+\lambda\epsilon,
    \end{equation}
    where $A=\int_0^1(\tilde{g}'(t)-g(1))^2dt$ and $B=\int_0^1(\tilde{g}'(t)-g(1))(G^{-1}(t)-\mu_G)dt$.
\end{corollary}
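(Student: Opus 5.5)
The plan is to specialize Theorem~\ref{thm:wo_quantile2}. If $g$ is concave, then $\tilde g(t)=g(1)-g(1-t)$ is convex. The map $t\mapsto 2(\gamma+\lambda)\int_0^tG^{-1}(s)ds$ is convex because $G^{-1}$ is increasing. Hence $h^\lambda$ is convex and continuous with $h^\lambda(0)=0$, so $h^\lambda_*=h^\lambda$ and $(h^\lambda_*)'=\tilde g'+2(\gamma+\lambda)G^{-1}$, taking left derivatives. The quantile formula of Theorem~\ref{thm:wo_quantile2} then reads exactly as stated, with $h^{\lambda^*}$ in place of $h^{\lambda^*}_*$. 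Assumption~\ref{ass:nonconstant} is used as in that theorem; I take it as in force here, since $(h^\lambda)'$ is nonconstant whenever $G^{-1}$ is nonconstant, i.e.\ $\sigma_G>0$, unless $\tilde g'$ cancels it.

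Next I compute $\Lambda(\lambda)^2$ explicitly. Since $h^\lambda(1)=g(1)+2(\gamma+\lambda)\mu_G$, we have
\[
(h^\lambda)'(t)-h^\lambda(1)=(\tilde g'(t)-g(1))+2(\gamma+\lambda)(G^{-1}(t)-\mu_G).
\]
Squaring and integrating gives
\[
\Lambda(\lambda)^2=A+4(\gamma+\lambda)B+4(\gamma+\lambda)^2\sigma_G^2.
\]
Substituting into the dual objective of \eqref{eq:lambda_minimizer}, the term $\Lambda(\lambda)^2/(4\lambda)$ equals
\[
\frac{A+4\gamma B+4\gamma^2\sigma_G^2}{4\lambda}+B+2\gamma\sigma_G^2+\lambda\sigma_G^2 .
\]
The $\lambda\sigma_G^2$ piece cancels $-\lambda\sigma_G^2$, and the remaining constants $B+2\gamma\sigma_G^2$ do not affect the minimizer. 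So the dual objective equals $\mathcal K(\lambda)$ up to an additive constant, and $\lambda^*$ minimizes $\mathcal K$ over $(0,\infty)$. By the coercivity argument in the proof of Theorem~\ref{thm:wo_quantile2}, this minimizer lies in $(0,\infty)$.

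Finally I identify $\lambda^*$ as $\inf\{\lambda>0:\mathcal K'(\lambda)\ge0\}$. Each term of $\mathcal K$ is convex on $(0,\infty)$: $g(1)^2/(4(\gamma+\lambda))$ is convex, $C/(4\lambda)$ is convex provided $C:=A+4\gamma B+4\gamma^2\sigma_G^2\ge0$, and $\lambda\epsilon$ is linear. Nonnegativity of $C$ holds because $C=\Lambda(0)^2$, an integral of a square. Hence $\mathcal K'$ is nondecreasing and continuous. Its minimizers are exactly the $\lambda$ with $\mathcal K'(\lambda)=0$, and the smallest such point is the first $\lambda$ at which $\mathcal K'\ge0$. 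Moreover $\mathcal K'(\lambda)\to\epsilon>0$ as $\lambda\to\infty$, so the set is nonempty. The main subtlety is the case $C=0$, where the term $C/(4\lambda)$ no longer blows up at zero. This case is excluded because $\Lambda(0)>0$ by Assumption~\ref{ass:nonconstant}, so $\mathcal K'(0+)=-\infty$ and the infimum is a genuine interior minimizer. Strict convexity of $C/(4\lambda)$ then gives uniqueness.
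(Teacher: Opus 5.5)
Your proposal is correct and follows the paper's route. Concavity of $g$ makes $h^\lambda$ convex, so $h^\lambda_*=h^\lambda$ and Theorem~\ref{thm:wo_quantile2} specializes directly; your expansion $\Lambda(\lambda)^2=A+4(\gamma+\lambda)B+4(\gamma+\lambda)^2\sigma_G^2$, showing that the dual objective equals $\mathcal K$ up to the constant $B+2\gamma\sigma_G^2$, and your identification of $\lambda^*$ as the unique zero of $\mathcal K'$ fill in steps the paper leaves implicit. One small sharpening: the cancellation you worry about cannot occur, because $\tilde g'$ and $G^{-1}$ are both nondecreasing, so $C=\Lambda(0)^2>0$ follows directly from $\sigma_G>0$ and the non-constancy part of Assumption~\ref{ass:nonconstant} holds automatically here.
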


For the best-case problem \eqref{Prob:inf2}, the binding boundary $W_2(F,G)=\sqrt\epsilon$ gives
\begin{align*}
    &\inf_{\mu_F\in\R}\inf_{F\in\mathcal{M}^2(G;\epsilon)}\ \rho_g^F(X)+\gamma(\E_F[X^2]-\mu_F^2) \\
    =& \inf_{\mu_F\in\R}\inf_{F\in\mathcal{M}^2(G;\epsilon)}\ \left\{\rho_g^F(X)+\gamma\E_F[X^2]\right\}-\mu_F^2 \\
    =& \inf_{\mu_F\in\R}\inf_{F\in\mathcal{M}^2(G;\epsilon)}\left\{\int_0^1F^{-1}(t)d\tilde{g}(t)+2\gamma\int_0^1F^{-1}(t)G^{-1}(t)dt\right\}+\epsilon-\int_0^1(G^{-1}(t))^2dt-\mu_F^2 \\
    =& \inf_{\mu_F\in\R}\left\{-\sup_{F\in\mathcal{M}^2(G;\epsilon)}-\left(\int_0^1F^{-1}(t)d[\tilde{g}(t)+2\gamma\int_0^tG^{-1}(s)ds]\right)\right\}+\epsilon-\int_0^1(G^{-1}(t))^2dt-\mu_F^2 \\
    =&\inf_{\mu_F\in\R}\left\{-\sup_{F\in\mathcal{M}^2(G;\epsilon)}\int_0^1F^{-1}(t)d[-h(t)]\right\}+\epsilon-\int_0^1(G^{-1}(t))^2dt-\mu_F^2.
\end{align*}
Note that the inner problem is to maximize the integral $\int_0^1F^{-1}(t)d[-h(t)]$ with $-h\in\mathcal{G}$. As such, the methodology used in Theorem \ref{thm:wo_quantile2} is still applicable if Assumption \ref{ass:nonconstant} applies to the function $(-h)_*'$. We therefore do not pursue the solution to the problem \eqref{Prob:inf2}, as it closely resembles that of the problem \eqref{Prob:sup2}.

\subsection{Application to RVaR}
For $0\leq\alpha<\beta\leq1$, the RVaR distortion function is
\begin{equation}
    g(t):=
    \min\left\{
        \frac{(t-1+\beta)_+}{\beta-\alpha},1
    \right\},
    \qquad t\in[0,1].
\end{equation}
RVaR averages the loss quantiles between probability levels $\alpha$ and $\beta$. Here $\alpha$ is a probability level, distinct from the ambiguity-attitude parameter in Subsection~\ref{sec:optimal-sharing}. RVaR includes several familiar tail risk measures; for $\beta=1$, it is TVaR at level $\alpha$. Its piecewise-linear distortion allows us to examine the effect of such nonsmoothness on the extremal law.

For this distortion, the function $h^\lambda$ in Theorem~\ref{thm:wo_quantile2} is
\begin{equation}
h^\lambda(t)=\begin{cases}
\displaystyle
2(\gamma+\lambda)\int_0^t G^{-1}(s)\,ds,
& 0\leq t\leq\alpha,\\[2ex]
\displaystyle
\frac{t-\alpha}{\beta-\alpha}+2(\gamma+\lambda)\int_0^t G^{-1}(s)\,ds,
& \alpha<t\leq\beta,\\[2ex]
\displaystyle
1+2(\gamma+\lambda)\int_0^t G^{-1}(s)\,ds,
& \beta<t\leq1.
\end{cases}
\label{eq:h_lambda_RVaR}
\end{equation}
The worst-case quantile is determined by the convex envelope of $h^\lambda$. If $\beta=1$, then $h^\lambda$ is convex on $[0,1]$, so $h^\lambda_*=h^\lambda$ and Corollary~\ref{cor:concave-distortion} applies. We therefore consider $\beta<1$.

The function $h^\lambda$ is convex on each of the intervals $[0,\beta]$ and $[\beta,1]$. However, its derivative has a downward jump at $\beta$:
\[
    (h^\lambda)'(\beta-)=\frac{1}{\beta-\alpha}+2(\gamma+\lambda)G^{-1}(\beta)
    >2(\gamma+\lambda)G^{-1}(\beta)=(h^\lambda)'(\beta+).
\]
Consequently, $h^\lambda$ is generally not convex on $[0,1]$. Its convex envelope is obtained by replacing $h^\lambda$ with a supporting line segment on an interval containing $\beta$. More precisely, there
exist $a\in[\alpha,\beta]$ and $b\in[\beta,1]$, with $a<b$, such that
\begin{equation}
h^{\lambda}_*(t)=
\begin{cases}
h^\lambda(t), & 0\leq t\leq a,\\
h^\lambda(a)+K(t-a), & a<t<b,\\
h^\lambda(t), & b\leq t\leq1,
\end{cases}
\label{eq:convex_envelope_RVaR}
\end{equation}
where
\begin{equation}\label{eq1:slope_K}
    K=\frac{h^\lambda(b)-h^\lambda(a)}{b-a}.
\end{equation}
In this construction, the convexification of $h^\lambda$ is confined to the interval $[a,b]$ containing $\beta$.

Suppose that the contact points are interior, with $a\in(\alpha,\beta)$ and $b\in(\beta,1)$. The supporting line is tangent to $h^\lambda$ at both endpoints, so
\begin{equation}
    K=(h^\lambda)'(a+)=(h^\lambda)'(b-),
\end{equation}
or equivalently,
\begin{equation}
    K=\frac{1}{\beta-\alpha}+2(\gamma+\lambda)G^{-1}(a)=2(\gamma+\lambda)G^{-1}(b).
    \label{eq:tangency_RVaR}
\end{equation}
Together with \eqref{eq1:slope_K}, we obtain the following equations for $a$ and $b$:
\begin{equation}
    G^{-1}(b)-G^{-1}(a)=\frac{1}{2(\gamma+\lambda)(\beta-\alpha)},
    \label{eq:RVaR_ab_1}
\end{equation}
and
\begin{equation}
    2(\gamma+\lambda)\int_a^b\bigl(G^{-1}(s)-G^{-1}(a)\bigr)\,ds=\frac{b-\beta}{\beta-\alpha}.
    \label{eq:RVaR_ab_2}
\end{equation}
In particular, the convexification depends on the reference distribution through
its quantile function and on the parameters $(\gamma,\lambda,\alpha,\beta)$.

An interior solution need not exist. In that case, the supporting line may touch $h^\lambda$ at one of the boundary points $\alpha$ or $1$. There are three possible boundary configurations.

\begin{itemize}
    \item If $a=\alpha$ and $b\in(\beta,1)$, then $b$ solves
    \begin{equation}
        2(\gamma+\lambda)G^{-1}(b)=\frac{h^\lambda(b)-h^\lambda(\alpha)}{b-\alpha},
        \end{equation}
    with the supporting slope satisfying
    \begin{equation}
        G^{-1}(\alpha)\leq G^{-1}(b)\leq G^{-1}(\alpha)+\frac{1}{2(\gamma+\lambda)(\beta-\alpha)}.
    \end{equation}

    \item If $a\in(\alpha,\beta)$ and $b=1$, then $a$ solves
    \begin{equation}
        \frac{1}{\beta-\alpha}+2(\gamma+\lambda)G^{-1}(a)=\frac{h^\lambda(1)-h^\lambda(a)}{1-a},
    \end{equation}
    with
    \begin{equation}
        G^{-1}(a)\geq G^{-1}(1)-\frac{1}{2(\gamma+\lambda)(\beta-\alpha)}.
    \end{equation}

    \item If $a=\alpha$ and $b=1$, then
    \begin{equation}
        K=\frac{h^\lambda(1)-h^\lambda(\alpha)}{1-\alpha},
    \end{equation}
    and the corresponding supporting-slope conditions are
    \begin{equation}
        2(\gamma+\lambda)G^{-1}(1)
        \leq K \leq \frac{1}{\beta-\alpha}
        +2(\gamma+\lambda)G^{-1}(\alpha).
    \end{equation}
\end{itemize}
When $b=1$, the supporting line must lie below $h^\lambda$ to the left of the endpoint, which requires $K\geq(h^\lambda)'(1-)=2(\gamma+\lambda)G^{-1}(1)$. Hence, if $G^{-1}(1)=+\infty$, neither boundary configuration is possible with a finite slope. Moreover, the case $a=b=\beta$ cannot occur, since the left derivative of $h^\lambda$ at $\beta$ is strictly larger than its right derivative, and therefore no supporting line can be tangent to $h^\lambda$ at $\beta$.

The preceding characterization has a useful interpretation. The non-convexity generated by the RVaR distortion is caused entirely by the
downward jump in the slope at $\beta$. Convexification replaces $h^\lambda$ on an interval containing $\beta$ with a line segment, so the derivative of the convex envelope is constant on that interval. Substituting the derivative of this convex envelope into Theorem~\ref{thm:wo_quantile2} gives the worst-case quantile. At the optimal multiplier, the linear segment produces a flat portion of this quantile and hence an atom in the extremal distribution. For RVaR, the effect of the nonsmooth distortion can thus be computed from the contact points and the supporting slope.

\paragraph{Numerical illustration.}
We take $\alpha=0.1$ and $\beta=0.9$, with an exponential reference distribution of mean one. We vary the ambiguity-set parameter $\epsilon$ and the variance-aversion coefficient $\gamma$ separately, keeping the other parameters fixed. The two panels examine the effects of the ambiguity level, governed by $\epsilon$, and the relative weight assigned to variance versus RVaR, governed by $\gamma$, on the quantile functions of the resulting extremal distributions. The dashed curve in each panel is the reference quantile.

\begin{figure}[!htbp]
\centering
\minipage{0.5\textwidth}
 \centering
  \includegraphics[width=\linewidth]{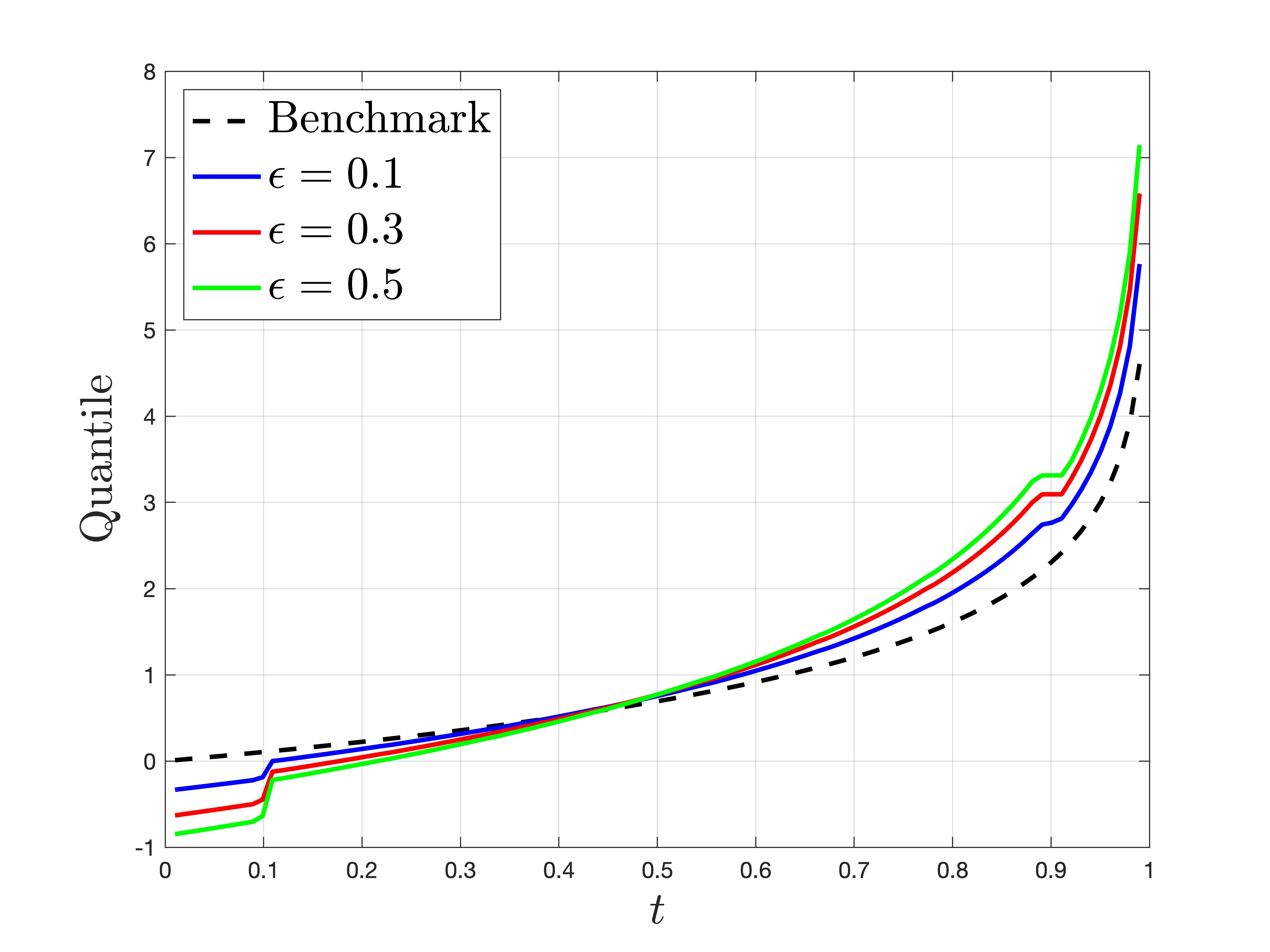}
\endminipage
\minipage{0.5\textwidth}
  \centering
  \includegraphics[width=\linewidth]{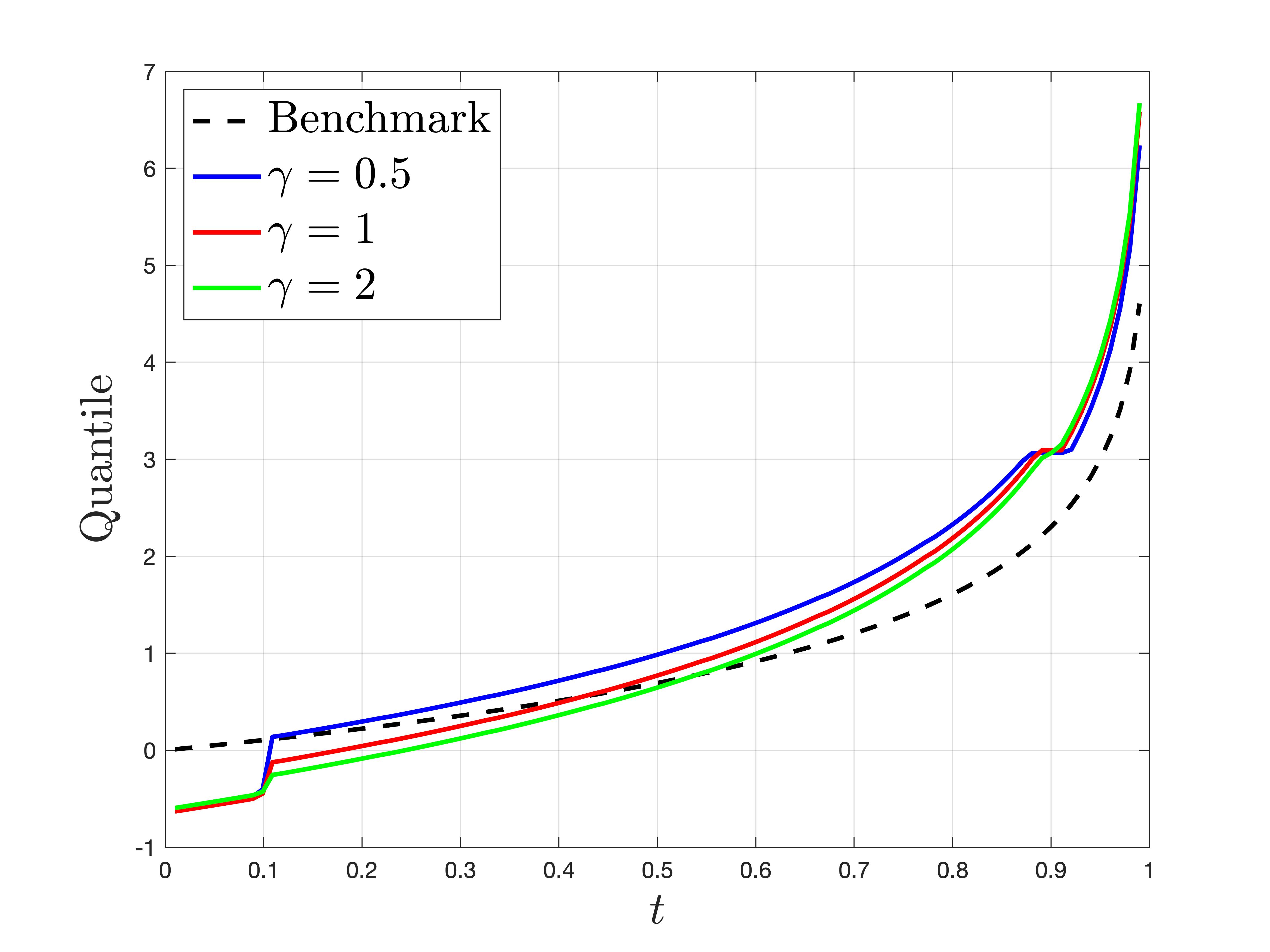}
\endminipage
\caption{{Worst-case quantile functions for different values of $\epsilon$ with $\gamma=1$ (left) and for different variance-aversion coefficients with $\epsilon=0.3$ (right).}}
\label{fig:RVaR_sens}
\end{figure}

The left panel shows the effect of the ambiguity radius. As $\epsilon$ increases, the worst-case distribution can move farther from the reference distribution. This leads to lower quantiles over part of the lower and central range and substantially higher quantiles in the upper tail. The quantile curves therefore cross rather than being ordered pointwise. This is not inconsistent with the fact that the worst-case objective is nondecreasing in $\epsilon$: enlarging the ambiguity set increases the maximum value, but it need not increase every quantile of the maximizing distribution. The negative lower quantiles are also admissible because the Wasserstein ball does not impose the nonnegative support of the exponential reference distribution; under our sign convention, negative losses correspond to gains.

The right panel isolates the effect of variance aversion by fixing $\epsilon=0.3$. Increasing $\gamma$ gives greater weight to variance relative to the RVaR component. For the parameter values considered here, this lowers the quantiles over much of the distribution while increasing the far upper tail. Thus, greater variance aversion does not simply shift the worst-case distribution upward. Instead, it changes its shape by allowing larger extreme losses while reducing losses over a substantial part of the probability range. This reflects the fact that the variance term depends on the entire distribution.

The figure also illustrates the effect of the nonsmooth RVaR distortion. The upward jump in the worst-case quantile near $\alpha$ is a consequence of the change in the slope of the RVaR distortion at $\alpha$. The flat portion near $\beta$, on the other hand, results from the linear segment introduced by the convexification and corresponds to an atom in the extremal distribution.

\section{\texorpdfstring{Conclusion}{Conclusion}}\label{sec:conclusion}
We characterize the worst- and best-case mean-variance values over a 2-Wasserstein ball and show that the extremal laws are location-scale transformations of the reference distribution. The extremal values depend only on the reference mean and variance, while the corresponding laws preserve the distributional shape. These characterizations also yield comparative statics with respect to variance aversion. We then apply the results to risk sharing. Under heterogeneous beliefs, proportional sharing reduces to a finite-dimensional dual problem involving agents' reference moments, ambiguity radii, and variance-aversion coefficients. Under homogeneous beliefs, the classical proportional allocation remains optimal for every candidate distribution and is therefore independent of both the ambiguity radius and the planner's ambiguity attitude, although the associated extremal costs vary.

We further extend the analysis to a coupled distortion-variance functional. In this setting, the extremal law generally depends on features beyond the first two moments and need not preserve the shape of the reference distribution. Under suitable conditions, we characterize the worst-case quantile through a convex-envelope construction. The RVaR example illustrates how distortion, Wasserstein ambiguity, and variance aversion jointly reshape the extremal distribution.
 
Several questions remain open. In particular, extending the location–scale characterization to general $p$-Wasserstein distances and obtaining a complete characterization of the best-case distortion–variance problem are natural directions for future work. Second, the present analysis is based on the first two moments of the loss. Incorporating higher-order distributional features into the decision criterion could provide a richer description of the effects of distributional uncertainty. Another important question is how to calibrate the Wasserstein radius from finite data and quantify the resulting statistical uncertainty in the evaluated costs.

\section*{Acknowledgements}
The authors thank Taizhong Hu and Tiantian Mao for valuable comments and suggestions. Wenjun Jiang acknowledges financial support from the Natural Sciences and Engineering Research Council of Canada (grant RGPIN-2020-04204), Alberta Innovates, and the Canadian Institute of Actuaries.
Yiying Zhang acknowledges support from the National Natural Science Foundation of China (grant 12571513) and the Shenzhen Science and Technology Program (grants JCYJ20230807093312026 and JCYJ20250604144210013).
Zhenfeng Zou acknowledges support from the National Natural Science Foundation of China (grant 12401625) and the Fundamental Research Funds for the Central Universities (grant WK2040000108).

\setlength{\bibsep}{0.0pt}
\bibliography{Robust_reference}

\appendix
\setcounter{equation}{0}
\renewcommand\theequation{A.\arabic{equation}}
\renewcommand\thedefinition{A.\arabic{definition}}
\renewcommand\thelemma{A.\arabic{lemma}}

\section{\texorpdfstring{{Auxiliary results}}{Auxiliary results}}\label{sec:A}

This section presents lemmas and theorems that support the results in the main text. Let ${\cal L}^0$ denote the space of all real-valued random variables on the probability space $(\Omega, {\cal F}, \Prob)$.

\begin{lemma}{\rm \citep[][Lemma 1]{wang2020distortion}}\label{lem:quantile-formulation}
    For $g\in\mathcal{G}$ and $X\in {\cal L}^0$ such that $\rho_g^F(X)$ is well defined, the following hold:
    \begin{itemize}
        \item[(\rmnum{1}).] if $g$ is right continuous, then $\rho_g^F(X)=\int_0^1F^{-1+}(1-t)dg(t)$;
        \item[(\rmnum{2}).] if $g$ is left continuous, then $\rho_g^F(X)=\int_0^1F^{-1}(1-t)dg(t)$;
        \item[(\rmnum{3}).] if $F^{-1}$ is continuous on $(0,1)$, then $\rho_g^F(X)=\int_0^1F^{-1}(1-t)dg(t)=\int_0^1F^{-1+}(1-t)dg(t)$.
    \end{itemize}
\end{lemma}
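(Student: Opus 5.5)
This lemma is quoted from \citet[Lemma 1]{wang2020distortion}. I would reprove it by converting the layer-cake integral \eqref{eq:DRM} into a Stieltjes integral against quantiles, using integration by parts (Fubini).

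\emph{Step 1: the Fubini identity.} Write $S(z)=\Prob(X>z)=1-F(z)$. Since $g$ has bounded variation and $g(0)=0$, we have $g(S(z))=\int_{(0,S(z)]}dg(t)$. Likewise $g(S(z))-g(1)=-\int_{(S(z),1]}dg(t)$. Interchanging the order of integration over $z$ and $t$ gives
\[
\rho_g(X)=\int_{(0,1]}\left(\int_0^\infty \1_{\{S(z)\ge t\}}dz-\int_{-\infty}^0\1_{\{S(z)<t\}}dz\right)dg(t).
\]
The interchange is justified by splitting $dg$ into its positive and negative variation parts and using the assumption that $\rho_g$ is well defined, so that each piece is finite.

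\emph{Step 2: identify the inner integral as a quantile.} For fixed $t$, the set $\{z: S(z)\ge t\}$ equals $\{z: F(z)\le 1-t\}$. This set is an interval $(-\infty,q)$ or $(-\infty,q]$, where
\[
q=\inf\{z:F(z)>1-t\}=F^{-1+}(1-t).
\]
The inner bracket is therefore exactly $q$, whether $q$ is positive or negative. Similarly, with the strict inequality $S(z)>t$ one obtains $F^{-1}(1-t)$. The two quantiles differ only at the at most countably many $t$ for which $1-t$ is a flat level of $F$.

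\emph{Step 3: the three cases.} The key point is which quantile appears depends on the measurability convention linking $\{S\ge t\}$ with the half-open interval $(0,S(z)]$ versus $[0,S(z))$.
\begin{itemize}
\item[(\rmnum{1})] If $g$ is right-continuous, the Stieltjes measure satisfies $g(s)=\mu_g((0,s])$, which yields $F^{-1+}(1-t)$.
\item[(\rmnum{2})] If $g$ is left-continuous, we use $g(s)=\mu_g([0,s))$; the inequality becomes strict, which yields $F^{-1}(1-t)$.
\item[(\rmnum{3})] If $F^{-1}$ is continuous on $(0,1)$, then $F^{-1}=F^{-1+}$ everywhere on $(0,1)$, so the two integrals coincide for any $g$. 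The endpoints carry no mass because $g$ is continuous there, or because the endpoint conventions fix them.
\end{itemize}

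\emph{Main obstacle.} The main difficulty is bookkeeping at atoms of $dg$ that coincide with the flat parts of $F$. This matters only when $g$ is discontinuous. In the paper's class $\mathcal G$, $g$ is continuous, so $dg$ has no atoms and all versions coincide. The careful case analysis is needed only to state the lemma in full generality.
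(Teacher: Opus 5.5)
The paper does not prove this lemma. It is quoted from \citet[Lemma 1]{wang2020distortion} and used as a black box in Lemma~\ref{lem:binding2} and the examples of Section~\ref{sec:extension}, so there is no in-paper argument to compare with, and your proposal has to stand on its own. Its core is the standard layer-cake/Fubini argument, and that core is correct. The sets $\{z:S(z)\ge t\}$ and $\{z:S(z)>t\}$ are half-lines ending at $F^{-1+}(1-t)$ and $F^{-1}(1-t)$, and the signed bracket returns that endpoint whatever its sign. The representations $g(s)=\mu_g((0,s])$ versus $g(s)=\mu_g([0,s))$ are exactly what right- versus left-continuity provide. You are also right that for the paper's class $\mathcal G$ (continuous $g$) items (i)--(iii) collapse into one statement, since $dg$ is atomless and the two quantile functions differ only on a countable set. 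Two steps, however, need repair.

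\textbf{The interchange of integrals.} Your justification is not valid. Well-definedness (even finiteness) of $\rho_g$ does not make the pieces coming from the positive and negative variations of $g$ finite. The layer-cake integral can converge through cancellation while both variation parts diverge: take $g\ge 0$ a sum of narrow bumps of heights $k^{-2}$ accumulating at $0$, and $\Prob(X>z)=1/z$ for $z\ge 1$. The correct condition comes from Tonelli. The $z$-integral of the absolute value of your integrand, $\1_{\{z\ge0,\,S(z)\ge t\}}+\1_{\{z<0,\,S(z)<t\}}$, equals $|F^{-1+}(1-t)|$. So the interchange is legitimate precisely when $\int_{(0,1]}|F^{-1+}(1-t)|\,|dg|(t)<\infty$, with $F^{-1}$ in place of $F^{-1+}$ in the left-continuous case and $|dg|$ the total variation measure. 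This is also exactly what makes the right-hand side a Lebesgue--Stieltjes integral, so it is the reading of ``well defined'' under which the lemma holds. For increasing $g$, Tonelli alone suffices. For signed $g$ and unbounded $X$ it is a genuine hypothesis you must state; that is the situation of Section~\ref{sec:extension}, where $X\in\mathcal L^2$ and $g$ need not be monotone.

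\textbf{Item (iii).} You show that the two quantile integrals coincide, but not that they equal $\rho_g$ when $g$ is neither left- nor right-continuous. That is the only case in which (iii) adds anything to (i)--(ii). The missing observation is that continuity of $F^{-1}$ on $(0,1)$ means $F$ has no flat stretch at any level in $(0,1)$. Hence $\{z:S(z)=s\}$ has at most one point for every $s\in(0,1)$. Redefining $g(s):=g(s+)$ at its countably many interior discontinuities then changes neither quantity involved:
\begin{itemize}
\item the layer-cake integral, since its integrand changes only on a Lebesgue-null set of $z$;
\item the measure $dg$, which depends only on the one-sided limits of $g$.
\end{itemize}
Your Fubini computation therefore applies to the modified function. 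Possible atoms of $dg$ at $t=0$ and $t=1$ are weighted by $\esssup X$ and $\essinf X$, where the paper's conventions make $F^{-1}$ and $F^{-1+}$ agree. Within $\mathcal G$ this step is vacuous. So, for the way the lemma is used in the paper, your argument suffices once the integrability hypothesis is made explicit.
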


\begin{lemma}{\rm \citep[][Lemma B.1]{he2017rank}}\label{lem:envelope}
    Let $f:[0,1]\to\R$ be continuous. Its convex envelope $f_*$ satisfies the following properties:
    \begin{itemize}
        \item[(\rmnum{1}).] $f_*$ is continuous on $[0,1]$;
        \item[(\rmnum{2}).] $f_*$ is affine on $\{x\in[0,1]:\ f_*(x)<f(x)\}$;
        \item[(\rmnum{3}).] $f_*(0)=f(0)$ and $f_*(1)=f(1)$.
    \end{itemize}
\end{lemma}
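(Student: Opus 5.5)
The proof uses only two facts: $f_*$ is itself convex, being a pointwise supremum of convex functions, and every continuous convex minorant of $f$ lies below $f_*$. Since $f$ is continuous on the compact set $[0,1]$, it is bounded below, say by $m$. In each part I build an explicit convex minorant of $f$ and compare it with $f_*$. I treat the properties in the order (iii), (i), (ii), because (iii) produces the minorants needed for (i), and (ii) uses both.

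For (iii), fix $\eta>0$. By continuity of $f$ at $0$, choose $\delta>0$ with $f(x)\ge f(0)-\eta$ on $[0,\delta]$. Set
\[
k_\eta(x)=\max\{f(0)-\eta-Cx,\ m\}, \qquad C=(f(0)-\eta-m)^+/\delta .
\]
This is convex and continuous, and $k_\eta\le f$: on $[0,\delta]$ this follows from the choice of $\delta$, and on $[\delta,1]$ because $k_\eta\equiv m$ there. Hence $f(0)-\eta=k_\eta(0)\le f_*(0)\le f(0)$, and letting $\eta\downarrow0$ gives $f_*(0)=f(0)$. The mirror construction gives $f_*(1)=f(1)$.

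For (i), $f_*$ is convex, hence continuous on $(0,1)$, so only the endpoints need care. Convexity gives the upper bound
\[
f_*(x)\le (1-x)f_*(0)+xf_*(1),
\]
so $\limsup_{x\downarrow0}f_*(x)\le f_*(0)$. For the lower bound, $f_*\ge k_\eta$ and $k_\eta$ is continuous at $0$ with $k_\eta(0)=f(0)-\eta$. Therefore $\liminf_{x\downarrow0}f_*(x)\ge f(0)-\eta$ for every $\eta>0$. Combined with (iii), this gives continuity at $0$, and the same argument works at $1$.

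For (ii), which I expect to be the main step, the set $O:=\{f_*<f\}$ is open in $[0,1]$ because $f-f_*$ is continuous by (i). By (iii), $O$ does not contain the endpoints, so $O\subset(0,1)$. It suffices to show that $f_*$ is affine on a neighbourhood of each $x_0\in O$; then $f_*$ is affine on each connected component of $O$, which is what (ii) means.

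Fix $x_0\in O$. Choose an open interval $I\ni x_0$ with $\bar I\subset O$ and $f-f_*\ge\delta>0$ on $\bar I$. Let $\ell$ be a supporting line of $f_*$ at $x_0$, and let $J=\{f_*\le\ell\}=\{f_*=\ell\}$, which is a closed interval containing $x_0$. I claim that $J$ is not contained in $I$. Suppose it were. By convexity and continuity of $f_*$, together with compactness of $[0,1]\setminus I$, the difference $f_*-\ell$ is bounded below by some $\kappa>0$ off $I$. Take $0<\eta<\min(\kappa,\delta)$ and set $k=\max(f_*,\ell+\eta)$. Then:
\begin{itemize}
\item $k$ is convex;
\item off $I$ we have $k=f_*\le f$;
\item on $I$ we have $k\le\max\{f_*,\ f_*+\eta\}\le f$, using $\ell\le f_*$ and $\eta<\delta$;
\item $k(x_0)=f_*(x_0)+\eta>f_*(x_0)$.
\end{itemize}
This contradicts the maximality of $f_*$. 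Hence $J$ meets the complement of $I$ on at least one side of $x_0$.

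Applying the same argument to a smaller interval $I'\ni x_0$ whose closure is contained in $I$ rules out the configuration in which $J$ leaves $I$ on one side only. More precisely, I would redo the construction with the bump supported on the side where $J$ ends inside $I$. I therefore expect the conclusion $J\supseteq I'$, so $f_*=\ell$ is affine near $x_0$.

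The delicate point is this one-sided case. If $J$ ends at some point $c$ inside $I$, the bump $\max(f_*,\ell_c+\eta)$ has to be built from a supporting line $\ell_c$ at $c$ that is tilted so as to separate $f_*$ from $\ell_c$ outside a compact subset of $I$. Making that tilt precise is the part I would check most carefully.
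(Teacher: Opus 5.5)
The paper gives no proof of this lemma; it is quoted from \citet{he2017rank}, so I am judging your argument on its own. Parts (iii) and (i) are correct. There is a harmless slip: $k_\eta(0)=\max\{f(0)-\eta,m\}$, which is only $\ge f(0)-\eta$, but that inequality is all you use.

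The gap is in (ii), and you flag it yourself. Write $I=(p,q)$ and $J=[c,d]$. Your bump argument correctly shows $J\not\subset I$. That only gives $f_*=\ell$ on one side of $x_0$. The remark about a smaller interval $I'$ adds nothing, because $J\not\subset I$ already implies $J\not\subset I'$.

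The one-sided case, say $p<c$ and $d\ge q$, is where the real work lies, and the tilt you sketch does not yet go through. Typically $f_*$ is differentiable at $c$, for instance strictly convex to the left of $c$ and affine to the right. Then $\ell$ is the \emph{only} supporting line at $c$, so no tilted line can remain a supporting line. If you drop that requirement, your third-bullet check, which relied on $\ell\le f_*$ on $I$, no longer applies.

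The missing idea is that the bump line need not be a minorant of $f_*$ on $I$. The slack $f-f_*\ge\delta$ on $\bar I$ absorbs a small violation. Set $\ell_c(x)=\ell(x)-\tau(x-c)$ and $\kappa_1:=\min_{[0,p]}(f_*-\ell)$, which is positive because $c>p$. Then:
\begin{itemize}
\item on $[q,1]$, $\ell_c\le f_*-\tau(q-c)$;
\item on $[0,p]$, $\ell_c\le f_*-\kappa_1+\tau c$;
\item on $I$, $\ell_c\le f_*+\tau(c-p)$.
\end{itemize}
Choose $\tau$ with $\tau c<\kappa_1/2$ and $\tau(c-p)<\delta/2$, and then $0<\eta<\min\{\tau(q-c),\kappa_1/2,\delta/2\}$. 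The function $\max(f_*,\ell_c+\eta)$ is then a continuous convex minorant of $f$. It equals $f_*$ off $I$ and equals $f_*(c)+\eta$ at $c$, which contradicts maximality. Hence $c\le p$, and symmetrically $d\ge q$. So $J\supseteq\bar I$ and $f_*=\ell$ on $I$. This argument also handles the two-sided case directly, so your first step becomes unnecessary.

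A cleaner alternative avoids supporting lines altogether. On any compact $[a,b]\subset O$, let $L$ be the chord of $f_*$ over $[a,b]$, and replace $f_*$ on $[a,b]$ by $(1-t)f_*+tL$. The one-sided slopes at $a$ and $b$ stay compatible, so the result is convex. It stays below $f$ for small $t$, and maximality then forces $f_*=L$ on $[a,b]$. Until one of these arguments is supplied, (ii) is not proved.
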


\begin{theorem}{\rm \citep[][Theorem 1]{moriguti1953modification}}\label{thm:modified}
    Let $\Phi$ be of bounded variation on $[a,b]$ and continuous at both endpoints. Then
    \begin{equation}\label{ineq:modified}
        \int_a^bx(t)d\Phi(t)\le \int_a^bx(t)\Tilde{\varphi}(t)dt
    \end{equation}
    for every increasing function $x$ for which both integrals are finite, where $\Tilde{\varphi}$ is the right derivative of the convex envelope $\Bar\Phi$ of $\Phi$. Equality in \eqref{ineq:modified} holds if and only if $x$ is constant on every interval on which $\Bar\Phi(t)<\Phi(t-)\wedge\Phi(t+)$ and, at every discontinuity $t_n$ of $\Phi$,
    \begin{equation*}
        x(t_n)=\begin{cases}
            x(t_n+),& \text{if }\Phi(t_n-)<\Phi(t_n+), \\
            x(t_n-),& \text{if }\Phi(t_n-)>\Phi(t_n+).
        \end{cases}
    \end{equation*}
\end{theorem}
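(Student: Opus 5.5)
The statement is Moriguti's modification inequality (Theorem~\ref{thm:modified}). My plan is to reduce the inequality to an integration by parts combined with the pointwise ordering $\Bar\Phi\le\Phi$ and the endpoint identities $\Bar\Phi(a)=\Phi(a)$, $\Bar\Phi(b)=\Phi(b)$. These identities follow from Lemma~\ref{lem:envelope}(\rmnum{3}), applied after modifying $\Phi$ at its jumps; continuity of $\Phi$ at $a$ and $b$ is what makes this legitimate.

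First I will treat the case where $x$ is bounded and increasing on $[a,b]$, and extend to general $x$ by monotone truncation at the end. Since $\Bar\Phi$ is convex with right derivative $\Tilde\varphi$, I have $\int_a^b x\Tilde\varphi\,dt=\int_a^b x\,d\Bar\Phi$. Then I take $\Psi:=\Phi-\Bar\Phi$, which is of bounded variation, nonnegative wherever $\Phi$ is lower semicontinuous, and vanishes at $a$ and $b$. The claim is equivalent to $\int_a^b x\,d\Psi\le 0$. Integrating by parts for Lebesgue--Stieltjes integrals gives
\[
\int_a^b x\,d\Psi = x(b)\Psi(b)-x(a)\Psi(a)-\int_a^b \Psi\,dx
\]
up to jump corrections at common discontinuities. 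The boundary terms vanish, and $-\int\Psi\,dx\le0$ because $dx\ge0$ and $\Psi\ge0$ (using $\Psi(t\pm)\ge0$). Hence the inequality holds.

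The main obstacle is the bookkeeping at jumps. At a common discontinuity $t_n$ of $x$ and $\Phi$, the integration-by-parts formula picks up a term of the form $(x(t_n)-x(t_n\pm))\Delta\Phi(t_n)$. Its sign depends on which one-sided value $x(t_n)$ takes and on the sign of the jump of $\Phi$. I will write the formula with $\Psi(t-)$ and $\Psi(t+)$ split explicitly. The correction terms are $\le0$ exactly when $x(t_n)$ takes the side stated in the theorem; otherwise they have a sign favourable to the inequality, which is why the inequality still holds in general. The relevant nonnegativity is of $\Phi(t-)\wedge\Phi(t+)-\Bar\Phi(t)$.

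For the equality characterization, equality forces $\int(\Phi(t\pm)\wedge\cdot-\Bar\Phi)\,dx=0$ and all jump corrections to vanish. The first condition means $dx$ puts no mass on the open set $\{\Bar\Phi<\Phi(t-)\wedge\Phi(t+)\}$, so $x$ is constant on each of its component intervals. The second gives the stated one-sided condition at the discontinuities. Conversely, under these conditions every term vanishes, which gives equality. Finally, I will pass to unbounded $x$ with finite integrals by truncating $x$ to $(x\vee -N)\wedge N$, which is again increasing, and applying monotone/dominated convergence to both sides.
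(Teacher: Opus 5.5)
The paper does not prove this statement; it imports it from \citet{moriguti1953modification}. Your route is the classical argument for it: write $\int_a^b x\,d\Phi-\int_a^b x\Tilde\varphi\,dt=\int_a^b x\,d\Psi$ with $\Psi=\Phi-\Bar\Phi$, and integrate by parts against a function that is nonnegative and vanishes at $a$ and $b$. The approach is sound, and unlike the bare citation it shows where endpoint continuity and the one-sided jump conditions are used. Two steps, however, are not right as you describe them and need to be replaced.

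\textbf{Endpoint identities.} These do not follow from Lemma~\ref{lem:envelope}(\rmnum{3}). That lemma assumes $f$ continuous, and no modification of $\Phi$ at its jump points removes genuine jumps. Argue directly instead. Suppose $\Phi\ge -C$ on $[a,b]$ and $\Phi\ge\Phi(a)-\delta$ on $[a,a+\eta]$. Then the affine function $t\mapsto\Phi(a)-\delta-\frac{\Phi(a)+C}{\eta}(t-a)$ lies below $\Phi$, hence below $\Bar\Phi$. Combined with $\Bar\Phi\le\Phi$, this gives $\Bar\Phi(a)=\Bar\Phi(a+)=\Phi(a)$, and likewise at $b$. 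You need the continuity of $\Bar\Phi$ at the endpoints, not just its values there. A convex function may jump upward at an endpoint, and such a jump would be an atom of $d\Bar\Phi$ that $\Tilde\varphi\,dt$ does not see, breaking $\int x\Tilde\varphi\,dt=\int x\,d\Bar\Phi$.

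\textbf{Jump bookkeeping.} Suppose you keep point values in $-\int\Psi\,dx$. The leftover correction at a common jump $t_n$ is then $(\Psi(t_n)-\Psi(t_n+))(x(t_n+)-x(t_n))+(\Psi(t_n)-\Psi(t_n-))(x(t_n)-x(t_n-))$. This can be strictly positive: for example, if $\Phi$ is right-continuous at an upward jump, it equals $(\Phi(t_n+)-\Phi(t_n-))(x(t_n)-x(t_n-))\ge 0$. So it is not ``favourable''. The clean form uses $\nu=dx$ with diffuse part $\nu^c$, the decomposition $x(t)=x(a)+\nu([a,t))+(x(t)-x(t-))$, Fubini, and $\Psi(a\pm)=\Psi(b\pm)=0$:
\[
\int_a^b x\,d\Psi=-\int\Psi\,d\nu^{c}-\sum_{t\in(a,b)}\Bigl[\Psi(t+)\bigl(x(t+)-x(t)\bigr)+\Psi(t-)\bigl(x(t)-x(t-)\bigr)\Bigr].
\]
Since $\Bar\Phi$ is continuous on $(a,b)$ and $\Bar\Phi\le\Phi$, we have $\Psi(t\pm)=\Phi(t\pm)-\Bar\Phi(t)\ge0$. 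Hence every term is $\le0$ for every increasing $x$, and the inequality needs no case analysis.

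Equality holds if and only if every term vanishes. Let $O=\{\Bar\Phi<\Phi(t-)\wedge\Phi(t+)\}$, which is open because $t\mapsto\Phi(t-)\wedge\Phi(t+)$ is lower semicontinuous. On $O$ both $\Psi(t\pm)$ are positive, which forces $\nu(O)=0$. At an upward jump, $\Psi(t_n+)>\Psi(t_n-)\ge0$ forces $x(t_n)=x(t_n+)$; downward jumps are symmetric. Conversely, off $O$ we have $\Psi(t-)\wedge\Psi(t+)=0$, so the stated conditions annihilate every remaining term.

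\textbf{Unbounded $x$.} For the truncations, the bracket is nonnegative and increases to the bracket for $x$ by monotone convergence. The left side converges by dominated convergence. So the identity, and with it the equality characterization, survives the limit.
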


\begin{theorem}{\rm \citep[][Theorem 3.4]{sion1958general}}\label{thm:sion}
    Let $M$ be a compact convex set and $N$ a convex set. Suppose that, for each fixed $x\in M$, $f(x,\cdot)$ is upper semicontinuous and quasiconcave on $N$, and that, for each fixed $y\in N$, $f(\cdot,y)$ is lower semicontinuous and quasiconvex on $M$. Then the following minimax equality holds:
    \begin{equation*}
        \inf_{x \in M} \sup_{y \in N}\ f(x, y) = \sup_{y \in N} \inf_{x \in M}\ f(x, y).
    \end{equation*}
\end{theorem}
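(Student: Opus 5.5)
The inequality $\sup_{y\in N}\inf_{x\in M} f(x,y)\le \inf_{x\in M}\sup_{y\in N} f(x,y)$ is immediate, since $\inf_{x} f(x,y)\le f(x',y)\le \sup_{y}f(x',y)$ for every $x'\in M$ and $y\in N$. The substance is the reverse inequality. I would follow the elementary route of Komiya rather than Sion's original KKM-type argument. Suppose, for contradiction, that there is a real $\alpha$ with $\sup_{y}\inf_{x} f(x,y)<\alpha<\inf_{x}\sup_{y} f(x,y)$. For each $y\in N$, set $M_y:=\{x\in M: f(x,y)\le\alpha\}$. By lower semicontinuity and quasiconvexity of $f(\cdot,y)$, each $M_y$ is a closed convex subset of the compact set $M$. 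The choice of $\alpha$ means two things:
\begin{itemize}
\item $\bigcap_{y\in N}M_y=\emptyset$;
\item every finite intersection $M_{y_1}\cap\dots\cap M_{y_k}$ must then be nonempty, for otherwise we get a contradiction with the lemma below.
\end{itemize}
The finite intersection property in the compact set $M$ then gives the contradiction.

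The heart of the proof is a two-point lemma. If $y_1,y_2\in N$ and $\beta\in\R$ satisfy $\beta<\min_{x\in M}\max\{f(x,y_1),f(x,y_2)\}$, then there is $y_0\in N$ with $\beta<\min_{x\in M} f(x,y_0)$. (The minima exist by lower semicontinuity on compact $M$.) I would argue by contradiction. Pick $\beta'$ strictly between $\beta$ and the left-hand minimum. Assume that for every $z$ on the segment $[y_1,y_2]\subset N$ the set $A_z:=\{x: f(x,z)\le\beta'\}$ is nonempty; these sets are closed and convex. Quasiconcavity of $f(x,\cdot)$ gives $A_z\subseteq A_{y_1}\cup A_{y_2}$, and the choice of $\beta'$ makes $A_{y_1}\cap A_{y_2}=\emptyset$. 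Since $A_z$ is convex, hence connected, it lies entirely in exactly one of $A_{y_1}$ or $A_{y_2}$. This splits $[y_1,y_2]$ into two disjoint nonempty sets $I=\{z: A_z\subseteq A_{y_1}\}$ and $J=\{z:A_z\subseteq A_{y_2}\}$.

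Upper semicontinuity of $f(x,\cdot)$ is then used, with the slack between $\beta$ and $\beta'$, to show that $I$ and $J$ are both open in the segment. This contradicts the connectedness of the segment. I expect this openness step to be the main obstacle, because the strict/non-strict level set bookkeeping must be handled carefully. The trick is to work with the strict sublevel sets $\{x: f(x,z)<\beta'\}$ at some points and the closed ones $\{x: f(x,z)\le\beta\}$ at others, and to use the gap $\beta<\beta'$.

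Next, I would extend the lemma by induction on $k$: if $\beta<\min_x\max_{i\le k}f(x,y_i)$, there is $y_0$ with $\beta<\min_x f(x,y_0)$. The inductive step applies the two-point lemma to the restriction of $f$ to the compact convex set $M':=\{x: f(x,y_k)\le\beta\}$, with the points $y_1,\dots,y_{k-1}$ handled by induction. Applied at $\beta=\alpha$, the $k$-point lemma shows that every finite intersection of the $M_y$ is nonempty. Otherwise some $y_0$ would have $\inf_x f(x,y_0)>\alpha$, contradicting $\sup_y\inf_x f<\alpha$. Compactness of $M$ then forces $\bigcap_y M_y\neq\emptyset$, contradicting the first bullet, and the equality follows.
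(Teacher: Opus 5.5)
The paper gives no proof of this theorem. It is quoted from \citet{sion1958general} and used as a black box: to swap $\inf_x$ and $\sup_F$ in the proof of Theorem~\ref{thm:wo_quantile}, and $\inf_x$ and $\sup_\lambda$ in Theorem~\ref{thm:be_quantile}. Sion's own argument goes through the Knaster--Kuratowski--Mazurkiewicz intersection theorem, which is equivalent to Brouwer's fixed-point theorem. You follow Komiya's (1988) elementary proof instead. It uses only the connectedness of segments and of convex sublevel sets, together with the finite intersection property in the compact set $M$. The connectedness comes from $M$ and $N$ sitting in topological vector spaces. This route is correct and fully self-contained, and it avoids any fixed-point machinery. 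What the KKM route offers in return is generality: it is the tool that extends to Ky Fan-type minimax inequalities and related intersection theorems. For the two-variable statement used here, your route is the more economical one.

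Two steps in your sketch are misstated and should be fixed.

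\textbf{The contradiction hypothesis in the two-point lemma.} It must be the full negation of the conclusion: $\min_{x\in M} f(x,z)\le\beta$ for every $z\in N$. Nonemptiness of $A_z=\{x: f(x,z)\le\beta'\}$ is not enough. The openness step needs, for $z\in I$, a point $x$ with $f(x,z)\le\beta<\beta'$. Upper semicontinuity of $f(x,\cdot)$ then gives $f(x,z')<\beta'$ for $z'$ near $z$ on the segment. Hence $x\in A_{z'}\cap A_{y_1}$, and connectedness of $A_{z'}$ forces $A_{z'}\subseteq A_{y_1}$. With the correct hypothesis, your ``main obstacle'' takes two lines.

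\textbf{The inductive step.} Let $M'=\{x\in M: f(x,y_k)\le\beta\}$. If $M'=\emptyset$, then $y_0=y_k$ already works. Otherwise, apply the inductive hypothesis on $M'$ to $y_1,\dots,y_{k-1}$ to obtain $y_0'$ with $\beta<\min_{x\in M'}f(x,y_0')$. It follows that $\beta<\min_{x\in M}\max\{f(x,y_0'),f(x,y_k)\}$, with the minimum attained by lower semicontinuity. The two-point lemma is then applied on the whole of $M$ to the pair $(y_0',y_k)$, not on $M'$ as you wrote.
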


\section{Extension to higher dimensions}\label{sec:high_dim}
\setcounter{equation}{0}
\renewcommand\theequation{B.\arabic{equation}}

This appendix extends the one-dimensional results established in Theorems \ref{thm:wo_quantile} and \ref{thm:be_quantile} to high-dimensional settings. 
This extension is of practical relevance, as many decision-making problems, such as portfolio selection, involve multiple risk factors rather than a single loss variable.
Specifically, we consider a DM who is concerned with a linear combination $\bm w^\top \bm X$ of an $n$-dimensional random vector $\bm X =(X_1,\dots,X_n)^\top \in (\mathcal{L}^2)^n$, where $\bm w \in \R^n$ is a vector of coefficients.
In portfolio selection, for instance, $\bm w^\top$ is often restricted $\Delta_n$, though no such constraints are imposed in this subsection unless explicitly stated.

To proceed, we introduce the necessary notation for multivariate distributions.
Let $\bm X \in (\mathcal{L}^2)^n$ have joint distribution $F_{\bm X}$, and let $\bm Y \in (\mathcal{L}^2)^n$ be a benchmark random vector with joint distribution $F_{\bm Y}$. 
The $2$-Wasserstein distance between two multivariate distributions $F_{\bm X}$ and $F_{\bm Y}$ is defined as
\begin{equation*}
W_2^n(F_{\bm X}, F_{\bm Y}) = \inf_{\bm X \sim F_{\bm X}, \bm Y \sim F_{\bm Y}} \left(\E\left[\left\|\bm X - \bm Y \right\|_2^2\right]\right)^{1/2},
\end{equation*}
where $\|\cdot\|_2$ denotes the Euclidean norm on $\mathbb R^n$. 
This definition reduces to the univariate case \eqref{Wass} when $n=1$.

To simplify the analysis, we assume that the mean vector and covariance matrix of the reference random vector $\bm Y$ are known, denoted respectively by 
$$\E(\bm Y) = \bm \mu = (\mu_1, \ldots, \mu_n)^\top, ~~~ {\rm Cov}(\bm Y) = \Sigma,$$ 
where $\Sigma$ is assumed to be positive definite. 
This positive definiteness is a standard non-degeneracy condition in multivariate analysis, ensuring that no linear combination of the components is degenerate.

The multivariate Wasserstein ball of radius $\sqrt{\epsilon}$ centered at the reference distribution $F_{\bm Y}$ is then defined as
$$
{\cal M}^2(F_{\bm Y}; \epsilon) := \left\{F_{\bm X} \in (\mathcal{F}^2)^n: W_2^n(F_{\bm X}, F_{\bm Y}) \leq \sqrt{\epsilon}\right\}.
$$
The key technical tool that enables this extension is the projection property of the $2$-Wasserstein distance: for any fixed $\bm w \in \R^n$, the CDF of the linear combination $\bm w^\top \bm X$ lies in a one-dimensional Wasserstein ball centered at $\bm w^\top \bm Y$, with radius scaled by $\sqrt{\epsilon} \|\bm w\|_2$.
Formally,
\begin{equation}\label{eq-2026-314}
{\cal M}^2(F_{\bm Y}; \epsilon) = \left\{F \in (\mathcal{F}^2)^n: W_2^n(F, F_{\bm w^\top \bm Y}) \leq \sqrt{\epsilon} \left\|\bm w \right\|_2\right\} = {\cal M}^2\left(F_{\bm w^\top \bm Y}; \epsilon \left\|\bm w\right\|_2^2\right),
\end{equation}
This property, established in Theorem 5 of \cite{mao2026model}, allows us to reduce any high-dimensional robust optimization problem involving a linear combination of risk factors to the one-dimensional framework developed in Theorems \ref{thm:wo_quantile} and \ref{thm:be_quantile}.

With this projection property \eqref{eq-2026-314}, let $F$ be the CDF of $\widetilde{X}:=\bm w^\top \bm X$, the following proposition holds.
\begin{proposition}
For any fixed $\bm w \in \R^n$, the worst-case and best-case values of the mean-variance objective under the multivariate Wasserstein ambiguity set $\mathcal B(F_{\bm Y};\epsilon)$ are given respectively by
\begin{equation*}
\sup_{F_{\bm X} \in {\cal M}^2(F_{\bm Y}; \epsilon)} \left\{\E_{F_{\bm X}}[\bm w^\top \bm X] + \gamma \V_{F_{\bm X}} [\bm w^\top \bm X] \right\} = \sup_{F\in {\cal M}^2(F_{\bm w^\top \bm Y};\epsilon \left\|\bm w \right\|_2^2)} \left\{\E_F[\widetilde{X}]+\gamma\V_F[\widetilde{X}]]\right\},
\end{equation*}
and
\begin{equation*}
\inf_{F_{\bm X} \in {\cal M}^2(F_{\bm Y}; \epsilon)} \left\{\E_{F_{\bm X}}[\bm w^\top \bm X] + \gamma \V_{F_{\bm X}} [\bm w^\top \bm X] \right\} = \inf_{F\in {\cal M}^2(F_{\bm w^\top \bm Y};\epsilon \left\|\bm w \right\|_2^2)} \left\{\E_F[\widetilde{X}]+\gamma\V_F[\widetilde{X}]]\right\}.
\end{equation*}
\end{proposition}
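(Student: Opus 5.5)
The plan is to show that the objective depends on $F_{\bm X}$ only through the law of $\widetilde X=\bm w^\top\bm X$. We then show that the set of laws of $\widetilde X$ obtained as $F_{\bm X}$ ranges over ${\cal M}^2(F_{\bm Y};\epsilon)$ coincides with ${\cal M}^2(F_{\bm w^\top\bm Y};\epsilon\|\bm w\|_2^2)$. Once this identity of image sets is established, both displayed equalities follow at once. Each side is a supremum (resp. infimum) of the same functional $F\mapsto \E_F[\widetilde X]+\gamma\V_F[\widetilde X]$ over the same set of one-dimensional laws. The right-hand sides are then evaluated by Theorems \ref{thm:wo_quantile} and \ref{thm:be_quantile}, with $G=F_{\bm w^\top\bm Y}$, $\mu_G=\bm w^\top\bm\mu$, $\sigma_G^2=\bm w^\top\Sigma\bm w>0$ (by positive definiteness, for $\bm w\neq 0$) and radius $\epsilon\|\bm w\|_2^2$. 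The case $\bm w=0$ is trivial, since both sides equal $0$.

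For the inclusion ``$\subseteq$'', take $F_{\bm X}$ in the ball and an optimal coupling $(\bm X,\bm Y)$ with $\E\|\bm X-\bm Y\|_2^2\le\epsilon$; the infimum is attained. Cauchy--Schwarz gives $|\bm w^\top\bm X-\bm w^\top\bm Y|^2\le\|\bm w\|_2^2\|\bm X-\bm Y\|_2^2$ pointwise. Taking expectations shows $W_2(F_{\widetilde X},F_{\bm w^\top\bm Y})^2\le \E|\bm w^\top(\bm X-\bm Y)|^2\le \epsilon\|\bm w\|_2^2$, because this coupling is admissible in \eqref{Wass}.

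For ``$\supseteq$'', take a one-dimensional $F$ with $W_2(F,F_{\bm w^\top\bm Y})\le\sqrt\epsilon\|\bm w\|_2$. Realize $Z\sim F$ jointly with $\bm Y$ so that $(Z,\bm w^\top\bm Y)$ is comonotone, i.e. $Z=F^{-1}(U)$ and $\bm w^\top\bm Y=F_{\bm w^\top\bm Y}^{-1}(U)$ for a common uniform $U$. This uses the atomless space, and gluing via a conditional distribution of $\bm Y$ given $\bm w^\top\bm Y$. Then define
\[
\bm X:=\bm Y+\frac{\bm w}{\|\bm w\|_2^2}\bigl(Z-\bm w^\top\bm Y\bigr).
\]
This gives $\bm w^\top\bm X=Z$ and $\E\|\bm X-\bm Y\|_2^2=\E|Z-\bm w^\top\bm Y|^2/\|\bm w\|_2^2\le\epsilon$, using the quantile representation \eqref{Wass:quantile}. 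Moreover $\bm X\in(\mathcal L^2)^n$, so $F_{\bm X}$ lies in the multivariate ball and its projection has law $F$. This is the content of \eqref{eq-2026-314}, i.e. Theorem 5 of \cite{mao2026model}, which may simply be invoked, but the construction makes it self-contained.

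The main obstacle is the second inclusion. Its delicate point is the measure-theoretic construction of the comonotone coupling of $Z$ with the whole vector $\bm Y$, not just with $\bm w^\top\bm Y$. I would handle it by first drawing $(U,\bm Y)$ such that $\bm w^\top\bm Y=F_{\bm w^\top\bm Y}^{-1}(U)$ a.s. When $\bm w^\top\bm Y$ has atoms, this uses an independent uniform randomization within atoms, which is available on an atomless space. Then I set $Z=F^{-1}(U)$. After that, the equality of the optimization problems is purely formal.
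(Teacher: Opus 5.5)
Your proposal is correct and follows the same route as the paper. The paper treats the proposition as an immediate consequence of the projection identity \eqref{eq-2026-314}, namely that the laws of $\bm w^\top\bm X$ over the multivariate ball are exactly ${\cal M}^2(F_{\bm w^\top\bm Y};\epsilon\|\bm w\|_2^2)$, which it cites from Theorem 5 of \cite{mao2026model}; you additionally prove that identity yourself, via Cauchy--Schwarz for one inclusion and the comonotone lift $\bm X=\bm Y+\bm w(Z-\bm w^\top\bm Y)/\|\bm w\|_2^2$ for the other, which is valid and makes the argument self-contained.
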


\end{document}